\documentclass[12pt]{article}

\usepackage{amsfonts}
\usepackage{amsmath}
\usepackage{amssymb}
\usepackage{amsthm}
\usepackage{mathrsfs}

\usepackage{comment}
\usepackage{graphicx}

\newcommand{\paren}[1]{\left(#1\right)}

\newcommand{\powerseries}[2]{#1\!\left<\!\left<#2\right>\!\right>}

\newcommand{\PSA}{\powerseries{S}{A}}
\newcommand{\PSAfinite}{\powerseries{S}{A^*}}
\newcommand{\PSAfinitenoempty}{\powerseries{S}{A^*/\{\emptystring\}}}
\newcommand{\PSAinfinite}{\powerseries{S^\nats}{A^\omega}}
\newcommand{\PSAbiinfinite}{\powerseries{S^{\ints\times\nats}}{A^\zeta}}

\newcommand{\PSAdiverging}{\powerseries{S^\nats}{A^\omega}}

\newcommand{\nats}{\mathbb{N}}
\newcommand{\ints}{\mathbb{Z}} 
\newcommand{\cmps}{\mathbb{C}}
\newcommand{\reals}{\mathbb{R}}
\newcommand{\bools}{\mathbb{B}}

\newcommand{\Rat}{\mathfrak{Rat}}
\newcommand{\Rec}{\mathfrak{Rec}}
\newcommand{\Aut}{\mathfrak{Aut}}

\newcommand{\ratnoempty}{\Rat^*_{/\epsilon}(S,A)}

\newcommand{\aut}[1]{\mathcal{#1}}
\newcommand{\ofaut}[2]{#1^{\aut{#2}}}
\newcommand{\ofautwithsup}[3]{#1^{\aut{#2},#3}}
\newcommand{\defsameofaut}[3]{\ofaut{#1}{#2}:=\ofaut{#1}{#3}}
\newcommand{\auttuple}[1]{(\ofaut{Q}{#1},\ofaut{I}{#1},\ofaut{F}{#1},\ofaut{M}{#1})}
\newcommand{\Mofautwithsup}[2]{\ofautwithsup{M}{#1}{#2}}

\newcommand{\otherwise}{\text{otherwise}}

\newcommand{\emptystring}{\epsilon}

\newcommand{\bicoefs}{S^{\ints\times\nats}}

\newcommand{\allprefixes}[2]{\rho^\omega(#1,#2)}
\newcommand{\allbiprefixes}[2]{\rho^\zeta(#1,#2)}

\newcommand{\slice}[3]{#1_{[#2:#3]}}
\newcommand{\charat}[2]{#1_{[#2]}}

\newcommand{\norm}[1]{|#1|}

\newcommand{\behav}[1]{\left\|#1\right\|}

\newtheorem{theorem}{Theorem}
\newtheorem{proposition}{Proposition}
\newtheorem{lemma}{Lemma}
\newtheorem{corollary}{Corollary}

\usepackage{setspace}
\usepackage[affil-it]{authblk}

\begin{document}

\author{Gregory Crosswhite}
\title{Embracing divergence: a formalism for when your semiring is simply not complete, with applications in quantum simulation}
\affil{Department of Physics \\ University of Queensland \\ Brisbane, Australia}
\date{July 30, 2012}

\maketitle

\begin{abstract}
There is a fundamental difficulty in generalizing weighted automata to the case of infinite words: in general the infinite sum-of-products from which the weight of a given word is derived will diverge.  Many solutions to this problem have been proposed, including restricting the type of weights used (see Refs. \cite{Esik2005I}, \cite{Esik2005II}, and \cite{Esik2009}) and employing a different valuation function that forces convergence (see Refs. \cite{Chatterjee2010}, \cite{Droste2006}, \cite{Droste2007a}, and \cite{Arfi2009}).  In this paper we describe an alternative approach that, rather than seeking to avoid the inevitable divergences, instead \emph{embraces} them as a source of useful information.  Specifically, rather than taking coefficients from an arbitrary semiring $S$ we instead take them from $S^\mathbb{N}$.  Doing this is useful because gives us information about \emph{how} the weight of an infinite word does or does not diverge, and if it does diverge what form the divergence takes --- e.g., polynomial, exponential, etc.  This approach has proved to be incredibly useful in the field of \emph{quantum simulation} (see Refs. \cite{Orus2008}, \cite{McCulloch2008} and \cite{Crosswhite2008}) because when studying infinite systems, information about how quantities of interest (such as energy or magnetization) diverge is exactly what we want.

In this paper we introduce a new kind of automaton which we call a \emph{diverging automaton} that maps infinite words to sequences of weights from a semiring and which employs a B\"uchi-like acceptance condition.  We then develop a theory for \emph{diverging power series} and prove a Kleene Theorem connecting \emph{rational} diverging power series to diverging automata.  Afterward we repeat this process by introducing \emph{bidiverging automata} which map \emph{biinfinite} words to elements in $S^{\ints\times\nats}$, developing a theory for \emph{bidiverging power series}, and proving another Kleene Theorem.  We conclude by describing how bidiverging automata are applied to simulate biinfinite quantum systems.
\end{abstract}

\tableofcontents

\section{Introduction}

There is a fundamental difficulty in generalizing weighted automata to the case of infinite words: In general the infinite sum-of-products from which the weight of a given word is derived will diverge.  Of course, one solution is to restrict oneself to the class of \emph{complete} semirings for which an infinite sum-of-products is always guaranteed to converge, thereby excluding the possibility of divergences altogether (see Refs. \cite{Esik2005I}, \cite{Esik2005II}, and \cite{Esik2009} for a description of the resulting theory developed in a very general setting);  this is a perfectly satisfactory approach in settings where one's semirings meet the necessary requirements, but it nonetheless places very restrictive conditions on the semirings one can use.  Another solution is to modify the definition of the weight of a word by allowing it to be an arbitrary function of the infinite sequences of weights along the paths rather than a straightforward sum of the products, and then to pick this function so that it is well-defined for all automata and words (for examples, see Refs. \cite{Chatterjee2010} and \cite{Droste2007a}).  A specific example of the latter approach is the use of a deflation parameter that causes the weights of transitions to exponentially decrease in magnitude over the course of a run through the automata, thus \emph{forcing} divergent sums to converge (see Refs. \cite{Droste2006}, \cite{Chatterjee2010}, and \cite{Droste2007b} for development of the theory and \cite{Arfi2009} for an application of it to game theory).  (Note that the use of a deflation parameter can be viewed as a special case of the first approach; see \S 4 of Ref. \cite{Esik2005II} for the details.)

In this paper we shall describe an alternative approach to handling the infinite sum-of-products resulting from the marriage of weights and infinite words: rather than seeking to avoid the inevitable divergences, we \emph{embrace} them as a source of useful information.  Specifically, we start with an arbitrary semiring $S$, but rather than assigning weights from $S$ to the infinite words, we assign weights from $S^\mathbb{N}$;  doing this is useful because each sequence of values from $S$ gives us information about exactly how the weight of an infinite word does or does not diverge, and if it does diverge what form the divergence takes --- e.g., polynomial, exponential, etc.

This approach has proved to be incredibly useful in the field of \emph{quantum simulation} (see Refs. \cite{Orus2008}, \cite{McCulloch2008} and \cite{Crosswhite2008}, replacing the term ``infinite matrix product state'' with ``bidiverging automaton'' and reducing the level of mathematical formalism expected by the reader), by which is meant the use of \emph{classical} computers to simulate \emph{quantum} systems (as opposed to the hypothetical use of quantum computers for simulation).  The reason for this is as follows:  It is often very useful to study the theoretical properties of infinitely large physical systems because this gives us valuable information about the bulk properties of the system --- that is, the properties of the system when effects due to the presence of the physical boundaries of the system are negligible.  In the infinite setting, physical quantities such as the total energy or the total magnetization of the system that increase as the number of particles increase will all be infinitely large, so merely learning that these quantities are divergent is not useful;  instead, what we really want to know is what the functions of the quantities are with respect to the size of the system.  It is for this reason that working with weights that are \emph{sequences} of values from a semiring supplies exactly what is needed to extract the desired information from a simulation.

The remainder of this paper shall be structured as follows.  In \S\ref{sec:diverging-automata} we shall formally introduce what we shall call a \emph{diverging automaton}, which is a kind of automaton that maps infinite words to infinite sequences of weights from a semiring.  In \S\ref{sec:diverging-power-series} we shall define a kind of power series over infinite words that we shall call a \emph{diverging power series}, as well as rational operators on this series, and then we shall prove a Kleene Theorem connecting rational diverging power series and diverging automata.  In \S\ref{sec:bidiverging-automata} we shall likewise introduce a kind of automaton that maps biinfinite words to biinfinite sequences of infinite sequences of weights, which we shall call a \emph{bidiverging automaton}, and in \S\ref{sec:bidiverging-power-series} we shall likewise define a kind of power series over biinfinite words, which we shall call \emph{bidiverging power series}, as well as a set of rational operations on these power series, and then we shall prove a Kleene Theorem connecting rational bidiverging power series and bidiverging automata.  In \S\ref{sec:quantum-simulation} we shall discuss how bidiverging automata are applied in practice for the purpose of simulating quantum systems.  Finally, in \S\ref{sec:conclusions} we shall present conclusions.

\section{Diverging Automata}
\label{sec:diverging-automata}

\subsection{Preliminary Formalism}

In this section we shall define a form of automaton that has the behavior of mapping infinite words to infinite sequences of values from a semiring.  First, however, we need to add some structure to infinite sequences in order for us to be able to use them as coefficients on infinite words.

Let $S$ be an arbitrary semiring.  It will often prove convenient to take advantage of the fact that $S^\nats \cong \nats\to S$ in order to be able to use function notation for defining sequences, so if $v\in S^\nats$ then $v(i)$ refers to the (zero-based) $i^{\text{th}}$ element of $v$, and $i\mapsto v(i)$ defines a sequence that is equal to $v$.  We now define three operations on $S^\nats$:  addition, left-multiplication by members of $S$, and right-multiplication by members of $S$.  For all sequences $a,b\in S^\nats$, we define $a+b := (i\mapsto a(i)+b(i))$, i.e. addition acts pointwise on the elements of the sequences.  Given $s\in S$ and $v\in S^\nats$, we define left multiplication by letting $s\cdot v = sv := (i\mapsto sv(i))$, i.e. left-multiplication by elements of $S$ multiplies all elements in the sequence by that factor, and likewise we define right-multiplication by letting $v\cdot s = vs := (i\mapsto v(i)s)$.  Finally, we let the zero (additive identity) of $S^\nats$ be the zero sequence, $i\mapsto 0$.  It is not difficult to see that all of the semimodule laws are obeyed by these operations and the choice of zero, and so we conclude that $S^\nats$ is both a left and a right $S$-semimodule (that is, an $S$-\emph{semibimodule}).

Let $A$ be some alphabet.  Words from this alphabet include elements of $A^*$, which we call \emph{finite} words; elements from $A^\omega\cong A^\nats$, which we call \emph{infinite} words; and elements from $A^\zeta\cong A^\ints$, which we call \emph{biinfinite} words.  In all cases the character of a word at zero-based position $i$ is denoted by $\charat{w}{i}$, so that for example $\charat{w}{2}$ refers to the \emph{third} character of a finite or infinite word, and $\charat{w}{-3}$ refers to the character at position $-3$ of a biinfinite word.

The notation $\slice{w}{s}{e}$ denotes the substring of $w$ starting at the $s^\text{th}$ character and going up to but not including the $e^{\text{th}}$ character.  If $e$ is finite then $\slice{w}{s}{e}=\charat{w}{s}\charat{w}{s+1}\dots \charat{w}{e-1}$ and if $e$ is $\infty$ then $\slice{w}{s}{e}=\charat{w}{s}\charat{w}{s+1}\dots$;  in both cases $\charat{\paren{\slice{w}{s}{e}}}{i} := \charat{w}{s+i}$.  The length of a finite word $w$ is given by $|w|$; note that $|\slice{w}{s}{e}|=e-s$.

Words can be concatenated to form other words, so that for example given words $a\in A^*$ and $b\in A^\omega$ the word $ab$ is the sequence $a$ followed by the sequence $b$, i.e.
$$(ab)(i) = \begin{cases}
a(i) & i < |a| \\
b(i-|a|) & i \ge |a|
\end{cases}.
$$

Note that in the sequel we will continue to use $S$ to denote an arbitrary semiring and $A$ an arbitrary alphabet unless stated otherwise.

\subsection{Diverging Automata Defined}

We now define the structure of the automaton as follows:  Given an alphabet $A$ and a semiring $S$, we define a diverging automaton $\aut{A}$ over $A$ and $S$ to be a tuple, $\auttuple{A}$, where
\begin{itemize}
\item $\ofaut{Q}{A}$ is the set of states in the automata;
\item $\ofaut{I}{A}\in S^Q$ is the initial distribution of states;
\item $\ofaut{F}{A}\in S^Q$ is the final distribution of states; and
\item $\ofaut{M}{A}\in S^{A\times Q\times Q}\cong (\PSA)^{Q\times Q}$ is the tensor providing the weighted transitions between states for each input symbol --- or, equivalently, a matrix of formal power series over the input alphabet $A$ with coefficients in $S$.
\end{itemize}

It will be convenient to establish some conventions at this point.  First, when defining an automaton $\aut{A}$ we shall do so by defining $Q^{\aut{A}}$, $I^{\aut{A}}$, etc. with the understanding that $\aut{A}:=\auttuple{A}$ (where the superscript on the elements of the tuple denote the automaton with which they are associated) so we do not need to relate the elements of the tuple to $\aut{A}$ explicitly.  Second, when we refer to the states in $\ofaut{I}{A}$ or the states in $\ofaut{F}{A}$ we shall be implicitly referring to the states \emph{with non-zero weight}, which we shall call respectively the \emph{initial states} and the \emph{final states}, or the \emph{initial state} and the \emph{final state} if there is only one such state.  Third, the notation $\ofaut{Q}{A}\cup\{q\}$ will always mean adding a fresh state to $\ofaut{Q}{A}$ unless otherwise stated (that is, if there was already a state named $q$ in $\ofaut{Q}{A}$ then it will be implicitly be relabeled to $q'$ and the rest of the automaton adjusted accordingly).   Finally, if $R\subset \ofaut{Q}{A}$, then $\ofaut{I}{A}_R$ will refer to the block of $\ofaut{I}{A}$ for the states in $R$, and likewise for $\ofaut{F}{A}$ and (both subscripts of) $\ofaut{M}{A}$.

Note that at this point there is nothing that distinguishes this automaton from a weighted automaton over finite words which we will henceforth call a \emph{converging}\footnote{The word `converging' was chosen both to provide a nice contrast to `diverging' and also because the word `finite' has essentially already been taken in the context of automata to refer to the number of states --- as in `finite state automata'.} automaton.  The isomorphism between diverging and converging automata shall prove useful throughout this paper, and so we shall denote by $\tilde{\aut{A}}$ the \emph{converging counterpart} of $\aut{A}$, by which we mean the automaton that has the same tuple as $\aut{A}$.

$\ofaut{M}{A}$ induces the structure of a directed graph where each (directed) edge is labeled by a value in $\PSA$, i.e. by a sum over allowed input symbols with a coefficient from $S$ on each;  by convention, when we refer to the directed edges in the graph without a qualifier we will only be referring to the edges with a non-zero label, since these are usually the only ones we care about.

Because of the graph structure of an automaton, we can (and shall) meaningfully talk about the path or paths taken by a word, which we define as follows:  a \emph{path} for a word $w$ is a sequence $\{q_i\}_{0\le i \le |w|}$ (which may be infinite) such that for every $0 \le i < |w|$ we have that $\Mofautwithsup{A}{\charat{w}{i}}_{q_iq_{i+1}}\ne 0$.  A finite path is said to be \emph{successful} if it starts on an initial state and ends on a final state.

We now need to define the \emph{behavior} of a diverging automaton $\aut{A}$, denoted by $\behav{\aut{A}}\in \PSAdiverging$, which describes the formal power series recognized by $\aut{A}$ as a function of the elements in its tuple. 
For convenience, we observe that $\PSAdiverging\cong(A^\omega\times\mathbb{N}\to S)$, and so we shall denote by $\aut{A}(w,\cdot)$ the value of the (infinite sequence) coefficient on the infinite word $w$ and by $\aut{A}(w,n)$ the value at the (zero-based) $n^{\text{th}}$ position of $\aut{A}(w,\cdot)$.  Equivalently we have that,
$$\behav{\aut{A}} = \sum_{w\in A^\omega} (n\mapsto \aut{A}(w,n)) \cdot w.$$

To define the behavior, we recall that we stated previously that an important motivation is to model the rate at which the infinite sum-of-products in an automaton diverge for infinite words.  Given this, an obvious definition for the behavior of $\aut{A}$ is to let $\aut{A}(w,n)$ be equal to the weight that $\slice{w}{0}{n}$ has in $\tilde{\aut{A}}$ (which recall is the converging automaton with the same tuple as $\aut{A}$), namely
$$\aut{A}(w,n) := \ofaut{I}{A} \cdot \paren{\,\,\prod_{i=0}^{n-1} \Mofautwithsup{A}{\charat{w}{i}}} \cdot \ofaut{F}{A}.$$

(Note:  The definition above will eventually be modified for reasons that will be shown later, but we start with it in this form for pedagogical reasons;  the final form will be presented in \S\ref{subsub:final-form}.)

\subsection{Examples}

To see the consequences of this definition,  we shall now consider some examples.  For the automata defined in these examples, we shall use the standard form of diagrammatic notation:
\begin{itemize}
\item states are denoted by circles;
\item the set of non-zero (weighted) transitions are denoted by arrows between states which have labels of the form `$x:y$' where $x\in A$ is the input symbol for the transition and $y\in S$ is the weight of the transition;
\item the set of initial states are denoted by arrows with an ending state but no starting state and labeled with the initial weight; and
\item the set of final states are denoted by arrows with an starting state but no ending state and labeled with the final weight.
\end{itemize}
For the sake of being explicit, values will also be given for each of the quantities in the 4-tuple for each automaton;  the transition matrix $M$ will be given as a matrix of pairs of input symbols and semiring coefficients.

We first consider the automaton $\aut{A}_1$ in Figure \ref{fig:automata-example-1}.  It is not difficult to see that for all words in the language $\alpha^*\beta\alpha^\omega$ we have that $$\aut{A}_1(\alpha^m\beta\alpha^\omega,n) =
\begin{cases}
\text{F} & n < m + 1\\
\text{T} & n \ge m+1,
\end{cases}
$$
because for a prefix of $m$ $\alpha$ symbols it takes at least $m+1$ steps to make it to the state $q_2$, which is the only final state, and once at $q_2$ the automaton loops forever (for this word).

\begin{figure}
	\centering
	\includegraphics[width=4.5in]{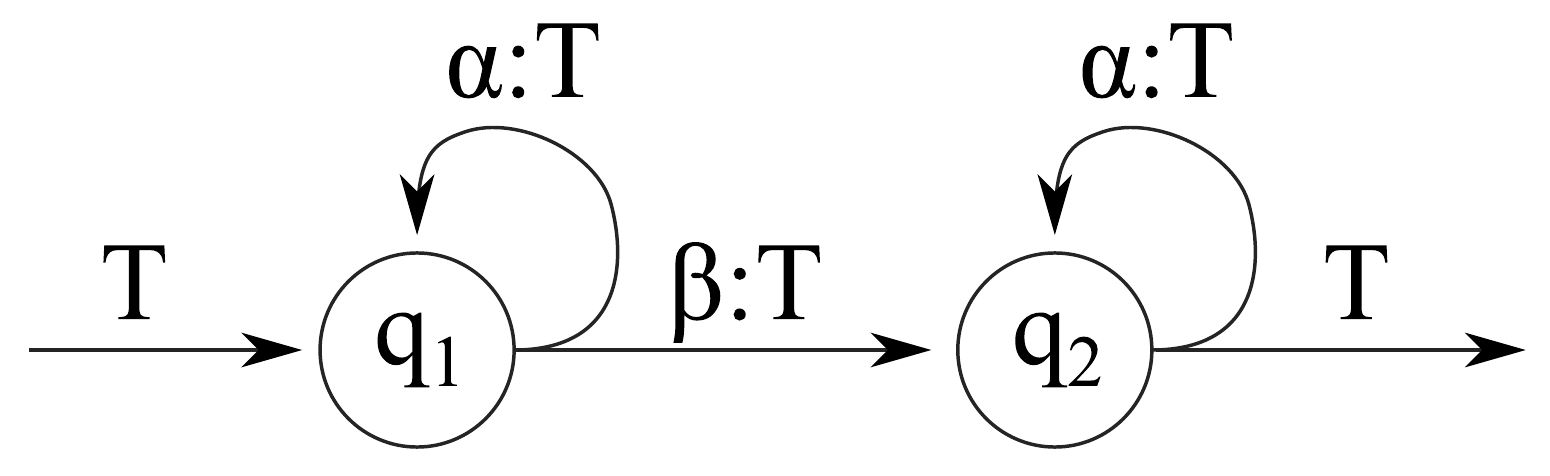}
	$A := \{\alpha,\beta\}$, $S := \mathbb{B}$, 
	$\aut{A}_1 :=
		\paren{
			\{q_1,q_2\},
			\begin{pmatrix}
			    \text{T} & \text{F}
			\end{pmatrix},
			\begin{pmatrix}
				\text{F} \\
				\text{T}
			\end{pmatrix},
			\begin{pmatrix}
				(\alpha,\text{T}) & (\beta,\text{T}) \\
				0 & (\alpha,\text{T})
			\end{pmatrix}
		}
	$
	\caption{Automaton Example \#1}
	\label{fig:automata-example-1}
\end{figure}
\begin{figure}
	\centering
	\includegraphics[width=4.5in]{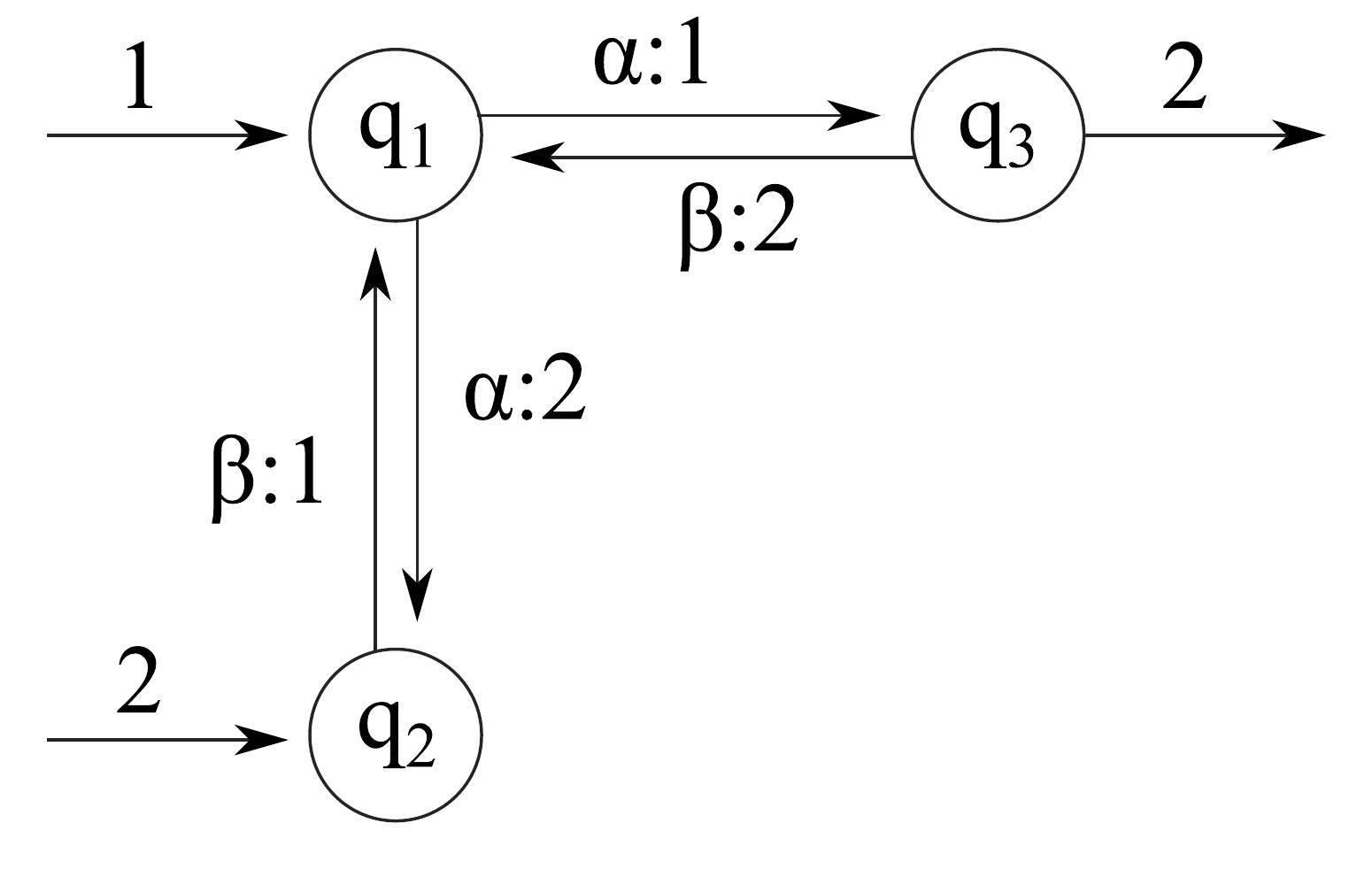}
	$A := \{\alpha,\beta\}$, $S := \mathbb{N}$, 
	$$\aut{A}_2 :=
		\paren{
			\{q_1,q_2,q_3\},
			\begin{pmatrix}
			    1 & 2 & 0
			\end{pmatrix},
			\begin{pmatrix}
				0 \\
				0 \\
				2
			\end{pmatrix},
			\begin{pmatrix}
				0 & (\alpha,2) & (\alpha,1) \\
				(\beta,1) & 0 & 0\\
				(\beta,2) & 0 & 0 \\
			\end{pmatrix}
		}
	$$
	\caption{Automaton Example \#2}
	\label{fig:automata-example-2}
\end{figure}
\begin{figure}
	\centering
	\includegraphics[width=4.5in]{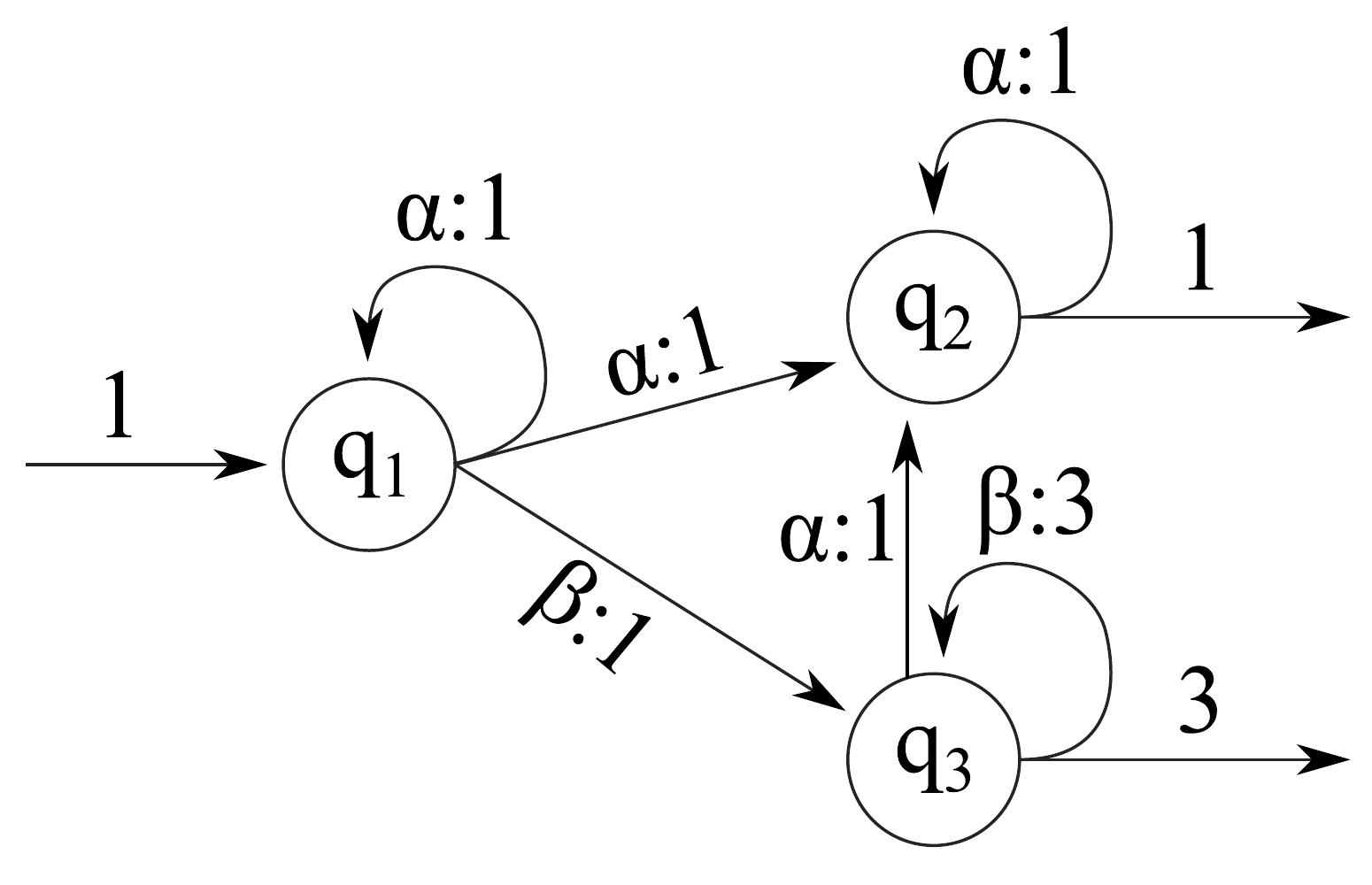}
	$A := \{\alpha,\beta\}$, $S := \mathbb{N}$, 
	$$\aut{A}_3 :=
		\paren{
			\{q_1,q_2,q_3\},
			\begin{pmatrix}
			    1 & 0 & 0
			\end{pmatrix},
			\begin{pmatrix}
				0 \\
				1 \\
				3 
			\end{pmatrix},
			\begin{pmatrix}
				(\alpha,1) & (\alpha,1) & (\beta,1) \\
				0 & (\alpha,1) & 0 \\
				0 & (\alpha,1) & (\beta,3) \\
			\end{pmatrix}
		}
	$$
	\caption{Automaton Example \#3}
	\label{fig:automata-example-3}
\end{figure}

We next consider the automaton $\aut{A}_2$ in Figure \ref{fig:automata-example-2}, where we observe that
$$\aut{A}_2\left((\alpha\beta)^\omega,n\right):=
\begin{cases}
0 & n \,\,\text{even}, \\
2^n & n \,\,\text{odd}.
\end{cases}
$$
To see why, first note that all successful paths must start at $q_1$ and end at $q_3$ because the former is the only state with outgoing transitions for $\alpha$ and the latter is the only final state.  Based on this, we immediately conclude that the value for $(\alpha\beta)^\omega$ must be zero for even $n$ because only an odd length path can make it from $q_1$ to $q_3$.  Next we note that the only successful paths with odd length $n$ are those which consist of $(n-1)/2$ round trips between $q_1$ and either $q_2$ or $q_3$, followed at the end by a step from $q_1$ to $q_3$ of weight 1.  There are $2^{(n-1)/2}$ independent paths of this form (as each round-trip in each sequence independently chooses $q_2$ or $q_3$ as a destination), and furthermore each path has total weight $2^{(n-1)/2}$ (a factor of 2 for each round trip), so because the final weight is 2, the value for $(\alpha\beta)^\omega$ for odd $n$ is therefore $2^{(n-1)/2}\cdot 2^{(n-1)/2}\cdot 2 =2^n$.  By applying similar reasoning, it is not hard to see that
$$\aut{A}_2\left((\beta\alpha)^\omega,n\right):=
\begin{cases}
0 & n = 0 \\
0 & n \,\,\text{odd}, \\
2^n & n \,\,\text{even}.
\end{cases}
$$
(If this result is not obvious, observe that $(\beta\alpha)^\omega = \beta(\alpha\beta)^\omega$ and that $q_2$ has initial weight 2.)

Finally, we consider the automaton $\aut{A}_3$ in Figure \ref{fig:automata-example-3}, for which we observe that
$\aut{A}_3(\alpha^\omega,n)$ $ = n$ due to the fact that for every $n$ there are are $n$ paths of length $n$ starting at $q_1$ and ending at $q_2$, and each path has weight 1.  It is also straightforward to see that
$$\aut{A}_3(\beta^\omega,n) =
\begin{cases}
0 & n = 0 \\
3^n & n > 0 \\
\end{cases}
$$ and $$\aut{A}_3(\beta^m\alpha^\omega,n) =
\begin{cases}
0 & n = 0 \\
3^n & 0 < n < m \\
3^m & n \ge m
\end{cases}
$$

\subsection{A Problem}

In the above examples we provided some illustrations for how the definition of the behavior works out in practice, but there is one very important property of this definition that has not yet been touched upon:   since the computation of $\aut{A}(w,n)$ only depends on the first $n$ characters of $w$, \emph{all} words with the same prefix of length $n$ have the same value at $n$ in their sequences --- that is, given words $w$ and $v$ such that $\slice{w}{0}{n}=\slice{v}{0}{n}$, it immediately follows that $\aut{A}(w,n)=\aut{A}(v,n)$.  This property is unfortunate  because it means that there are many strings that intuitively should be \emph{entirely rejected} by a given automaton --- that is, mapped to the zero element of $S^\nats$, which is the sequence with all zero entries --- that instead are \emph{accepted} by the automaton --- that is, mapped to a sequence with non-zero entries.

To see examples of this, first consider again the automaton in Figure \ref{fig:automata-example-1}.  This automaton is designed to filter out strings with more than a single $\beta$, and yet $\aut{A}_1(\beta\beta\alpha^\omega,n) = \delta_{n1} \ne 0$, where $\delta_{ij}$ is the Kronecker delta,
$$\delta_{ij} = \begin{cases}
1 & i = j, \\
0 & \otherwise.
\end{cases}$$
Likewise, if we consider again the automaton in Figure \ref{fig:automata-example-2} we see that although strings with more than a single $\alpha$ ought be rejected entirely we actually have that $\aut{A}_2(\alpha^\omega,n) = 2\delta_{n1} \ne 0$.
Finally if we consider again the automaton in Figure \ref{fig:automata-example-3} we see that although no $\beta$ should follow an $\alpha$, we actually have that $\aut{A}_3(\alpha\beta^\omega,n) = \delta_{n1}$.

Put another way, given an arbitrary language $L$ we would (naively) expect that if we took a B\"uchi automaton that recognized $L$ and converted it to a diverging automaton by labeling the existing transitions with weight 1 (in the semiring $\bools$) and the non-existing transitions with weight 0, then we would end up with an automaton that only accepted words in $L$, but from the preceding discussion we know that this will not be in true in general.

\subsection{A Fix}
\label{subsub:final-form}

The problems described above ultimately come from the fact that our definition for the behavior has the undesirable property that later parts of the word cannot affect early parts of the sequence, so by the time a word has hit a dead end that would have caused it to be rejected were it finite, it has already generated a non-zero subsequence and hence cannot be entirely rejected under the current definition of the behavior.  So, in a matter of speaking, if we want to make the behavior of our automata more sensible, we need a way for this future information to travel backwards in time to the beginning of the word.

Fortunately, we can do exactly this in a way that borrows a page from the B\"uchi playbook (see Refs. \cite{Buchi1960} and \cite{Buchi1960a}).  In a B\"uchi automaton at least one of the final states must be \emph{visited infinitely often} for an infinite word to be accepted.  This condition provides exactly what we need, because a word will hit a dead end if and only if it fails to visit a final state infinitely often, though because our automata have weights we need to be a bit more careful about how we define `visit infinitely often' because it is possible for there to be multiple paths that are individually non-zero but which cancel when they meet at particular states.  Furthermore, it makes sense to use a slightly weaker condition because all we really need is to ensure that there is no $n_0$ such that $\tilde{\aut{A}}(w,n)=0$ for all $n\ge n_0$.  Finally, it will turn out to be important that we also define our condition in terms of pairs of initial and final states rather than just final states so that we can express arbitrary automata as a union of automata with only a single initial and final state.  With these considerations in mind, given an initial state $i$ and a final state $f$ we shall say that a word $w$ \emph{activates} $(i,f)$ if for every $n_0\in\nats$ there exists $n\ge n_0$ such that the sum of all paths for $\slice{w}{0}{n}$ starting on $i$ and ending on $f$ is non-zero.  We shall call this rule the \emph{activation condition}, as it is not quite the same as an acceptance condition because it specifies not only whether a word is accepted but which initial and final states will be used when calculating the value of $\aut{A}(w,n)$.

Employing the above activation condition, we modify our definition of the behavior of a diverging automaton as follows:
$$\aut{A}(w,n) := \ofaut{I}{A} \cdot V^{\aut{A}}\paren{w,\,\,\,\,\prod_{i=0}^{n-1} \Mofautwithsup{A}{\charat{w}{i}}}\cdot \ofaut{F}{A},$$
where $V^{\aut{A}}(w,x)_{ij}=x_{ij}$ if $(i,j)$ has been activated by $w$ and $V^{\aut{A}}(w,x)_{ij}=0$ otherwise.  

Similar to the case of finite words, the right-hand side can equivalently be interpreted as the sum of all the successful paths taken by $\slice{w}{0}{n}$ where each path bears a weight equal to the product of transitions along the path ($\prod_{i=0}^{n-1} \Mofautwithsup{A}{\charat{w}{i}}$), the weight of the initial state of the path ($\ofaut{I}{A}$), the weight of the final state of the path ($\ofaut{F}{A}$), and the extra condition imposed by $\ofaut{V}{A}$.  This follows from the fact that the above sum can alternatively be expressed as
$$\aut{A}(w,n) := \sum_{q_0,\dots,q_n\in Q}\ofaut{I}{A}_{q_0} \Mofautwithsup{A}{\charat{w}{0}}_{q_0q_1} \Mofautwithsup{A}{\charat{w}{1}}_{q_1q_2}\cdots \Mofautwithsup{A}{\charat{w}{n-1}}_{q_{n-1}q_n} \ofaut{F}{A}_{q_n}\cdot v(w,q_0,q_n),$$
where $v(w,q_0,q_n)$ is 1 if $(q_0,q_n)$ have been activated by $w$ and 0 otherwise;  note that each non-zero term in the sum has a separate assignment of $\{q_i\}_{0\le i \le n}$ $\subseteq Q$ that corresponds to a successful path with a sequence of states equal to $\{q_i\}_{0\le i \le n}$ $\subset Q$, and vice versa.

With this new definition for the behavior, all of the problems that we listed earlier disappear because in each case no pairs of states become activated and so $V^{\aut{A}}(w,x)=0$.  However, in cases where all states become activated, converging and diverging automata behave similar to how our old definition played out, as shown in the following Lemma.

\begin{lemma}[Conditions under which diverging matches converging]
\label{lem:diverging-behaves-same-way-as-converging-counterpart}
Given a diverging automaton $\aut{A}$ and its converging counterpart $\tilde{\aut{A}}$, if for some infinite word $w$ we have that all pairs of initial and final states have been activated then for all $n\in\nats$ we have that $\aut{A}(w,n)=\tilde{\aut{A}}(\slice{w}{0}{n})$.
\end{lemma}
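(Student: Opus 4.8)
The plan is to unfold both quantities as explicit double sums over pairs of states and to verify that the activation hypothesis forces the masking tensor $\ofaut{V}{A}$ to act as the identity on precisely those entries that actually contribute. Throughout, write $x := \prod_{i=0}^{n-1} \Mofautwithsup{A}{\charat{w}{i}}$ for the transition product associated with the prefix $\slice{w}{0}{n}$, so that the fixed definition of the behavior reads $\aut{A}(w,n) = \ofaut{I}{A} \cdot \ofaut{V}{A}(w,x) \cdot \ofaut{F}{A}$, while the finite-word behavior of the converging counterpart gives $\tilde{\aut{A}}(\slice{w}{0}{n}) = \ofaut{I}{A} \cdot x \cdot \ofaut{F}{A}$ (this is exactly the original pedagogical definition, and is legitimate since $\tilde{\aut{A}}$ shares the tuple of $\aut{A}$ and the characters of $\slice{w}{0}{n}$ are $\charat{w}{0},\dots,\charat{w}{n-1}$).

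First I would expand the masked product into the double sum $\sum_{i,j\in\ofaut{Q}{A}} \ofaut{I}{A}_i \, \ofaut{V}{A}(w,x)_{ij}\, \ofaut{F}{A}_j$. The next step is to discard the terms that cannot contribute: since $\ofaut{I}{A}_i = 0$ whenever $i$ is not an initial state and $\ofaut{F}{A}_j = 0$ whenever $j$ is not a final state, the only surviving summands are those for which $i$ is an initial state and $j$ is a final state.

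Then I would invoke the hypothesis. By assumption every pair $(i,j)$ with $i$ an initial state and $j$ a final state has been \emph{activated} by $w$, so by the definition of $\ofaut{V}{A}$ we have $\ofaut{V}{A}(w,x)_{ij} = x_{ij}$ for each such surviving pair. Substituting this back, and then freely re-adding the (vanishing) non-surviving terms, the masked sum collapses to $\sum_{i,j\in\ofaut{Q}{A}}\ofaut{I}{A}_i\, x_{ij}\, \ofaut{F}{A}_j = \ofaut{I}{A}\cdot x\cdot\ofaut{F}{A} = \tilde{\aut{A}}(\slice{w}{0}{n})$, which is the desired equality; since $n\in\nats$ was arbitrary, this finishes the argument.

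I do not anticipate a genuine obstacle here, as the content is essentially bookkeeping; the single point that merits care is the direction of the masking. The tensor $\ofaut{V}{A}$ can only \emph{zero out} entries of $x$, and a priori such a deletion could change the value of the product, so the argument must confirm that every entry it deletes is one that would in any case be annihilated by a zero factor coming from $\ofaut{I}{A}$ or $\ofaut{F}{A}$. This is exactly what the support argument of the second step guarantees for the non-initial/non-final pairs, while the activation hypothesis ensures that no initial-final entry is deleted at all. Note also that no appeal to the subtleties of cancellation in the activation condition is needed, since we only require that the mask leaves the relevant entries untouched.
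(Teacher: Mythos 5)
Your proof is correct and follows essentially the same route as the paper's: both arguments observe that the activation hypothesis makes $V^{\aut{A}}(w,x)_{ij}=x_{ij}$ on all initial--final pairs, while any entry the mask zeroes out is annihilated anyway by a zero factor from $\ofaut{I}{A}$ or $\ofaut{F}{A}$, so $V^{\aut{A}}(w,x)$ can be replaced by $x$ without changing the value. Your version merely makes the paper's bookkeeping explicit by writing out the double sum over state pairs.
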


\begin{proof}
Since all initial and final states have been activated, we have that $V^{\aut{A}}(w,x)_{ij}=x_{ij}$ for every matrix $x$, initial state $i$, and final state $j$, and therefore $V^{\aut{A}}(w,x)_{ij}=0$ only when $\ofaut{I}{A}_i=0$ or $\ofaut{F}{A}_j=0$.  We thus see that we can replace $V^{\aut{A}}(w,x)$ with $x$ and obtain the same result.  The remainder of the proof follows immediately from the definitions of the behaviors of diverging and converging automata and the fact that the tuples of $\aut{A}$ and $\tilde{\aut{A}}$ are equal.
\end{proof}

\subsection{Elementary Operations}

Before leaving the subject of diverging automata, we take a moment to define a couple of elementary operations.  First, given an automaton $\aut{A}$ (converging or diverging) over some semiring $S$ and scalar values $l,r\in S$, we define $l\cdot\aut{A}\cdot r =l\aut{A} r$ such that $\defsameofaut{Q}{\mathit{l}A\mathit{r}}{A}$, $\ofaut{I}{\mathit{l}A\mathit{r}}_i := l\cdot\ofaut{I}{A}_i$, $\ofaut{F}{\mathit{l}A\mathit{r}}_i:=\ofaut{F}{A}_i\cdot r$, and $\defsameofaut{M}{\mathit{l}A\mathit{r}}{A}$ --- that is we left-multiply the initial state vector by $l$ and we right-multiply the final state vector by $r$ and we leave everything else as is.  Second, given another automaton $\aut{B}$, we have that $\aut{A}+\aut{B}$ is given by
\begin{align*}
\ofaut{Q}{A+B} &:=\ofaut{Q}{A}+\ofaut{Q}{B} \\
\ofaut{I}{A+B}_i &:=
    \begin{cases}
        \ofaut{I}{A}_i & i\in \ofaut{Q}{A} \\
        \ofaut{I}{B}_i & i\in \ofaut{Q}{B}
    \end{cases} \\
\ofaut{F}{A+B}_i &:=
    \begin{cases}
        \ofaut{F}{A}_i & i\in \ofaut{Q}{A} \\
        \ofaut{F}{B}_i & i\in \ofaut{Q}{B}
    \end{cases} \\
\ofaut{M}{A+B}_{ij} &:=
    \begin{cases}
        \ofaut{M}{A}_{ij} & i,j\in \ofaut{Q}{A} \\
        \ofaut{M}{B}_{ij} & i,j\in \ofaut{Q}{B} \\
        0 & \otherwise,
    \end{cases}
\end{align*}
that is, essentially the two automata are simply merged into a single automaton but kept separate from each other.  Finally, we define the zero (additive identity) automaton by letting $\ofaut{Q}{\mathrm{0}}:=\emptyset$, $\ofaut{I}{\mathrm{0}}_i := \ofaut{F}{\mathrm{0}}_i := 0$, and $\ofaut{M}{\mathrm{0}}_{ij}:=0$.  These operations make automata into $S$-semibimodules, and it turns out that the behavior operator $\behav{\cdot}$ is a semibimodule homomorphism, as the following Lemma demonstrates.

\begin{lemma}[Behavior is homomorphism (for converging and diverging automata)]
\label{lem:behavior-is-linear}
Given automata $\aut{A}$ and $\aut{B}$  over some semiring $S$ and scalar values $\alpha,\beta,\gamma,\delta\in S$ we have that $\behav{\alpha\aut{A}\gamma + \beta\aut{B}\delta} = \alpha\behav{\aut{A}}\gamma + \beta\behav{\aut{B}}\delta$
\end{lemma}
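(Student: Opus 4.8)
The plan is to reduce the claimed equality of power series to a scalar identity at each coefficient. Equality of two elements of the semibimodule $\PSAdiverging$ means equality of all their coefficients, and the three semibimodule operations (pointwise addition, and left/right multiplication by a scalar) act coefficientwise, each coefficient lying in $S^\nats$ where the same operations act pointwise in the index. Hence it suffices to prove, for every $w\in A^\omega$ and every $n\in\nats$, that
$$(\alpha\aut{A}\gamma+\beta\aut{B}\delta)(w,n)=\alpha\cdot\aut{A}(w,n)\cdot\gamma+\beta\cdot\aut{B}(w,n)\cdot\delta,$$
since the right-hand side is exactly the $(w,n)$ coefficient of $\alpha\behav{\aut{A}}\gamma+\beta\behav{\aut{B}}\delta$, while the left-hand side is the $(w,n)$ coefficient of $\behav{\alpha\aut{A}\gamma+\beta\aut{B}\delta}$.

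Write $\aut{C}:=\alpha\aut{A}\gamma+\beta\aut{B}\delta$. First I would read off the structure of $\aut{C}$ from the definitions of the elementary operations: its state set is the disjoint union $\ofaut{Q}{A}+\ofaut{Q}{B}$, and since scalar multiplication leaves the transition tensor untouched while $+$ introduces no transitions between the two blocks, $\Mofautwithsup{C}{a}$ is block diagonal for every $a\in A$, with blocks $\Mofautwithsup{A}{a}$ and $\Mofautwithsup{B}{a}$. Consequently $x:=\prod_{k=0}^{n-1}\Mofautwithsup{C}{\charat{w}{k}}$ is block diagonal, its two diagonal blocks being $\prod_{k=0}^{n-1}\Mofautwithsup{A}{\charat{w}{k}}$ and $\prod_{k=0}^{n-1}\Mofautwithsup{B}{\charat{w}{k}}$. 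Meanwhile the initial/final vectors of $\aut{C}$ restrict, on the $\aut{A}$-block, to $\alpha\ofaut{I}{A}$ and $\ofaut{F}{A}\gamma$, and on the $\aut{B}$-block to $\beta\ofaut{I}{B}$ and $\ofaut{F}{B}\delta$.

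The crux of the argument is the activation condition, which is where the diverging-specific subtlety lives, since $\ofaut{V}{C}$ depends on the whole infinite-horizon behavior of $\aut{C}$ rather than on any bounded prefix. Both sandwiched products to be compared receive contributions only from pairs $(i,j)$ with $\ofaut{I}{C}_i\ne 0$ and $\ofaut{F}{C}_j\ne 0$; every such $i,j$ lies in a common block and is a genuine initial, respectively final, state of the corresponding single automaton. On these pairs the relevant quantity --- the sum of all paths of $\slice{w}{0}{n}$ from $i$ to $j$ --- is the $(i,j)$ entry of the block-diagonal product $x$, which coincides with the corresponding entry of the $\aut{A}$-block (or $\aut{B}$-block) and is untouched by $\alpha,\gamma$ (resp. $\beta,\delta$). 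Hence $w$ activates $(i,j)$ in $\aut{C}$ exactly when it does in the single automaton, so $\ofaut{V}{C}(w,x)$ is block diagonal and agrees, on every contributing $\aut{A}$-block entry, with $\ofaut{V}{A}\paren{w,\prod_{k=0}^{n-1}\Mofautwithsup{A}{\charat{w}{k}}}$, and similarly for $\aut{B}$. The one point to tread carefully around is that a zero divisor might render $i$ initial in $\aut{A}$ but not in $\aut{C}$, so that the two activation statuses formally disagree; this is harmless, because any such entry carries the factor $\alpha\ofaut{I}{A}_i=0$ and is annihilated identically on both sides.

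Finally I would assemble $\aut{C}(w,n)$. Because the middle matrix $\ofaut{V}{C}(w,x)$ is block diagonal, the product $\ofaut{I}{C}\cdot\ofaut{V}{C}(w,x)\cdot\ofaut{F}{C}$ splits into its $\aut{A}$-block and $\aut{B}$-block contributions. Substituting the restricted initial/final vectors and invoking the distributive and associative laws of $S$ to pull the left scalar out on the left and the right scalar out on the right, the $\aut{A}$-block contribution becomes $\alpha\paren{\ofaut{I}{A}\cdot\ofaut{V}{A}(w,\prod_{k=0}^{n-1}\Mofautwithsup{A}{\charat{w}{k}})\cdot\ofaut{F}{A}}\gamma=\alpha\cdot\aut{A}(w,n)\cdot\gamma$, and the $\aut{B}$-block contribution likewise becomes $\beta\cdot\aut{B}(w,n)\cdot\delta$. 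Summing gives the scalar identity for every $w$ and $n$, which by the first paragraph proves the lemma. I expect the activation step of the third paragraph to be the main obstacle, both because it is the only place the nonlocal structure of $\ofaut{V}{}$ enters and because of the zero-divisor bookkeeping just described; everything else is routine block-diagonal algebra together with the semiring axioms.
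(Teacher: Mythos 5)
Your proof is correct; the paper itself offers no argument here (the proof is explicitly left as an exercise), and your coefficientwise reduction plus block-diagonal bookkeeping is precisely the routine unfolding of definitions the paper has in mind, with the two genuine subtleties --- that activation in the disjoint union restricts correctly to each block, and that zero divisors may change which states count as initial or final --- both handled properly. One sentence is stated loosely: a pair $(i,j)$ with $\ofaut{I}{C}_i\ne 0$ and $\ofaut{F}{C}_j\ne 0$ need not lie in a common block (an initial state of $\aut{A}$ together with a final state of $\aut{B}$ gives a cross-block pair), but this is harmless, since no path crosses blocks, so such pairs are never activated and the corresponding entries of $x$ vanish, whence $V^{\aut{C}}(w,x)_{ij}=0$ and your conclusion that $V^{\aut{C}}(w,x)$ is block diagonal stands.
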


\begin{proof}
Follows straightforwardly from the definitions, so the proof has been left as an exercise for the reader.
\end{proof}

\section{Diverging Power Series}
\label{sec:diverging-power-series}

\subsection{Diverging Power Series Defined}

Having introduced a new form of automaton in the previous section, we shall now introduce a corresponding new kind of power series, and then prove a Kleene Theorem to formally connect the two constructions.  We start with some definitions.  Recall that $S$ is understood to be an arbitrary semiring and $A$ an arbitrary alphabet.  We then define the \emph{converging} power series to be the set of power series over $A^*$ with coefficients in $S$, which is denoted by $\PSAfinite$, and the \emph{diverging} power series to be the set of power series over $A^\omega$ with coefficients in $S^\nats$, which is denoted by $\PSAinfinite$.

For convenience we shall use function notation for power series, as we have for automata, so if $x\in\PSAfinite$ is a converging power series then $x(w)$ is the coefficient on the word $w$, and if $y\in\PSAinfinite$ is a diverging power series then $y(w,\cdot)$ is the coefficient on the word $w$, and $y(w,n)$ is the $n^{\text{th}}$ coefficient of $y(w,\cdot)$.

We will often be taking sums over substrings of words, so given a word $w\in A^*$, we will use the notation,
$$\sum_{s_1s_2\dots s_N=w} f(s_1,s_2,\dots,s_N,w)$$
to mean the sum of the value of $f$ over all strings $s_1,\dots,s_N$ such that the concatenation of $s_1$ through $s_N$ is equal to $w$.

Having established some basic notation, we move on to endowing $\PSAfinite$ with the standard $\,^*$-semiring structure.  Specifically, given the converging power series $x,y\in \PSAfinite$ we define addition by $x+y := w \mapsto x(w) + y(w)$; multiplication by $x\cdot y := w \mapsto \sum_{ab=w} x(a) y(b)$; the additive identity by $w\mapsto 0$; and the $\,^*$ operator by $x^* := \sum_{i=0}^\infty x^i$, the last of which is well-defined if and only if $x$ is \emph{proper} --- that is, $x(\emptystring)=0$, or equivalently $x\in\PSAfinitenoempty$.  These definitions make $\PSAfinite$ into a $\,^*$-semiring.

We can further make $\PSAfinite$ into a $S$-semibimodule as follows.  Let $s\in S$ and $v\in\PSAfinite$; then $s\cdot v=w\mapsto s\cdot v(w)$ and $v\cdot s=w\mapsto v(w)\cdot s.$  Showing that the semibimodules laws are obeyed is left as an exercise for the reader.

We now turn our attention to $\PSAinfinite$, which we shall also make a $S$-semibimodule as follows.  First, we shall define addition in the obvious way: given diverging power series $x,y\in \PSAinfinite$ we let $x+y := (w,n) \mapsto x(w,n) + y(w,n),$ and the zero element be $(w,n) \mapsto 0$.  Now let $s\in S$ and let $v\in\PSAinfinite$;  then $s\cdot v=(w,n)\mapsto s\cdot v(w,n)$ and $v\cdot s=(w,n)\mapsto v(w,n)\cdot s.$  Again, showing that the semibimodule laws are obeyed is left as an exercise for the reader.

\subsection{Rational Diverging Power Series}

It would be nice if we could proceed by making $(\PSAfinite,\PSAinfinite)$ form a semiring-semimodule pair, as there is a natural way to define left-multiplication, but unfortunately it turns out to be difficult to do this in a nice way while also incorporating the activation condition into our formalism of diverging power series.  Thus, instead we shall define two ways to construct elements of $\PSAinfinite$ using elements of $\PSAfinite$.  The first is \emph{infinite iteration}, denoted by $^\omega$, which is defined as follows.  Let $s$ be a proper \emph{converging} power series.  Then
$$s^\omega(w,n) := s^*(\slice{w}{0}{n})\cdot\allprefixes{w}{s^*},$$
where $\allprefixes{w}{x}=1$ if for every $n_0$ there exists $n\ge n_0$ such that $x(\slice{w}{0}{n})\ne 0$ and $\allprefixes{w}{x}=0$ otherwise.

The second way to build a diverging power series is \emph{conjoining}, denoted by $\cdot\star\cdot$, which is defined as follows.  Let $x,y\in\PSAfinitenoempty$;  then the \emph{conjoin} of $x$ and $y$ is given by,
$$(x\star y)(w,n)=(xy^*)(\slice{w}{0}{n})\cdot\allprefixes{w}{xy^*}.$$

Having defined these two ways of building diverging power series from converging power series, we shall now define the rational diverging power series, denoted by $\Rat^\omega(S,A)$.  First, for convenience, let $\Rat^*(S,A)\subset\PSAfinite$ be the set of rational converging power series, and $\ratnoempty$ be the set of \emph{proper} rational converging power series.  We then define $\Rat^\omega(S,A)$ to be the smallest subset of $\PSAinfinite$ such that
\begin{enumerate}
\item $\Rat^\omega(S,A)$ is closed under finite sums;
\item $\Rat^\omega(S,A)$ is closed under left and right multiplication by elements of $S$;
\item for every $x,y\in \ratnoempty$, $x\star y\in\Rat^\omega(S,A)$.
\item for every $z\in \ratnoempty$, $z^\omega\in\Rat^\omega(S,A)$;
\end{enumerate}

The next Lemma shows that there is a simple characterization of $\Rat^\omega(S,A)$.

\begin{lemma}[Characteristic representation for diverging power series]
\label{lem:characteristic}
A diverging power series $p\in \PSAinfinite$ is rational if and only if there exist finite index sets $I$ and $J$ and indexed sequences $\{a_i,b_i\}_{i\in I}\subset S$, $\{x_i,y_i\}_{i\in I}\subset \Rat^*_{/\emptystring}(S,A)$, $\{c_j,d_j\}_{j\in J}\subset S$ and $\{z_j\}_{j\in J}\subset \Rat^*_{/\emptystring}(S,A)$ such that
$$p=\sum_{i\in I}a_i(x_i\star y_i)b_i + \sum_{j\in J} c_jz_j^\omega d_j$$.
\end{lemma}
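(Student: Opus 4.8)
The plan is to prove the two directions by playing the four closure conditions in the definition of $\Rat^\omega(S,A)$ against the shape of the claimed normal form. Write $\mathcal{C}$ for the set of all $p\in\PSAinfinite$ expressible as $\sum_{i\in I}a_i(x_i\star y_i)b_i + \sum_{j\in J}c_j z_j^\omega d_j$ subject to the stated finiteness and propriety constraints on the data. The goal is to establish $\mathcal{C}=\Rat^\omega(S,A)$ by two inclusions.

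For the ``if'' direction ($\mathcal{C}\subseteq\Rat^\omega(S,A)$) I would simply read off the defining closure conditions in order: each $x_i\star y_i$ lies in $\Rat^\omega(S,A)$ by condition (3), each $z_j^\omega$ by condition (4), the scalar-flanked terms $a_i(\cdots)b_i$ and $c_j(\cdots)d_j$ by condition (2), and the finite sum of all of them by condition (1). Hence any element of $\mathcal{C}$ lies in $\Rat^\omega(S,A)$.

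For the ``only if'' direction ($\Rat^\omega(S,A)\subseteq\mathcal{C}$) I would exploit that $\Rat^\omega(S,A)$ is by definition the \emph{smallest} subset of $\PSAinfinite$ satisfying (1)--(4); it therefore suffices to verify that $\mathcal{C}$ itself satisfies (1)--(4), after which minimality forces the inclusion. Conditions (3) and (4) are immediate, since $x\star y = 1\cdot(x\star y)\cdot 1$ and $z^\omega = 1\cdot z^\omega\cdot 1$ are already in normal form (take a singleton index set and the other index set empty). Condition (1) is handled by taking disjoint unions $I\sqcup I'$ and $J\sqcup J'$ of the index sets of the two summands, which visibly produces another normal form.

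The one step needing genuine care — and where I expect the main obstacle to lie — is condition (2), closure under left/right scalar multiplication. Given $p\in\mathcal{C}$ and $l,r\in S$, I would use the fact (established earlier) that $\PSAinfinite$ is an $S$-semibimodule, so that left multiplication distributes over the finite sum and associates past the existing flanking scalars, giving $l\cdot\bigl(a_i(x_i\star y_i)b_i\bigr)\cdot r = (la_i)(x_i\star y_i)(b_i r)$, and likewise $(lc_j)z_j^\omega(d_j r)$ for the iteration terms. Since $la_i,\,b_ir,\,lc_j,\,d_jr\in S$ and the inner factors $x_i,y_i,z_j$ are left untouched (hence remain in $\ratnoempty$), the result is again in $\mathcal{C}$. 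The subtle point to emphasize is \emph{why} the scalars must remain external rather than being absorbed into the conjoin or the iteration: the activation factor $\allprefixes{w}{x_iy_i^*}$ (respectively $\allprefixes{w}{z_j^*}$) records only \emph{where} the underlying converging series is non-zero, and multiplying that series through by a scalar — in particular by $0$, or by a zero-divisor of $S$ — can destroy non-zero coefficients and thereby alter the activation pattern. Keeping $l,a_i$, etc.\ on the outside leaves the activation untouched while scaling only the emitted coefficient values, which is precisely how the semibimodule action on $\PSAinfinite$ behaves. Once (1)--(4) are verified for $\mathcal{C}$, combining the two inclusions yields $\mathcal{C}=\Rat^\omega(S,A)$, which is the claim.
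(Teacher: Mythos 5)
Your proposal is correct and takes essentially the same route as the paper's own proof: both argue the easy direction by reading off closure conditions (1)--(4), and both argue the converse by showing the set of normal-form series itself satisfies (1)--(4) --- absorbing flanking scalars via the $S$-semibimodule structure --- and then invoking the minimality of $\Rat^\omega(S,A)$. Your additional remark on why the scalars must stay external to the conjoin and iteration (so as not to disturb the activation factor $\allprefixes{w}{\cdot}$) is a correct refinement of a point the paper leaves implicit.
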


(Both the statement of this Lemma and its proof have well-known analogues in the case of unweighted words;  see Theorem 3.2 in Chapter 1 of Ref. \cite{Perrin2004}.)

\begin{proof}
Given the $I$, $J$, $\{a_i,b_i\}_{i\in I}$, $\{x_i,y_i\}_{i\in I}$, $\{c_j,d_j\}_{j\in J}$, $\{z_j\}_{j\in J}$ described in this Lemma, it is easy to see that $\sum_{i\in I}a_i(x_i\star y_i)b_i + \sum_{j\in J} c_jz_j^\omega d_j$ is rational, so the details are left as an exercise for the reader.

Now let $X$ be the set of diverging power series which can be written in the form $\sum_{i\in I}a_i(x_i\star y_i)b_i + \sum_{j\in J} l_jz_j^\omega r_j$.  Observe that:
\begin{enumerate}
\item $X$ is closed under finite sums.
\item $X$ is closed under left and right multiplication by elements of $S$ because each term has a coefficient on the left and right that can absorb values multiplied respectively on the left and right, and because $\PSAinfinite$ is an $S$-semibimodule (and thus distributive) we therefore have that $lvr\in X$ for all $l,r\in S$ and $v\in X$.
\item For all $z\in\Rat^*_{/\emptystring}(S,A)$, $z^\omega\in X$.
\item For all $x,y\in\Rat^*_{/\emptystring}(S,A)$, $x\star y\in X$.
\end{enumerate}
The set $X$ therefore contains $\Rat^\omega(S,A)$, and so we are done.
\end{proof}

Having defined rational power series, we now define recognizable power series as follows:  The set of recognizable diverging power series over the semiring $S$ and the alphabet $A$, $\Rec^\omega (S,A)\subset\powerseries{S^\nats}{A^\omega}$, is exactly the set of diverging power series that are the behavior of some diverging automaton, i.e. the set such that for every $x\in\Rec^\omega(S,A)$ there exists a diverging automaton $\aut{A}$ such that $\behav{\aut{A}}=x$; analogously, we define the set of recognizable converging power series, denoted by $\Rec^*(S,A)$, to be the set of power series that are the behavior of some converging automaton.

With this terminology we shall now state a Kleene Theorem that connects rational and recognizable series.

\begin{theorem}[Kleene's Theorem for diverging power series]
\label{thm:kleene-diverging}
$\Rat^\omega(S,A)=\Rec^\omega(S,A).$
\end{theorem}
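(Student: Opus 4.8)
The plan is to prove the two inclusions $\Rat^\omega(S,A)\subseteq\Rec^\omega(S,A)$ and $\Rec^\omega(S,A)\subseteq\Rat^\omega(S,A)$ separately. Throughout I lean on the classical Kleene--Sch\"utzenberger theorem for converging automata, $\Rat^*(S,A)=\Rec^*(S,A)$, which I treat as already available, and on the single observation that bridges the two worlds: if $\aut{A}$ has a unique initial state $i$ and a unique final state $f$, each of weight $1$, then the ``sum of all $i$-to-$f$ paths for $\slice{w}{0}{n}$'' appearing in the activation condition of \S\ref{subsub:final-form} is exactly $W_{if}(\slice{w}{0}{n})$, where $W_{if}\in\Rat^*(S,A)$ is the $(i,f)$ entry of the star of the (proper) transition matrix $\ofaut{M}{A}$. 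Hence $(i,f)$ is activated by $w$ precisely when $\allprefixes{w}{W_{if}}=1$, so by the definition of the behavior we get $\behav{\aut{A}}(w,n)=W_{if}(\slice{w}{0}{n})\cdot\allprefixes{w}{W_{if}}$. Writing $\Phi(p)$ for the diverging series $(w,n)\mapsto p(\slice{w}{0}{n})\,\allprefixes{w}{p}$, this reads $\behav{\aut{A}}=\Phi(W_{if})$; note also that $x\star y=\Phi(xy^*)$ and $z^\omega=\Phi(z^*)$ hold directly by definition.

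For $\Rat^\omega\subseteq\Rec^\omega$, Lemma \ref{lem:characteristic} reduces the task to recognizing each generator $x\star y$ and $z^\omega$ with $x,y,z\in\ratnoempty$, and then closing under finite sums and scalar multiplication; the latter closures follow from the elementary operations $\aut{A}+\aut{B}$ and $l\aut{A}r$ together with Lemma \ref{lem:behavior-is-linear}. To recognize $z^\omega=\Phi(z^*)$, I invoke converging Kleene to get a converging automaton for $z$, apply the standard star and single-state normalization constructions (properness of $z$ makes $z^*$ well defined) to produce a converging automaton with a unique initial state $i$ and unique final state $f$ whose $(i,f)$-path series is exactly $z^*$, and then read this tuple as a diverging automaton; by the observation above its behavior is $\Phi(z^*)=z^\omega$. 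The case $x\star y=\Phi(xy^*)$ is identical, building instead a single-initial, single-final converging automaton whose path series is $xy^*$.

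For $\Rec^\omega\subseteq\Rat^\omega$, given any diverging automaton $\aut{A}$, grouping the defining sum for the behavior by initial--final pairs gives the finite sum $\behav{\aut{A}}=\sum_{i,f}\ofaut{I}{A}_i\,\Phi(W_{if})\,\ofaut{F}{A}_f$ (nonzero only for $i$ initial and $f$ final). Since $\Rat^\omega(S,A)$ is closed under finite sums and scalar multiplication, it remains to show each $\Phi(W_{if})$ is rational. For this I split every $i$-to-$f$ path at its first visit to $f$: letting $t_{if}$ be the series of paths from $i$ that reach $f$ only at the end and $\ell_f$ the series of nonempty first-return loops at $f$, one gets $W_{if}=t_{if}\,\ell_f^*$, where $\ell_f$ is proper and (for $i\ne f$) $t_{if}$ is proper, both lying in $\ratnoempty$ as entries of the star of the transition matrix restricted to $\ofaut{Q}{A}\setminus\{f\}$. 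Therefore $\Phi(W_{if})=\Phi(t_{if}\ell_f^*)=t_{if}\star\ell_f$ when $i\ne f$, while $\Phi(W_{ff})=\Phi(\ell_f^*)=\ell_f^\omega$ when $i=f$; either way $\Phi(W_{if})\in\Rat^\omega(S,A)$.

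The main obstacle lies not in either construction but in the bookkeeping that makes the bridging observation exact. In the forward direction I must check that the star/normalization constructions genuinely yield a converging automaton whose unique-pair path series equals $z^*$ (resp.\ $xy^*$) and that the fresh single initial/final state creates no spurious activated pairs, since only then does the activation condition collapse to the single factor $\allprefixes{w}{z^*}$. In the backward direction the delicate points are justifying the first-visit factorization $W_{if}=t_{if}\ell_f^*$ at the level of power series, so that all weights and their cancellations are preserved, and confirming properness so that $\star$ and $^\omega$ are legal operations. Both rest on the converging Kleene theorem to supply the rationality of the constituent converging series, so that classical result is the true workhorse beneath the proof.
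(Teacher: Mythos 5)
Your proof is correct, and its skeleton matches the paper's: the forward direction runs through Lemma \ref{lem:characteristic} plus linearity of the behavior (Lemma \ref{lem:behavior-is-linear}), the backward direction through a decomposition over initial--final pairs with the case split $i=f$ (producing $^\omega$-terms) versus $i\ne f$ (producing $\star$-terms), and Theorem \ref{thm:kleene-converging} is the workhorse in both. Where you genuinely differ is the level at which the combinatorics is carried out. The paper performs automaton surgery: it introduces roll/unroll and conjoin/disjoin as operations on machines, proves that these preserve or transform behaviors (Lemmas \ref{lem:rolling-and-unrolling-are-inverse-operations}, \ref{lem:loopback-star}, \ref{lem:attaching-and-detaching-are-inverses}, \ref{lem:attach-to-form-xy}), and then applies converging Kleene to the machines so produced (Propositions \ref{prop:loopback-equals-rational-star} and \ref{prop:loopback-equals-rational-omega}, Lemmas \ref{lem:loopback-with-prelude-equals-xyomega} and \ref{lem:decompose-into-loopback-with-or-without-prelude}). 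You instead work directly with the series: your map $\Phi$ packages the activation condition once and for all; your grouping $\behav{\aut{A}}=\sum_{i,f}\ofaut{I}{A}_i\,\Phi(W_{if})\,\ofaut{F}{A}_f$ is precisely the content of Lemma \ref{lem:decompose-into-loopback-with-or-without-prelude} (and correctly exploits the fact, stressed in that lemma's footnote, that activation is defined on \emph{pairs} of states, which is what makes the grouping legitimate); and your first-visit factorization $W_{if}=t_{if}\,\ell_f^*$ is the series-level counterpart of disjoin/conjoin, with rationality of $t_{if}$ and $\ell_f$ obtained from matrix-star closure rather than from running Kleene on a disjoined machine. Each route buys something: yours is shorter and dispenses with all the behavior-preservation lemmas for machine transformations, while the paper's heavier constructions amortize, since roll and conjoin are reused nearly verbatim for bridge automata and the bidiverging Kleene theorem. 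Two details you should tighten in a full write-up: first, in the forward direction, because the paper's automata have no $\epsilon$-transitions and $z^*(\emptystring)=1\ne 0$, any automaton whose unique-pair path series is $z^*$ must have $i=f$, so the ``standard star construction'' has to be the roll of a normalized automaton for $z$, and the verification you flag is exactly Lemmas \ref{lem:unroll-is-single-circuit} and \ref{lem:loopback-star}; second, $t_{if}$ and $\ell_f$ are not literally entries of the star of the transition matrix restricted to $\ofaut{Q}{A}\setminus\{f\}$, but sums of products of such entries with entries of $\ofaut{M}{A}$ --- their rationality and properness follow all the same.
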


Proving this result will take up the remainder of this section.

\subsection{Known Results about Finite Power Series}

We shall build our way to the proof of this Theorem by first proving a series of intermediate results.  Lemma \ref{lem:characteristic} tells us that we can decompose diverging power series into a finite number of operations on converging power series, and conversely we can build any diverging power series using a finite number of operations on converging power series.  Thus, if we had a way to immediately translate a converging power series to converging automata and back then we would be well on our way to proving the main theorem;  fortunately this is exactly what we have in the form of the very well-known Kleene-Sch\"utzenberger Theorem:

\begin{theorem}[Kleene's Theorem for converging power series]
\label{thm:kleene-converging}
$\Rat^*(S,A)=\Rec^*(S,A).$
\end{theorem}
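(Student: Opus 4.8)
The plan is to prove the two inclusions $\Rat^*(S,A)\subseteq\Rec^*(S,A)$ and $\Rec^*(S,A)\subseteq\Rat^*(S,A)$ separately, following the classical Schützenberger argument adapted to the present notation.

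For the inclusion $\Rat^*(S,A)\subseteq\Rec^*(S,A)$ I would show that $\Rec^*(S,A)$ contains the generators of the rational series and is closed under each rational operation. As base cases I would exhibit, for every scalar $s\in S$ and letter $a\in A$, small converging automata whose behaviors are the monomials $s\emptystring$ and $sa$ (a single state with initial weight $s$ handles the former, and a two-state automaton with a single $a$-transition of weight $s$ handles the latter). Closure under finite sums follows immediately from the automaton-sum operation together with Lemma~\ref{lem:behavior-is-linear}, which gives $\behav{\aut{A}+\aut{B}}=\behav{\aut{A}}+\behav{\aut{B}}$. The two remaining operations need constructions not yet introduced in the excerpt: for the Cauchy product I would glue the final states of the first automaton to the initial states of the second, weighting each new transition by the product of the relevant final and initial weights; and for the star I would add feedback transitions from final to initial states together with a fresh start state carrying the empty-word term. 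The hypothesis that the starred series lies in $\PSAfinitenoempty$ is exactly what guarantees this feedback construction introduces no zero-length loops, so the resulting behavior equals $x^*$ and is well-defined.

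For the converse inclusion $\Rec^*(S,A)\subseteq\Rat^*(S,A)$ I would begin from the closed-form expression for the behavior of a converging automaton,
$$\behav{\tilde{\aut{A}}}=\ofaut{I}{A}\cdot(\ofaut{M}{A})^*\cdot\ofaut{F}{A},$$
where $(\ofaut{M}{A})^*$ is the star of the transition matrix computed in the matrix $^*$-semiring over $\PSAfinite$, and where the entry $((\ofaut{M}{A})^*)_{pq}$ is precisely the sum over all finite paths from $p$ to $q$ of the product of their transition weights times the labeling word. Since each entry of $\ofaut{M}{A}$ is a finite $S$-combination of single letters it lies in $\PSAfinitenoempty$, so the matrix is proper and $(\ofaut{M}{A})^*=\sum_k(\ofaut{M}{A})^k$ is well-defined formally. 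It then remains to prove that every entry of this matrix star is rational, which I would do by induction on the number of states: decomposing $\ofaut{M}{A}$ into a $2\times2$ block form and applying the block matrix-star formula (equivalently, eliminating one state at a time) expresses each entry using only finitely many sums, products, and stars of proper series, hence as a rational series. Because $\ofaut{I}{A}$ and $\ofaut{F}{A}$ contribute only a finite $S$-linear combination of these entries, the behavior is rational.

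I expect the main obstacle to lie in this second inclusion, specifically in the claim that $(\ofaut{M}{A})^*$ has rational entries. The delicate point is verifying that the block decomposition preserves properness at each inductive step --- that is, that the self-loop term fed into every intermediate star is genuinely proper, so that each star invoked is legitimate --- and confirming that the block matrix-star identity actually holds in the generally non-commutative, non-idempotent $^*$-semiring $\PSAfinite$ rather than merely in the Boolean setting; the cleanest way to secure the latter is to verify it directly against the path-sum interpretation of $(\ofaut{M}{A})^*$, splitting each path according to its visits to the eliminated state. Establishing closure under the Cauchy product and star in the forward direction is conceptually routine but still requires care in weighting the glue and feedback transitions so that path weights multiply correctly, and this is the second most technical part of the argument.
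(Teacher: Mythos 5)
You should first be aware that the paper contains no internal proof of this theorem: it is imported as the classical Kleene--Sch\"utzenberger theorem, and its proof is deferred entirely to the cited references. Your sketch therefore cannot match or diverge from an argument in the paper itself; what it does is reconstruct the standard argument from that literature, and it does so correctly in outline --- both halves are the classical route, namely closure of $\Rec^*(S,A)$ under the rational operations via explicit automaton constructions, and rationality of recognizable behaviors via $\behav{\tilde{\aut{A}}}=\ofaut{I}{A}\cdot(\ofaut{M}{A})^*\cdot\ofaut{F}{A}$ together with state elimination. The two points you flag as delicate are indeed where the care is needed, and both resolve favorably. For the Cauchy product, gluing final states to initial states accounts only for factorizations $w=uv$ with $u,v$ both nonempty; you must also add correction terms carrying the empty-word coefficients --- e.g.\ augment the initial vector on the $\aut{B}$-block by $(\ofaut{I}{A}\cdot\ofaut{F}{A})\,\ofaut{I}{B}$ and the final vector on the $\aut{A}$-block by $\ofaut{F}{A}\,(\ofaut{I}{B}\cdot\ofaut{F}{B})$ --- otherwise the construction computes only the product of the proper parts. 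For state elimination, properness does propagate: the off-diagonal blocks $b,c$ stay proper at every step, so $bd^*c$ is proper (any product with a proper factor is proper) even though $d^*$ itself is not, and hence every star invoked along the induction is applied to a proper series; the block-star identity itself is best justified, exactly as you propose, by the path-sum interpretation, which is insensitive to noncommutativity and non-idempotence. One hypothesis you should state explicitly is finiteness of the state set and of the alphabet, since your induction is on $|\ofaut{Q}{A}|$ and rationality of the entries of the transition matrix requires each entry to be a polynomial. As for what each approach buys: the paper's citation keeps its focus on the new diverging theory, while your self-contained argument makes visible that the theorem requires no hypotheses on $S$ beyond being a semiring --- no completeness, commutativity, or idempotence --- which is precisely the level of generality on which the rest of the paper depends.
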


\begin{proof}
See Refs. \cite{Shutzenberger1961}, \cite{Sakarovitch1987}, \cite{Kuich1997} and \cite{Esik2009}.
\end{proof}

We will want to connect automata together in various configurations, so to make this easy it would be nice if could express an arbitrary converging automaton in a form that only has a single starting and ending point.  Specifically, we prefer to work with automata which we shall call \emph{normalized} automata, which have exactly one initial state, which has no incoming edges, and exactly one (separate) final state, which has no outgoing edges, with both states having their respective initial and final weight equal to 1.  That is, if $\aut{A}$ is normalized, $1$ is the initial state, and $2$ is the final state, then we have that $\ofaut{I}{A}_i=\delta_{i1}$, $\ofaut{F}{A}_i=\delta_{i2}$, and $\ofaut{M}{A}_{i1}=\ofaut{M}{A}_{2i}=0$.  These automata have the property that they must reject the empty word, as the following Lemma shows.

\begin{lemma}
\label{lem:normalized-rejects-empty-string}
If $\aut{N}$ is a normalized automaton then it rejects the empty word.
\end{lemma}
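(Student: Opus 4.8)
The plan is to unwind the definition of the automaton's behavior on the empty word and show that the resulting coefficient is forced to vanish by the normalization conditions. To say that $\aut{N}$ \emph{rejects} the empty word means precisely that the coefficient it assigns to $\emptystring$ is zero; since $\emptystring$ is a finite word, the natural reading is that $\tilde{\aut{N}}(\emptystring)=0$ for the converging behavior, and because converging and diverging automata share the same tuple this is all that needs checking.

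First I would evaluate the behavior at $\emptystring$. Since $|\emptystring|=0$, the product $\prod_{i=0}^{-1}\Mofautwithsup{N}{\charat{\emptystring}{i}}$ appearing in the definition of the behavior is an empty product, which by convention is the identity matrix. The behavior therefore collapses to
$$\tilde{\aut{N}}(\emptystring)=\ofaut{I}{N}\cdot\ofaut{F}{N}=\sum_{q\in\ofaut{Q}{N}}\ofaut{I}{N}_q\,\ofaut{F}{N}_q.$$

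Next I would substitute the normalization conditions. Writing $1$ for the unique initial state and $2$ for the unique (separate) final state, we have $\ofaut{I}{N}_q=\delta_{q1}$ and $\ofaut{F}{N}_q=\delta_{q2}$, so the sum above becomes $\sum_{q}\delta_{q1}\delta_{q2}=\delta_{12}$. Because a normalized automaton has its initial and final states \emph{distinct} (the definition demands exactly one initial state and exactly one \emph{separate} final state), we have $1\ne 2$ and hence $\delta_{12}=0$, giving $\tilde{\aut{N}}(\emptystring)=0$ as desired. Equivalently, one may argue at the level of paths: the only path for $\emptystring$ is the length-zero path $\{q_0\}$ consisting of a single state, which is successful only if $q_0$ is simultaneously an initial and a final state; since no such state exists in a normalized automaton, there are no successful paths and the coefficient must be $0$.

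There is essentially no real obstacle here; the statement is an immediate consequence of the definitions. The only points requiring a modicum of care are the convention that the empty product of transition matrices is the identity (so that the behavior on $\emptystring$ reduces to the inner product $\ofaut{I}{N}\cdot\ofaut{F}{N}$) and the observation that the single designated initial state and single designated final state of a normalized automaton are genuinely distinct, which is exactly what makes that inner product vanish. Notably, the no-incoming-edges and no-outgoing-edges conditions $\ofaut{M}{N}_{i1}=\ofaut{M}{N}_{2i}=0$ play no role in this particular argument; they matter only for the later constructions that splice normalized automata together.
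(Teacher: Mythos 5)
Your proof is correct and rests on exactly the same observation as the paper's (one-line) proof: the coefficient of $\emptystring$ reduces to $\ofaut{I}{N}\cdot\ofaut{F}{N}$, which vanishes because the sole initial state and sole final state of a normalized automaton are distinct. You have simply spelled out explicitly the computation that the paper summarizes as ``the initial states and the final states have no overlap.''
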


\begin{proof}
This comes directly from the fact that the initial states and the final states have no overlap.
\end{proof}

We fortunately have a well-known result that tells us we can always assume we are working with a normalized automaton --- though we shall specifically be proving a variation of this result that assumes that the automaton does not recognize the empty word because we do not have $\epsilon$-transitions in our definition of automata.

\begin{lemma}[Qualified existence of an equivalent normalized automaton]
\label{lem:normalized}
For every converging automaton $\aut{A}$ that rejects the empty word there exists a normalized converging automaton $\aut{N}$ that has the same behavior as $\aut{A}$.
\end{lemma}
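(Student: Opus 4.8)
The plan is to build $\aut{N}$ from $\aut{A}$ by adjoining two fresh states, a new initial state $q_I$ and a new final state $q_F$, and folding the initial distribution $\ofaut{I}{A}$ and the final distribution $\ofaut{F}{A}$ into the transitions leaving $q_I$ and entering $q_F$ respectively; this avoids the need for $\epsilon$-transitions, which our automata do not have.

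Concretely, I would set $\ofaut{Q}{N} := \ofaut{Q}{A}\cup\{q_I,q_F\}$ (with $q_I,q_F$ fresh), $\ofaut{I}{N}_i := \delta_{iq_I}$, and $\ofaut{F}{N}_i := \delta_{iq_F}$. For each input symbol $x$ I keep the old transitions among old states, $\Mofautwithsup{N}{x}_{pq}:=\Mofautwithsup{A}{x}_{pq}$ for $p,q\in\ofaut{Q}{A}$; I define the edges out of $q_I$ by absorbing the initial weights, $\Mofautwithsup{N}{x}_{q_Iq}:=\sum_{p\in\ofaut{Q}{A}}\ofaut{I}{A}_p\Mofautwithsup{A}{x}_{pq}$; I define the edges into $q_F$ by absorbing the final weights, $\Mofautwithsup{N}{x}_{pq_F}:=\sum_{q\in\ofaut{Q}{A}}\Mofautwithsup{A}{x}_{pq}\ofaut{F}{A}_q$; and I add a direct edge $\Mofautwithsup{N}{x}_{q_Iq_F}:=\sum_{p,q\in\ofaut{Q}{A}}\ofaut{I}{A}_p\Mofautwithsup{A}{x}_{pq}\ofaut{F}{A}_q$ to handle words of length one. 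All remaining entries are zero; in particular there are no edges into $q_I$ and none out of $q_F$.

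Verifying that $\aut{N}$ is normalized is then immediate: $q_I\ne q_F$ since they are fresh, $q_I$ has no incoming edge and $q_F$ no outgoing edge by construction, and both carry weight $1$. The crux is to show $\behav{\aut{N}}=\behav{\aut{A}}$, which I would do by a word-by-word comparison of successful paths. Because $q_I$ has no incoming edges and $q_F$ no outgoing edges, every successful path in $\aut{N}$ for a word $w$ with $|w|=n\ge 1$ has the form $q_I\to p_1\to\cdots\to p_{n-1}\to q_F$ with $p_1,\dots,p_{n-1}\in\ofaut{Q}{A}$; summing the weights of these paths and expanding the folded first and last transitions yields exactly $\sum_{p,q\in\ofaut{Q}{A}}\ofaut{I}{A}_p\paren{\prod_{i=0}^{n-1}\Mofautwithsup{A}{\charat{w}{i}}}_{pq}\ofaut{F}{A}_q=\aut{A}(w)$. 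The degenerate case $n=1$, where the first and last transition coincide, is precisely what the direct edge $q_I\to q_F$ accounts for. For the empty word, $\aut{N}(\emptystring)=\sum_i\ofaut{I}{N}_i\ofaut{F}{N}_i=0$ because $q_I\ne q_F$, which matches $\aut{A}(\emptystring)=0$; the latter is exactly the hypothesis that $\aut{A}$ rejects the empty word.

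This last point is the only place the hypothesis is used, and it explains why the hypothesis is necessary: by Lemma \ref{lem:normalized-rejects-empty-string} a normalized automaton always rejects $\emptystring$, so no normalized automaton could agree with an $\aut{A}$ that accepts it. The main obstacle is thus purely the bookkeeping in the path comparison, namely getting the boundary contributions right so that $\ofaut{I}{A}_p$ attaches to the first transition and $\ofaut{F}{A}_q$ to the last, and confirming that the length-one case is neither double-counted nor dropped; everything else follows directly from the definitions.
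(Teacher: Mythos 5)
Your construction is exactly the paper's: adjoin a fresh initial state and a fresh final state with weight $1$, fold $\ofaut{I}{A}$ into the outgoing edges of the new initial state and $\ofaut{F}{A}$ into the incoming edges of the new final state, add the direct length-one edge, and verify behavior by cases $|w|=0$, $|w|=1$, $|w|>1$ (the paper phrases the verification as matrix products where you sum over paths, but these are the same computation). The proposal is correct and needs no changes.
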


\begin{proof}
Let $\aut{N}$ be defined as follows:  Let $\ofaut{Q}{N} := \ofaut{Q}{A}\cup \{1,2\}$, $\ofaut{I}{N}_i := \delta_{i1}$, $\ofaut{F}{N}_i=\delta_{i2}$, and
$$\ofaut{M}{N}_{ij} :=
\begin{cases}
\ofaut{M}{A}_{ij} & i,j, \in \ofaut{Q}{A} \\
\sum_{k\in Q} \ofaut{I}{A}_k\ofaut{M}{A}_{kj} & i = 1, j\in\ofaut{Q}{A} \\
\sum_{k\in Q} \ofaut{M}{A}_{ik}\ofaut{F}{A}_k & i\in\ofaut{Q}{A}, j = 2 \\
\sum_{i,j\in Q} \ofaut{I}{A}_i \ofaut{M}{A}_{ij}\ofaut{F}{A}_j & i = 1, j = 2 \\
\end{cases}
$$
What we have done is create a new initial state and a new final state, and to the former added copies of all of the edges outgoing from the initial states in $\aut{A}$, multiplying the weights on these edges by the initial weight of that state, and to the latter added copies of all the edges incoming to the final states in $\aut{A}$, multiplying the weights on these edges by the final weight of the state.

Now let $w$ be an arbitrary word.  First observe that if $|w|=0$ then $\aut{N}(w)=0=\aut{A}(w)$ because there is no overlap between the initial and final states.  Next observe that if $|w|=1$ and $\charat{w}{0}=a$ for arbitrary $a\in A$ then
$$\aut{N}(a) = \ofaut{I}{N} \cdot \Mofautwithsup{N}{a} \cdot \ofaut{F}{N} = \sum_{i,j\in Q}  \ofaut{I}{A}_i  \ofaut{M}{A}_{ij} \ofaut{F}{A}_j = \ofaut{I}{A}\cdot \Mofautwithsup{A}{a} \cdot \ofaut{F}{A} = \aut{A}(a).$$   Finally observe that if $|w| = n > 1$ then
\begin{align*}
\aut{N}(w)
    &= \ofaut{I}{N} \cdot \Mofautwithsup{N}{\charat{w}{0}} \cdots \Mofautwithsup{N}{\charat{w}{n-1}} \cdot \ofaut{F}{N} \\
    &= \sum_{i,j\in Q} \ofaut{I}{A}_i \cdot \Mofautwithsup{A}{\charat{w}{0}}_{i,Q} \cdots \Mofautwithsup{A}{\charat{w}{n-1}}_{Q,j} \cdot \ofaut{F}{A}_j \\
    &= \aut{A}(w),
\end{align*}
and we are done.
\end{proof}

\begin{corollary}
\label{cor:kleen-normalized}
For every proper rational converging power series, there exists a normalized converging automaton that recognizes it.
\end{corollary}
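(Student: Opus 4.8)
The plan is to chain together the Kleene--Sch\"utzenberger Theorem (Theorem~\ref{thm:kleene-converging}) with the normalization result (Lemma~\ref{lem:normalized}). Starting from a proper rational converging power series $p\in\ratnoempty$, the first step is to invoke Theorem~\ref{thm:kleene-converging}, which tells us that $\Rat^*(S,A)=\Rec^*(S,A)$; since $p\in\Rat^*(S,A)$, this immediately yields a converging automaton $\aut{A}$ with $\behav{\aut{A}}=p$.

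The second step is to observe that $\aut{A}$ rejects the empty word, which is precisely the hypothesis required by Lemma~\ref{lem:normalized}. Unwinding the definition of the behavior of a converging automaton at the empty word, the empty product of transition matrices is the identity, so $\aut{A}(\emptystring)=\ofaut{I}{A}\cdot\ofaut{F}{A}$. Because $p$ is proper we have $p(\emptystring)=0$, and since $\behav{\aut{A}}=p$ this forces $\aut{A}(\emptystring)=0$; that is, $\aut{A}$ rejects the empty word.

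The third step is then to apply Lemma~\ref{lem:normalized} directly: since $\aut{A}$ is a converging automaton that rejects the empty word, there exists a normalized converging automaton $\aut{N}$ with $\behav{\aut{N}}=\behav{\aut{A}}=p$. Hence $\aut{N}$ recognizes $p$, which is what we wanted.

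I do not anticipate any genuine obstacle, since every step is a direct appeal to an already-established result and the whole argument is essentially a two-line composition. The only point requiring a moment of care is the second step --- verifying that the properness condition $p(\emptystring)=0$ is exactly equivalent to the automaton rejecting the empty word --- but this follows at once from evaluating the behavior at $\emptystring$, where the product over transitions is empty and so the value reduces to $\ofaut{I}{A}\cdot\ofaut{F}{A}$.
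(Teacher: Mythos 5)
Your proof is correct and follows exactly the paper's route: the paper's proof is the one-line ``Follows immediately from Theorem \ref{thm:kleene-converging} and Lemma \ref{lem:normalized}.'' Your only addition is to spell out the (correct) verification that properness, $p(\emptystring)=0$, forces the automaton produced by Kleene's Theorem to reject the empty word --- the hypothesis of Lemma \ref{lem:normalized} that the paper leaves implicit.
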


\begin{proof}
Follows immediately from Theorem \ref{thm:kleene-converging} and Lemma \ref{lem:normalized}.
\end{proof}

\subsection{Loopback Automata}

It will also be useful to work with normalized automata with the property that the initial and final state are the same, so we shall define a \emph{loopback} automaton to be an automaton with the property that $I_i=F_i=\delta_{1i}$ --- that is, such that there is only a single state, with initial and final weight one, that is the only initial and final state;  we shall call this state the \emph{loopback} state.

It will be useful to categorize the ways that paths travel through loopback automata, so we say that the number of times that a path has made a circuit returning to the loopback state is equal to the number of \emph{trips} it has made, so in particular a \emph{single-trip path} is a path that starts and ends on the loopback state but does not pass through it again in between.

There is a natural transformation called \emph{rolling} that takes us from a normalized automaton to a loopback automaton:  Given a normalized automaton $\aut{N}$ with initial state $1$ and final state $2$, let the \emph{roll} of $\aut{N}$ be the automaton $\aut{L}$ given by $\ofaut{Q}{L}:=\ofaut{Q}{N}/\{2\}$, $\ofaut{I}{L}_i:=\ofaut{F}{L}_i:=\delta_{i1}$, and
$$
\ofaut{M}{L}_{ij}:=
\begin{cases}
\ofaut{M}{N}_{i2} & j = 1 \\
\ofaut{M}{N}_{ij} & \otherwise
\end{cases}
$$
That is, we delete the final state, redirect all edges that ended on the final state to the initial state, and then set the final weight of the initial state to 1 so that the initial state is now the loopback state.

Rolling has an inverse operation called \emph{unrolling}:  Given a loopback automaton $\aut{L}$ with loopback state $1$, the \emph{unroll} of $\aut{L}$ is the normalized automaton $\aut{N}$, given by $\ofaut{Q}{N}:=\ofaut{Q}{L}\cup\{2\}$, $\ofaut{I}{N}_i:=\delta_{i1}$, $\ofaut{F}{N}_i:=\delta_{i2}$,
$$
\ofaut{M}{N}_{ij}:=
\begin{cases}
0 & j = 1 \,\,\text{or} \,\,i = 2 \\
\ofaut{M}{L}_{i1} & j = 2 \\
\ofaut{M}{L}_{ij} & \otherwise
\end{cases}
$$
That is, we add a new state with final weight 1, redirect all the edges ending on the loopback state so that they now end on the new state, and set the final weight of the loopback state to 0, with the end result that the old loopback state is now the initial state and the newly added state is the final state.

\begin{lemma}
\label{lem:rolling-and-unrolling-are-inverse-operations}
Rolling and unrolling are inverse operations (modulo possibly reordering states).
\end{lemma}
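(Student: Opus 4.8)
The plan is to show directly that the two composite operations $\mathrm{unroll}\circ\mathrm{roll}$ and $\mathrm{roll}\circ\mathrm{unroll}$ each reduce to the identity, up to a relabeling of the one state that is deleted by rolling and freshly introduced by unrolling. Since a converging automaton is completely determined by its tuple $\auttuple{A}$, it suffices to compare, in each composite, the state set, the initial distribution, the final distribution, and the transition matrix against those of the starting automaton. The state sets and the $I$/$F$ vectors match immediately: rolling deletes the final state $2$ and unrolling adds a fresh state which we relabel back to $2$, so the underlying state sets agree after relabeling, and in both composites the initial and final distributions are forced to be $\delta_{i1}$ and $\delta_{i2}$ (or $\delta_{i1}$ twice, in the loopback case) by the definitions. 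All of the content therefore lives in the transition matrix, which I would verify case by case.

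For $\mathrm{unroll}\circ\mathrm{roll}$, I would start from a normalized $\aut{N}$, roll it to a loopback automaton $\aut{L}$, and then unroll $\aut{L}$ to a normalized $\aut{N'}$; writing out $\ofaut{M}{N'}$ by substituting the roll formula into the unroll formula, one finds four cases. The edges into the fresh final state satisfy $\ofaut{M}{N'}_{i2}=\ofaut{M}{L}_{i1}=\ofaut{M}{N}_{i2}$, recovering exactly the edges that originally ended on state $2$; the edges touching neither state $1$ nor state $2$ are carried through unchanged by both operations; and the two slots that are zeroed by unrolling, namely $\ofaut{M}{N'}_{i1}$ and $\ofaut{M}{N'}_{2j}$, must be matched against $\aut{N}$. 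This last point is where the normalization hypothesis does the work: because $\aut{N}$ is normalized we already have $\ofaut{M}{N}_{i1}=\ofaut{M}{N}_{2i}=0$, so the zeros introduced by unrolling agree with $\aut{N}$ rather than destroying information.

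For $\mathrm{roll}\circ\mathrm{unroll}$, I would start from a loopback $\aut{L}$, unroll to $\aut{N}$, and roll back to $\aut{L'}$ on the same state set $\ofaut{Q}{L}$. Substituting the unroll formula into the roll formula gives, for $j=1$, $\ofaut{M}{L'}_{i1}=\ofaut{M}{N}_{i2}=\ofaut{M}{L}_{i1}$ (the edges that unrolling diverted to the fresh state are diverted back to the loopback state), and for $j\ne 1$, $\ofaut{M}{L'}_{ij}=\ofaut{M}{N}_{ij}=\ofaut{M}{L}_{ij}$; here the deleted state $2$ never appears as a source or target among the indices of $\ofaut{Q}{L}$, so no appeal to normalization is even needed in this direction. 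The only real obstacle is bookkeeping: one must keep careful track of the relabeling of the fresh/deleted state and confirm that every index range in the piecewise definitions lines up, and one must not forget to invoke the normalization conditions in the roll-then-unroll direction, since without them the reconstruction of the edges into state $1$ and out of state $2$ would fail.
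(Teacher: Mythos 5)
Your proposal is correct and takes essentially the same approach as the paper: a direct verification from the definitions that each composite restores the original tuple, with the normalization conditions $\ofaut{M}{N}_{i1}=\ofaut{M}{N}_{2i}=0$ doing exactly the work of ensuring the slots zeroed by unrolling lose no information. Your write-up is simply a more explicit, componentwise rendering of the paper's ``interchange a zero column with a non-zero column and delete it'' argument, including the correct observation that only the roll-then-unroll direction needs the normalization hypothesis.
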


\begin{proof}
The only parts of the automaton impacted by these transformations are the initial state which changes to the loopback state and back again, and the final state which is deleted and re-added (and vice versa).  The initial and final weights of these states are fixed since the automaton is either normalized or loopback, and so the inverse will always restore them to their original values (modulo possibly reordering states).  The edges are left unchanged except for those that end either at the loopback state or at the final state;  because normalized automata have no edges ending at the initial state, rolling essentially just has the effect of interchanging a zero column in $M$ with a non-zero column and then deleting the (interchanged) zero column, which is exactly inverted by the unrolling operation, and vice versa.
\end{proof}

The following Proposition gives us a useful specialization of Kleene's Theorem for the case of converging loopback automata.

\begin{proposition}[Converging loopback automata recognize $\,^*$ of rational power series]
\label{prop:loopback-equals-rational-star}
The set of power series recognized by a converging loopback automaton is equal to $\{s^*:s\in\ratnoempty\}$.
\end{proposition}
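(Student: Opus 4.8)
The plan is to reduce the whole proposition to a single core claim about the rolling transformation, namely: if $\aut{N}$ is a normalized automaton with behavior $\behav{\aut{N}}=s$, then its roll $\aut{L}$ is a loopback automaton with $\behav{\aut{L}}=s^*$. Granting this claim, both set inclusions follow quickly from machinery already in place. For the inclusion of $\{s^*:s\in\ratnoempty\}$ in the loopback-recognizable series, given a proper rational $s$ I would invoke Corollary \ref{cor:kleen-normalized} to get a normalized automaton $\aut{N}$ recognizing $s$, whereupon the core claim says $\mathrm{roll}(\aut{N})$ is a loopback automaton recognizing $s^*$. For the reverse inclusion, given an arbitrary loopback automaton $\aut{L}$ I would set $\aut{N}:=\mathrm{unroll}(\aut{L})$; by Lemma \ref{lem:rolling-and-unrolling-are-inverse-operations} we have $\aut{L}=\mathrm{roll}(\aut{N})$, by Lemma \ref{lem:normalized-rejects-empty-string} the normalized automaton $\aut{N}$ rejects $\emptystring$ so that $s:=\behav{\aut{N}}$ is proper, and by Theorem \ref{thm:kleene-converging} it is rational, hence $s\in\ratnoempty$; the core claim then yields $\behav{\aut{L}}=s^*$.

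The core claim is where the real work sits, and I would prove it by a path-decomposition argument. For the empty word both sides evaluate to $1$: the left because $\ofaut{I}{L}\cdot\ofaut{F}{L}=1$, since each of these vectors is the indicator of the loopback state, and the right because $s^*(\emptystring)=1$ for proper $s$. For a nonempty word $w$ of length $n$, $\behav{\aut{L}}(w)$ is the $(1,1)$ entry of $\prod_{i=0}^{n-1}\Mofautwithsup{L}{\charat{w}{i}}$, i.e. the sum over all paths from the loopback state back to itself of the products of their transition weights. I would partition each such path by the positions $0=t_0<t_1<\dots<t_k=n$ at which it revisits the loopback state, cutting it into $k\ge 1$ single-trip paths reading nonempty subwords $u_1,\dots,u_k$ with $u_1\cdots u_k=w$.

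The key structural observation is that rolling sets up a weight-preserving bijection between single-trip paths of $\aut{L}$ reading a nonempty word $u$ and paths of $\aut{N}$ from its initial to its final state reading $u$: the intermediate states coincide (both lie in $\ofaut{Q}{N}\setminus\{1,2\}$, since by normalization nothing enters state $1$ or leaves state $2$), while the final edge of the trip carries weight $\ofaut{M}{L}_{i1}=\ofaut{M}{N}_{i2}$ by the definition of roll, matching the edge into the final state of $\aut{N}$. Hence the sum of single-trip weights over all trips reading $u$ equals $\behav{\aut{N}}(u)=s(u)$. Because a path is precisely a decomposition $u_1\cdots u_k=w$ together with an independent single-trip path for each piece, the path sum factors as $\behav{\aut{L}}(w)=\sum_{k\ge 1}\sum_{u_1\cdots u_k=w}\prod_{j=1}^k s(u_j)$; as $s$ is proper the empty-piece terms vanish and the $k=0$ term drops out for nonempty $w$, so this equals $\sum_{k\ge 1}s^k(w)=s^*(w)$, finishing the claim.

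The main obstacle I anticipate is making the factoring step fully rigorous: I must show that decomposing an arbitrary loopback-to-loopback path into single-trip segments is a bijection onto pairs consisting of a decomposition of $w$ and a tuple of single-trip paths, so that the sum over paths of products of edge weights genuinely distributes into a product of sums over single trips. This needs the facts that single trips read nonempty words (so the decomposition terminates and is finite) and that the loopback state is the unique cut point, and it is exactly here that propriety of $s$ is essential, both to make $s^*$ well-defined and to discard empty-piece decompositions. The bijection between single trips and $\aut{N}$-paths and the weight computation under roll are the other delicate points, but they should be routine given the explicit definition of the rolling transformation.
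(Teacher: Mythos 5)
Your proposal is correct and follows essentially the same route as the paper: your ``core claim'' is the paper's Lemma \ref{lem:loopback-star} (stated via roll rather than unroll, equivalent by Lemma \ref{lem:rolling-and-unrolling-are-inverse-operations}), your weight-preserving bijection between single trips and paths of $\aut{N}$ is exactly the paper's Lemma \ref{lem:unroll-is-single-circuit} (which the paper leaves as an exercise but you actually sketch), and your trip-decomposition and factorization of the path sum into $\sum_{k\ge 1}\sum_{u_1\cdots u_k=w}\prod_j s(u_j)=s^*(w)$ mirrors the paper's proof verbatim. The two directions of the set equality are likewise assembled from the same ingredients (Theorem \ref{thm:kleene-converging}, Corollary \ref{cor:kleen-normalized}, Lemma \ref{lem:normalized-rejects-empty-string}) in the same way.
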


There are a couple of preliminary results that will be useful for proving this Proposition.

\begin{lemma}[Sum of single-trip paths in a loopback automaton equals sum in the unroll]
\label{lem:unroll-is-single-circuit}
Given a loopback automaton $\aut{A}$ and a finite non-empty word $w$, the sum over all single-trip paths is equal to the sum over all successful paths for $w$ in the unroll of $\aut{A}$.
\end{lemma}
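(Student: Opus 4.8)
The plan is to prove the identity by exhibiting a weight-preserving bijection between the single-trip paths for $w$ in the loopback automaton $\aut{A}$ and the successful paths for $w$ in its unroll $\aut{N}$; once such a bijection is in hand the two path-weight sums agree term by term and the conclusion is immediate. Throughout I would fix $n := |w| \ge 1$, so that a path for $w$ is a state sequence $q_0, q_1, \ldots, q_n$ with $\Mofautwithsup{A}{\charat{w}{i}}_{q_i q_{i+1}} \ne 0$ for each $0 \le i < n$, carrying as its weight the product of these transition labels.

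First I would record the structural facts about $\aut{N}$ that fall straight out of the definition of unrolling. Since $\ofaut{M}{N}_{ij} = 0$ whenever $j = 1$, the initial state $1$ has no incoming edges, and since $\ofaut{M}{N}_{ij} = 0$ whenever $i = 2$, the final state $2$ has no outgoing edges. Consequently every successful path $p_0 = 1, p_1, \ldots, p_n = 2$ in $\aut{N}$ meets state $1$ only at position $0$ and state $2$ only at position $n$, so $p_i \in \ofaut{Q}{A} \setminus \{1\}$ for all $0 < i < n$. This is exactly the shape of a single-trip path in $\aut{A}$, which by definition begins and ends at the loopback state $1$ and avoids $1$ strictly in between.

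With this in place I would define the map $\phi$ sending a single-trip path $1, q_1, \ldots, q_{n-1}, 1$ of $\aut{A}$ to the sequence $1, q_1, \ldots, q_{n-1}, 2$ of $\aut{N}$ --- that is, $\phi$ fixes every state except the terminal occurrence of the loopback state, which it relabels $2$. The inverse relabels the terminal $2$ back to $1$, so $\phi$ is a bijection of state sequences, and it remains only to check that it carries paths to paths and preserves weights. For an intermediate transition $q_i \to q_{i+1}$ with $i < n-1$ we have $q_{i+1} \ne 1$ (and $q_{i+1} \ne 2$), so the ``otherwise'' clause of the unroll gives $\ofaut{M}{N}_{q_i q_{i+1}} = \ofaut{M}{A}_{q_i q_{i+1}}$ as power series, hence $\Mofautwithsup{N}{\charat{w}{i}}_{q_i q_{i+1}} = \Mofautwithsup{A}{\charat{w}{i}}_{q_i q_{i+1}}$; for the final transition into the relabeled state $2$ the clause $\ofaut{M}{N}_{i2} = \ofaut{M}{A}_{i1}$ reproduces precisely the label of the original return edge $q_{n-1} \to 1$. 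Thus each transition label, and therefore the full product and the non-vanishing condition defining a path, is preserved, the degenerate case $n = 1$ (where $1, 1 \mapsto 1, 2$) being handled by the final clause alone. Summing over all single-trip paths then yields the same value as summing over all successful paths in $\aut{N}$.

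The only point demanding real care, which I expect to be the crux, is the bookkeeping at the two ends: confirming that the ``no incoming edges to $1$, no outgoing edges from $2$'' structure forces every successful path of $\aut{N}$ into the single-trip shape --- so that $\phi$ is genuinely onto --- and verifying that the redirection clause $\ofaut{M}{N}_{i2} = \ofaut{M}{A}_{i1}$ reproduces the weight of the lone return edge without dropping or duplicating it when $n = 1$. Everything else is routine transcription of the unroll definition.
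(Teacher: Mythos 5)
Your proof is correct: the structural observations that state $1$ has no incoming edges and state $2$ no outgoing edges in the unroll, the relabeling bijection $\phi$ on state sequences, and the clause-by-clause check that transition weights (including the redirected return edge $\ofaut{M}{N}_{i2}=\ofaut{M}{A}_{i1}$ and the degenerate $n=1$ self-loop case) are preserved constitute exactly the verification this lemma requires. The paper itself gives no proof --- it is left as an exercise for the reader --- and your bijection argument is precisely the intended routine one, so it matches the paper's (implicit) approach.
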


\begin{proof}
Left as an exercise for the reader.
\end{proof}

\begin{lemma}[Behavior of converging loopback automata]
\label{lem:loopback-star}
Let $\aut{A}$ be a converging loopback automaton and $s$ be the proper converging power series recognized by its unroll.  Then $\behav{\aut{A}}=s^*$.
\end{lemma}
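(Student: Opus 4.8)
The plan is to compute the coefficient $\behav{\aut{A}}(w)$ directly for each finite word $w$ and match it against $s^*(w)$, exploiting the decomposition of successful paths into single-trip paths. Recall that $\behav{\aut{A}}(w)$ is the sum over all successful paths for $w$, each weighted by the product of its transition weights together with its initial and final state weights. Since $\aut{A}$ is a loopback automaton, its only initial and only final state is the single loopback state, with initial and final weight $1$; hence every successful path begins and ends on the loopback state, the boundary weights contribute a factor of $1$, and $\behav{\aut{A}}(w)$ is just the sum of transition-weight products over all paths from the loopback state back to itself.

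The central step is to observe that any such path is completely determined by the number $k$ of trips it makes together with the sequence of single-trip subpaths comprising it. A successful path making $k$ trips induces a factorization $w = w_1 w_2 \cdots w_k$ in which $w_i$ is the (necessarily non-empty) infix consumed during the $i$-th trip, and the weight of the whole path is the product of the weights of its $k$ single-trip subpaths. Conversely, every factorization of $w$ into non-empty pieces together with a choice of single-trip path for each piece assembles into a unique successful path. Grouping the sum over successful paths first by $k$ and then by the factorization, and finally summing single-trip weights within each piece, yields
$$\behav{\aut{A}}(w) = \sum_{k=0}^{\infty}\ \sum_{w_1 w_2 \cdots w_k = w}\ \prod_{i=1}^{k}\paren{\text{sum of single-trip weights for } w_i},$$
and by Lemma \ref{lem:unroll-is-single-circuit} the inner parenthesized quantity equals $s(w_i)$, the coefficient of the unroll on $w_i$.

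It then remains to recognize the right-hand side as $s^*$. Because the unroll is a normalized automaton it rejects the empty word (Lemma \ref{lem:normalized-rejects-empty-string}), so $s$ is proper and $s(\emptystring)=0$; this lets me drop the non-emptiness restriction on the factorization pieces, since any factor equal to $\emptystring$ annihilates its term. The inner double sum is then exactly the definition of $s^k(w)$, and summing over $k$ gives $\sum_{k=0}^\infty s^k(w)=s^*(w)$, which is well-defined precisely because $s$ is proper. The $k=0$ term contributes only at $w=\emptystring$, where the unique zero-trip path (the bare loopback state) has weight $1$, matching $s^0(\emptystring)=1$.

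I expect the main obstacle to be making the path-decomposition correspondence fully rigorous: verifying that the map between successful paths and pairs consisting of a factorization and a tuple of single-trip paths is a genuine bijection that multiplies weights correctly, including the degenerate endpoints (the empty word and the zero-trip path). Once this bookkeeping is in place, the identification of the resulting expression with $s^*$ via Lemma \ref{lem:unroll-is-single-circuit} and the definitions of multiplication and of the $\,^*$ operator is routine.
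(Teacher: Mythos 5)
Your proof is correct and follows essentially the same route as the paper's: partition the successful paths by the number of trips, subdivide by the factorization of $w$ induced by visits to the loopback state, apply Lemma \ref{lem:unroll-is-single-circuit} to identify each factor with $s(w_i)$, and recognize the result as $s^*(w)$. The only differences are cosmetic --- the paper handles $|w|=0$ separately and restricts to non-empty factors, while you fold the empty word into the $k=0$ term and use properness of $s$ to drop the non-emptiness restriction.
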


\begin{proof}
Let $w\in A^*$.  If $|w|=0$ then $\aut{A}(w)=s^*(w)=1$, so assume that $|w|>0$.  The set of all paths taken by $w$ that start and end at the loopback state can be partitioned into subsets based on the number of trips that they take.  Pick one of these subsets of paths --- say, the one with the paths that take $N$ trips for arbitrary $0 \le N \le |w|$ --- and then observe that this subset can be further subdivided into subsubsets such that every path in the subsubset visits the loopback state at exactly the same times, which means that we can express the sum over this subsubset as a product of factors where each factor is a sum over single-trip paths.  By Lemma \ref{lem:unroll-is-single-circuit} we conclude that each of these factors is equal to the weight of the corresponding substring in $s$, and therefore
$$\aut{A}(w) = \delta_{0|w|} + \sum_{N=1}^{|w|}\,\, \sum_{v_1\dots v_N=w}\,\,\prod_{k=1}^N s(v_k) = s^*(w).$$
\end{proof}

Now we are ready to prove our Proposition relating power series recognized by loopback automata and the $\,^*$ of rational converging power series.

\begin{proof}[Proof of Proposition \ref{prop:loopback-equals-rational-star}]
First assume that we have a converging loopback automaton $\aut{A}$.  Let $\aut{A}'$ be the unroll of this automaton.  Applying Kleene's Theorem for converging automata (Theorem \ref{thm:kleene-converging}) to $\aut{A}'$, we conclude that there exists a rational converging power series $s$ that is recognized by $\aut{A}'$, and because $\aut{A}'$ is normalized we know from Lemma \ref{lem:normalized-rejects-empty-string} that $s$ is proper; applying Lemma \ref{lem:loopback-star} we conclude that $\behav{\aut{A}}=s^*$ where $s$ is rational.

Now assume that $s$ is a proper rational converging power series.  Applying Corollary \ref{cor:kleen-normalized} (Kleene's Theorem plus normalization) we see that there exists a normalized automaton $\aut{A}'$ that recognizes $s$.  Let $\aut{A}$ be the roll of $\aut{A}'$;  applying Lemma \ref{lem:loopback-star} we conclude that $\behav{\aut{A}}=s^*$, and so we are done.
\end{proof}

There is an analog of Proposition \ref{prop:loopback-equals-rational-star} for the $\,^\omega$ operation.

\begin{proposition}[Diverging loopback automata recognize the $\,^\omega$ of rational power series]
\label{prop:loopback-equals-rational-omega}
The set of diverging power series recognized by a diverging loopback automaton is equal to $\{s^\omega:s\in\ratnoempty\}$.
\end{proposition}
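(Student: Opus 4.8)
The plan is to mirror the proof of Proposition \ref{prop:loopback-equals-rational-star}, exploiting the fact that a diverging loopback automaton and its converging counterpart share the same tuple, so that their path structure is identical and only the activation condition separates their behaviors. The crucial observation is that for a loopback automaton $\aut{A}$ with loopback state $1$, the initial and final vectors satisfy $\ofaut{I}{A}_i=\ofaut{F}{A}_i=\delta_{1i}$, so the behavior collapses to the single $(1,1)$ entry:
$$\aut{A}(w,n) = V^{\aut{A}}\paren{w,\,\prod_{i=0}^{n-1}\Mofautwithsup{A}{\charat{w}{i}}}_{11}.$$
By the sum-over-paths interpretation of the transition-matrix product, the entry $\paren{\prod_{i=0}^{n-1}\Mofautwithsup{A}{\charat{w}{i}}}_{11}$ is exactly the sum of all paths for $\slice{w}{0}{n}$ that start and end on the loopback state, which is precisely $\tilde{\aut{A}}(\slice{w}{0}{n})$.

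First I would establish the forward inclusion. Given a diverging loopback automaton $\aut{A}$, I form its converging counterpart $\tilde{\aut{A}}$ (same tuple), unroll it to obtain a normalized automaton, and apply Theorem \ref{thm:kleene-converging} together with Lemma \ref{lem:normalized-rejects-empty-string} to extract a proper rational converging power series $s$ recognized by the unroll. Lemma \ref{lem:loopback-star} then gives $\tilde{\aut{A}}(v)=s^*(v)$ for every finite word $v$, hence $\paren{\prod_{i=0}^{n-1}\Mofautwithsup{A}{\charat{w}{i}}}_{11}=s^*(\slice{w}{0}{n})$.

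The heart of the argument is matching the activation condition to the prefix predicate $\rho^\omega$. By definition, $w$ activates $(1,1)$ exactly when for every $n_0$ there is some $n\ge n_0$ with the sum of $(1,1)$-paths for $\slice{w}{0}{n}$ nonzero, i.e. with $s^*(\slice{w}{0}{n})\ne 0$; but this is precisely the condition $\allprefixes{w}{s^*}=1$. Combining, $V^{\aut{A}}(w,x)_{11}$ equals $x_{11}$ when $\allprefixes{w}{s^*}=1$ and $0$ otherwise, so
$$\aut{A}(w,n) = s^*(\slice{w}{0}{n})\cdot\allprefixes{w}{s^*} = s^\omega(w,n),$$
giving $\behav{\aut{A}}=s^\omega$. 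For the reverse inclusion, given a proper rational converging $s$, Proposition \ref{prop:loopback-equals-rational-star} supplies a converging loopback automaton recognizing $s^*$; reinterpreting its tuple as a diverging loopback automaton $\aut{A}$ (so that $\tilde{\aut{A}}$ is the given converging automaton) and running the identical computation yields $\behav{\aut{A}}=s^\omega$.

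I expect the main obstacle to be the careful verification that the activation condition on $(1,1)$ coincides with $\allprefixes{w}{s^*}$ rather than with a condition on single-trip paths or on $s$ itself; the resolution is that the sum of all $(1,1)$-paths for a prefix, taken over every possible trip count, is $s^*$ of that prefix and not $s$, so the predicate is governed by $s^*$ exactly as the definition of $^\omega$ demands.
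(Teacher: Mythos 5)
Your proposal is correct and follows essentially the same route as the paper: the paper factors the argument through Proposition \ref{prop:loopback-equals-rational-star} (converging counterpart recognizes $s^*$) and a separate Lemma \ref{lem:loopback-star-to-omega-conversion} whose content --- that activation of the loopback pair $(1,1)$ coincides with $\allprefixes{w}{s^*}=1$, so the behavior is $s^*(\slice{w}{0}{n})\cdot\allprefixes{w}{s^*}=s^\omega(w,n)$ --- is exactly the activation-matching step you carry out inline. The only difference is presentational: you re-derive that lemma directly from the $(1,1)$-entry formula instead of citing it, and likewise inline the unroll-plus-Kleene argument rather than invoking Proposition \ref{prop:loopback-equals-rational-star} by name in the forward direction.
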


There again will be a preliminary result that will be useful for proving this Proposition.

\begin{lemma}[Behavior of diverging loopback automata]
\label{lem:loopback-star-to-omega-conversion}
Let $\aut{A}$ be a diverging loopback automaton such that its converging counterpart, $\tilde{\aut{A}}$, recognizes the power series $s^*$.  Then $\behav{\aut{A}}=s^\omega$.
\end{lemma}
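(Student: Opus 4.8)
The plan is to exploit the extreme simplicity of the loopback structure to reduce $\behav{\aut{A}}$ to a single scalar and match it against $s^\omega$. Since $\aut{A}$ is a loopback automaton, $\ofaut{I}{A}_i=\ofaut{F}{A}_i=\delta_{1i}$ for the loopback state $1$, so the only candidate pair of initial and final states is $(1,1)$ and the behavior collapses to a single matrix entry,
$$\aut{A}(w,n) = V^{\aut{A}}\paren{w,\,\,\prod_{i=0}^{n-1}\Mofautwithsup{A}{\charat{w}{i}}}_{11}.$$
The first step is to identify the relevant scalars: the sum of all paths for the finite word $\slice{w}{0}{n}$ starting and ending on the loopback state is exactly $\paren{\prod_{i=0}^{n-1}\Mofautwithsup{A}{\charat{w}{i}}}_{11}$, which, since $\ofaut{I}{A}$ and $\ofaut{F}{A}$ both select this entry, equals $\tilde{\aut{A}}(\slice{w}{0}{n})$; and by hypothesis $\tilde{\aut{A}}$ recognizes $s^*$, so this value is precisely $s^*(\slice{w}{0}{n})$.

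The second, and most delicate, step is to match the activation condition against the factor $\allprefixes{w}{s^*}$ appearing in $s^\omega(w,n)=s^*(\slice{w}{0}{n})\cdot\allprefixes{w}{s^*}$. By definition, $w$ activates $(1,1)$ iff for every $n_0$ there exists $n\ge n_0$ such that the sum of paths for $\slice{w}{0}{n}$ from $1$ to $1$ is nonzero; by the previous step that sum is $s^*(\slice{w}{0}{n})$, so the activation condition for $(1,1)$ is verbatim the defining condition for $\allprefixes{w}{s^*}=1$. Hence $(1,1)$ is activated by $w$ if and only if $\allprefixes{w}{s^*}=1$.

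The final step is a case split on activation. If $(1,1)$ is activated then $V^{\aut{A}}(w,x)_{11}=x_{11}$, so $\aut{A}(w,n)=s^*(\slice{w}{0}{n})$, while $\allprefixes{w}{s^*}=1$ yields $s^\omega(w,n)=s^*(\slice{w}{0}{n})$. If $(1,1)$ is not activated then $V^{\aut{A}}(w,x)_{11}=0$, so $\aut{A}(w,n)=0$, while $\allprefixes{w}{s^*}=0$ yields $s^\omega(w,n)=0$. In either case $\aut{A}(w,n)=s^\omega(w,n)$ for all $w$ and $n$, so $\behav{\aut{A}}=s^\omega$.

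The main obstacle is not any hard computation but making the second step airtight: the activation condition and $\allprefixes{\cdot}{\cdot}$ are phrased in different vocabularies (weighted paths through the automaton versus a converging power series), and the crux is recognizing that the loopback structure forces both to speak about the same scalar $s^*(\slice{w}{0}{n})$ at every prefix length.
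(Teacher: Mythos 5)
Your proof is correct and follows essentially the same route as the paper's: identify the path sum from the loopback state to itself with $s^*(\slice{w}{0}{n})$, observe that the activation condition for the pair $(1,1)$ is verbatim the defining condition of $\allprefixes{w}{s^*}$, and conclude by a case split on activation. The only cosmetic difference is that the paper invokes Lemma \ref{lem:diverging-behaves-same-way-as-converging-counterpart} for the activated case, whereas you rederive that fact inline from the loopback structure; the substance is identical.
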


\begin{proof}
Let $w$ be some infinite word.  There are two cases:
\begin{enumerate}
\item If $w$ activates the loopback state, then, because it is the sole initial and final state, by Lemma \ref{lem:diverging-behaves-same-way-as-converging-counterpart} we have that $\aut{A}(w,n)=\tilde{\aut{A}}(\slice{w}{0}{n})=s^*(\slice{w}{0}{n})$.    Furthermore, the fact that $w$ activates the loopback state implies by definition that for every $n_0\in\nats$ there exists $n\ge n_0$ such that $\aut{A}(w,n)=s^*(\slice{w}{0}{n})\ne 0$, which means that $\allprefixes{w}{s^*}=1$ and so $s^\omega(w,n)=s^*(\slice{w}{0}{n})=\aut{A}(w,n)$ for all $n$.
\item If $w$ does not activate the loopback state, then $V^{\aut{A}}(w,x)_{ij}=0$ for every matrix $x$, initial state $i$, and final state $j$, and so $\aut{A}(w,n)=0$.  Furthermore, the fact that $w$ does not activate the loopback state implies by definition that there exists some $n_0$ such that for all $n\ge 0$ the sum of all successful paths for $\slice{w}{0}{n}$ is zero, so for all $n\ge n_0$ we also have that $\tilde{A}(\slice{w}{0}{n})=s^*(\slice{w}{0}{n})=0$, and therefore $\allprefixes{w}{s^*}=0$, and so for all $n\in\nats$ we have that $\aut{A}(w,n)=s^\omega(w,n)=0$.
\end{enumerate}
Thus we have shown that for all $w$ and all $n$, $\aut{A}(w,n)=s^\omega(w,n)$ which directly implies that $\behav{\aut{A}}=s^\omega$, and we are done.
\end{proof}

Now we have what we need to prove Proposition \ref{prop:loopback-equals-rational-omega}, which we recall equates the behavior of diverging loopback automata and the $^\omega$ operation applied to rational converging power series.

\begin{proof}[Proof of Proposition \ref{prop:loopback-equals-rational-omega}]
First let $\aut{A}$ be a diverging loopback automaton and $\tilde{\aut{A}}$ be its converging counterpart.  By Proposition \ref{prop:loopback-equals-rational-star} we know that $\behav{\tilde{\aut{A}}}=s^*$ for some proper rational converging power series $s$.  Applying Lemma \ref{lem:loopback-star-to-omega-conversion} we conclude that $\behav{\aut{A}}=s^\omega$.

Now let $s$ be a proper rational converging power series.  By Proposition \ref{prop:loopback-equals-rational-star} there exists a converging loopback automaton $\tilde{\aut{A}}$ that recognizes $s^*$;  let $\aut{A}$ be the diverging counterpart of $\tilde{\aut{A}}$.  Then applying Lemma \ref{lem:loopback-star-to-omega-conversion} we conclude that $\behav{\aut{A}}=s^\omega$.
\end{proof}

\subsection{Loopback Automata With Preludes}

There is another specialized kind of automaton that will prove useful:  We say that an automaton is \emph{loopback with prelude} if it has single initial state with no incoming edges and a (separate) single final/loopback state, both of with have weight 1;  we say that an automaton is a \emph{loopback without prelude} if it is an ordinary loopback automaton.  We say that an automaton is loopback \emph{with or without prelude} if it is either a loopback automaton or a loopback automaton with prelude.

One of the advantages of these categories is that we can express any automaton in terms of a weighted sum of them, as the following Lemma shows.

\begin{lemma}[Decomposition into loopback automata with or without prelude]
\label{lem:decompose-into-loopback-with-or-without-prelude}
For all automata $\aut{A}$ there exists a decomposition into a weighted sum of automata that are all loopback with or without prelude, i.e. a tuple $(K,$ $\{l_k,r_k\}_{k\in K},$ $\{A_k\}_{k\in K})$ such that $\behav{\aut{A}}=\sum_{k\in K} l_k \behav{\aut{A}_k} r_k$ where $K$ is an index set, $\{l_k,r_k\}_{k\in K}$ is an indexed set of coefficients in the underlying semiring $S$, and $\{\aut{A}_k\}_{k\in K}$ is an indexed set of loopback automata each of which is with or without prelude.
\end{lemma}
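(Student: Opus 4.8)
The plan is to decompose the behavior of $\aut{A}$ by separating the contribution of each pair consisting of an initial state $i$ and a final state $f$, exploiting the fact --- flagged deliberately when the activation condition was set up --- that the weighting $V^{\aut{A}}$ is indexed by such pairs. Recall that the behavior can be written as the path sum
$$\aut{A}(w,n) = \sum_{q_0,\dots,q_n\in \ofaut{Q}{A}}\ofaut{I}{A}_{q_0} \Mofautwithsup{A}{\charat{w}{0}}_{q_0q_1}\cdots \Mofautwithsup{A}{\charat{w}{n-1}}_{q_{n-1}q_n} \ofaut{F}{A}_{q_n}\cdot v(w,q_0,q_n).$$
Grouping these terms according to the endpoints $(q_0,q_n)=(i,f)$, and using that $v(w,i,f)\in\{0,1\}$ commutes with every element of $S$ (so it may be pulled past the scalars $\ofaut{I}{A}_i$ and $\ofaut{F}{A}_f$), expresses $\aut{A}(w,n)$ as $\sum_{i,f}\ofaut{I}{A}_i\,\aut{A}_{i,f}(w,n)\,\ofaut{F}{A}_f$, where $\aut{A}_{i,f}$ denotes the automaton with the same state set and transition tensor as $\aut{A}$ but whose initial distribution is the indicator of $i$ and whose final distribution is the indicator of $f$.

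The crucial point making this valid is that the activation of $(i,f)$, and hence the value of $v(w,i,f)$, depends purely on the sum of paths between $i$ and $f$ through the transition tensor; since $\aut{A}_{i,f}$ shares that tensor with $\aut{A}$, the activation of $(i,f)$ is unchanged, so $\aut{A}_{i,f}(w,n)$ is exactly the grouped inner sum. Invoking Lemma \ref{lem:behavior-is-linear} (extended from two summands to a finite index set by an easy induction) then yields $\behav{\aut{A}}=\sum_{i,f}\ofaut{I}{A}_i\behav{\aut{A}_{i,f}}\ofaut{F}{A}_f$. Assuming as usual that $\ofaut{Q}{A}$ is finite, I would take $K$ to be the (finite) set of pairs $(i,f)$ ranging over initial states $i$ and final states $f$, with $l_{(i,f)}:=\ofaut{I}{A}_i$ and $r_{(i,f)}:=\ofaut{F}{A}_f$; it then remains only to realize each $\aut{A}_{i,f}$ as an automaton that is loopback with or without prelude.

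When $i=f$ the single state $i$ carries initial and final weight $1$ and is the only initial and only final state, so $\aut{A}_{i,f}$ is, up to relabeling, a loopback automaton, i.e. loopback without prelude. When $i\ne f$ the state $i$ is the unique initial state but may still have incoming edges, so $\aut{A}_{i,f}$ is not yet loopback with prelude; I would repair this with a half-normalization on the initial side only, producing an automaton $\aut{B}$: adjoin a fresh state $1'$ with initial weight $1$ (resetting $i$ to initial weight $0$), give $1'$ no incoming edges, copy every outgoing edge of $i$ by setting $\ofaut{M}{B}_{1'j}:=\ofaut{M}{A}_{ij}$, and leave $f$ together with all of its incoming and outgoing edges untouched so that it remains the single final/loopback state. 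Because $1'$ has no incoming edges, every path from $1'$ to $f$ in $\aut{B}$ is in weight-preserving bijection (relabeling only the first vertex) with the paths from $i$ to $f$ in $\aut{A}_{i,f}$, for every length and for the empty prefix as well; this gives equality of the relevant path sums and, since activation again depends only on those path sums, the equality $v(w,1',f)=v(w,i,f)$. Hence $\behav{\aut{B}}=\behav{\aut{A}_{i,f}}$ and $\aut{B}$ is loopback with prelude, completing the decomposition.

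The genuinely substantive step is tracking the \emph{activation condition}, rather than merely the path weights, through both the pair-splitting and the half-normalization; the routine path-weight bookkeeping is standard, but one must verify at each stage that the pair being activated is unchanged --- which is precisely why the activation condition was defined per initial/final pair and in terms of path sums through the transition tensor alone. Everything else (the commutation of $v\in\{0,1\}$ with scalars, the induction extending Lemma \ref{lem:behavior-is-linear}, and the path bijection) is straightforward.
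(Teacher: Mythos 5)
Your proposal is correct and takes essentially the same route as the paper: both decompose $\behav{\aut{A}}$ over pairs of initial/final states (relying precisely on the fact that the activation condition is defined per pair and ignores the weights in $\ofaut{I}{A}$ and $\ofaut{F}{A}$), replace the initial and final distributions by indicators, and, in the $i\ne f$ case, adjoin a fresh initial state with no incoming edges that copies the outgoing edges of $i$. Your weight-preserving path bijection and the check that activation survives the half-normalization are exactly the verification the paper leaves as an exercise for the reader.
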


\begin{proof}
Let $\aut{A}$ be an automaton.  For all states $p$ and $q$ let $\aut{A}_{pq}$ be defined as follows:
\begin{itemize}
\item 
If $p=q$, then $\defsameofaut{Q}{A_{\textit{pq}}}{A}$, $\ofaut{I}{A_{\textit{pq}}}_i:=\ofaut{F}{A_{\textit{pq}}}_i:=\delta_{iq}$, and $\defsameofaut{M}{A_{\textit{pq}}}{A}$.
\item
If $p\ne q$, then $\ofaut{Q}{A_{\textit{pq}}}:=\ofaut{Q}{A}\cup\{1\}$, $\ofaut{I}{A_{\textit{pq}}}_i:=\delta_{i1}$, $\ofaut{F}{A_{\textit{pq}}}_i:=\delta_{iq}$, and
$$\ofaut{M}{A_{\textit{pq}}}:=
\begin{cases}
\ofaut{M}{A}_{ij} & i,j\in\ofaut{Q}{A} \\
\ofaut{M}{A}_{pj} & j\in\ofaut{Q}{A}, i = 1
\end{cases}
$$
\end{itemize}
If $p=q$ then observe that $\aut{A}_{pq}$ is a loopback automaton without prelude, and if $p\ne q$ then observe that $\aut{A}_{pq}$ is a loopback automaton with prelude, and in both cases observe that $\aut{A}_{pq}$ recognizes the same power series as that recognized by $\aut{A}$ with its sole initial state set to $p$ and its sole final state to $q$, with both weights set to 1.  (Proving that the manipulations in the $p\ne q$ case preserved this property is left as an exercise for the reader.)  Finally, observe that because the definition of $\behav{\aut{A}}$ is homomorphic with respect to the elements of $\ofaut{I}{A}$ and $\ofaut{F}{A}$ (and the activation condition enforced by the function $V^{\aut{A}}$ does not break
this\footnote{Note, however, that it \emph{would} have broken the property of being homomorphic if the activation condition had been defined only in terms of the final state and not in terms of pairs of initial and final states;  to see why, consider the case of an automaton $\aut{A}$ with two initial states, $i_1$ and $i_2$, and one final state, $f$, such that $(i_1,f)$ was activated but $(i_2,f)$ was not. Then construct an automaton $\aut{B}$ by making two copies of $\aut{A}$ and making only $i_1$ be initial in the first and only $i_2$ be initial in the second.  Observe that if the activation condition only applied to the final state rather than to pairs of initial and final states then these two automata would not be equivalent because in automaton $\aut{A}$ the paths from $i_2$ to $f$ would have contributed to the sum (as the paths from $i_1$ would have been sufficient to activate $f$ for all initial states) whereas in $\aut{B}$ they would not have.}
because it ignores the actual values of $\ofaut{I}{A}$ and $\ofaut{F}{A}$) we therefore have that $\behav{\aut{A}}=\sum_{i,j\in Q} \ofaut{I}{A}_i \behav{\aut{A}_{ij}}\ofaut{F}{A}_j=\sum_{k\in K} l_k \behav{\aut{A}_k} r_k$ where $K=Q^2$, $l_{(i,j)} := \ofaut{I}{A}_i$, and $r_{(i,j)} := \ofaut{F}{A}_j$.
\end{proof}

There are two important operations that we shall now define that allow preludes to be attached to and detached from loopback automata.  Given normalized automata $\aut{X}$ and $\aut{Y}$, $\aut{X\star Y}$ is the \emph{conjoin} of $\aut{X}$ and $\aut{Y}$, defined as follows:  First, let $\aut{B}$ denote the roll of $\aut{Y}$.  Furthermore, let $1/2$ be the initial/final state of $\aut{X}$, and $3$ be the loopback state of $\aut{B}$.  Then
\begin{align*}
\ofaut{Q}{X\star Y} &:= Q^{\aut{X}}/\{2\}\cup Q^{\aut{B}}, \\
\ofaut{I}{X\star Y}_i &:= \delta_{i1}, \\
\ofaut{F}{X\star Y}_i &:=\delta_{i3}, \,\,\text{and} \\
\ofaut{M}{X\star Y}_{ij} &:=
\begin{cases}
\ofaut{M}{X}_{ij} & i,j \in \ofaut{Q}{X}/\{2\} \\
\ofaut{M}{X}_{i2} & i \in \ofaut{Q}{X}/\{2\}, j = 3 \\
\ofaut{M}{B}_{ij} & i,j \in \ofaut{Q}{B} \\
0 & \otherwise \\
\end{cases}.
\end{align*}
That is, we take the direct sum of $\aut{X}$ and the roll of $\aut{Y}$, merge the final state of $\aut{X}$ with the loopback state of the rolled $\aut{Y}$ (including all edges), and set the initial weight of the final/loopback state to zero.

Given a loopback with prelude automaton, $\aut{A}$, the \emph{disjoin} of $\aut{A}$ is defined to be the pair of automata $(\aut{X},\aut{Y})$ defined as follows:  Let 1 be the initial state of $\aut{A}$ and 2 be the final/loopback state.  Then $\aut{X}$ is the normalized automaton given by
\begin{align*}
\ofaut{Q}{X} &:=Q^{\aut{A}}\\
\ofaut{I}{X}_i &:= \ofaut{I}{A}_i =\delta_{i1}\\
\ofaut{F}{X}_i &:= \ofaut{F}{A}_i =\delta_{i2}\\
\ofaut{M}{X}_{ij}&:=
\begin{cases}
0 & i = 2 \\
\ofaut{M}{A}_{ij} & \otherwise
\end{cases},
\end{align*}
that is, $\aut{X}$ is the result of deleting all edges that start on the final/loopback state of $\aut{A}$, and $\aut{Y}$ is the unroll of $\aut{B}$, which is given by $\ofaut{Q}{B} := \ofaut{Q}{A}$, $\ofaut{I}{B}_i := \ofaut{F}{B} :=\delta_{i2}$, and $\ofaut{M}{B}_{ij} :=\ofaut{M}{A}_{ij}$ --- that is, $\aut{B}$ is the result of making the final/loopback state also be the sole initial state.

\begin{lemma}[Conjoin is the inverse of disjoin]
\label{lem:attaching-and-detaching-are-inverses}
Given an automaton $\aut{A}$ with prelude, and letting $(\aut{X},\aut{Y})$ be the disjoin of $\aut{A}$, we have that $\behav{\aut{X}\star\aut{Y}}=\behav{A}$.
\end{lemma}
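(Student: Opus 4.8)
The plan is to prove the identity by first simplifying the conjoined automaton and then exhibiting a weight-preserving bijection on paths. To begin, I would note that by construction of the disjoin, $\aut{Y}$ is the unroll of the loopback automaton $\aut{B}$ whose sole initial/final state is state $2$ of $\aut{A}$. By Lemma \ref{lem:rolling-and-unrolling-are-inverse-operations}, the roll of $\aut{Y}$ appearing in the definition of the conjoin is therefore just $\aut{B}$ again (up to relabeling of states), with its loopback state --- the ``state $3$'' of the conjoin --- identified with state $2$ of $\aut{A}$. Substituting this into the definition of $\aut{X}\star\aut{Y}$, I would write the resulting automaton out explicitly: its initial state is state $1$ of $\aut{X}$, its final state is the loopback state $3$, the $\aut{X}$-block carries exactly the edges of $\aut{A}$ that do not start on state $2$, the edges of $\aut{X}$ that ended on state $2$ are redirected into state $3$, the $\aut{B}$-block carries all the edges of $\aut{A}$, and --- crucially --- there are no edges leading from the $\aut{B}$-block back into the $\aut{X}$-block.

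Next I would establish the key combinatorial fact: for every finite prefix $\slice{w}{0}{n}$, the successful paths of $\slice{w}{0}{n}$ in $\aut{A}$ (which run from state $1$ to state $2$) are in weight-preserving bijection with the successful paths of $\slice{w}{0}{n}$ in $\aut{X}\star\aut{Y}$ (which run from state $1$ to state $3$). The bijection is given by cutting each $\aut{A}$-path at the first index at which it visits state $2$: the initial segment, up to and including the edge arriving at that first visit, uses no edge starting on state $2$, so it is a legal path in the $\aut{X}$-block that terminates by crossing into state $3$; the remaining segment starts and ends on state $2$ and is therefore a legal path in the $\aut{B}$-block from state $3$ back to state $3$. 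That this map is well-defined and invertible follows from the two structural facts noted above --- state $1$ of $\aut{A}$ has no incoming edges, so the first visit to state $2$ genuinely marks the first time the prelude is left, and the conjoin has no edges from the $\aut{B}$-block back into the $\aut{X}$-block, so in $\aut{X}\star\aut{Y}$ the first entry into state $3$ canonically separates the two segments. Since every edge weight, as well as every initial and final weight (all equal to $1$), is carried over unchanged, corresponding paths have equal weight.

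Summing over all paths, the bijection yields $\tilde{\aut{A}}(\slice{w}{0}{n}) = \widetilde{\aut{X}\star\aut{Y}}(\slice{w}{0}{n})$ for every $n$; that is, the converging counterparts agree on all finite prefixes of $w$. Because $\aut{A}$ and $\aut{X}\star\aut{Y}$ each possess a single initial/final pair (namely $(1,2)$ and $(1,3)$), and the activation condition for that pair is defined purely in terms of these prefix sums being nonzero infinitely often, $w$ activates $(1,2)$ in $\aut{A}$ if and only if $w$ activates $(1,3)$ in $\aut{X}\star\aut{Y}$. Feeding both observations into the definition of the behavior --- the factor supplied by $V$ is identical in the two automata, and the matrix entry it multiplies is the common prefix sum --- gives $\aut{A}(w,n) = (\aut{X}\star\aut{Y})(w,n)$ for all $w$ and all $n$, which is exactly $\behav{\aut{X}\star\aut{Y}} = \behav{\aut{A}}$.

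I expect the main obstacle to be the bijection of the second paragraph: one must argue carefully that cutting at the first visit to state $2$ neither loses nor double-counts paths. This is precisely where the absence of incoming edges at state $1$ and the absence of $\aut{B}$-to-$\aut{X}$ edges do the real work, and because the conjoined automaton contains two separate copies of the prelude's interior states, this is the step most prone to an over-counting or off-by-one error if handled loosely.
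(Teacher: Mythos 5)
Your proof is correct and takes essentially the same route as the paper's: both exploit the fact that $\aut{X}\star\aut{Y}$ is two copies of $\aut{A}$ glued at the final/loopback state, and both match successful paths one-to-one by cutting at the first visit to that state, with the absence of edges from the second copy back into the first guaranteeing the correspondence is bijective. Your write-up is somewhat more explicit than the paper's about weight preservation and about why the activation condition transfers (the paper concludes via an isomorphism after merging the two copies), but the underlying argument is the same.
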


\begin{proof}
Let $\aut{Z} := \aut{X}\star\aut{Y}$.  First recall that by Lemma \ref{lem:rolling-and-unrolling-are-inverse-operations}, rolling and unrolling are inverse operations (modulo possibly reordering the states, which is irrelevant here).  Thus, we can let $\aut{B}$ be the roll of $\aut{Y}$ and perform our analysis in terms of $\aut{X}$ and $\aut{B}$.  Observe that the only states touched by disjoining and conjoining are the initial and final/loopback states.  Furthermore note that disjoining does not delete any states it so essentially creates two copies of $\aut{A}$ with the only difference being that the first copy $(\aut{X})$ deleted the edges outgoing from the final state and the second copy $(\aut{B})$ has the final state also be the initial state.  The act of conjoining takes a direct sum of $\aut{X}$ and $\aut{B}$, and merges the final state of $\aut{X}$ with the loopback state of $\aut{B}$, which effectively undoes the edge deletion in the construction of $\aut{X}$ in the sense that the same edges exist, although with ends in $\aut{B}$ instead of $\aut{X}$.  We thus see that the result of conjoining the disjoin of $\aut{A}$ is an automaton with two copies of $\aut{A}$, with the two separate final/loopback states merged into a single final/loopback state and all outgoing edges for this state in the first copy deleted.  Thus, every state in $\aut{Z}$ can be uniquely mapped into a state in $\aut{A}$ by erasing the information about which copy it came from, and this map has the property that every edge between two states in $\aut{Z}$ corresponds to an edge between the two corresponding states in $\aut{A}$ (though they may have come from different copies).  We thus see that every successful path in $\aut{Z}$ can be uniquely mapped to an equivalent path in $\aut{A}$.

The opposite is not necessarily true, however, as in principle for a particular successful path in $\aut{A}$ there could be several equivalent paths in $\aut{Z}$, where each successful path is distinguished by which copy of $\aut{A}$ it was in at a particular step.  Fortunately, we can eliminate this possibility by noting that all successful paths in $\aut{Z}$ must pass through the final/loopback state as this is the sole final state, and furthermore the very first time that a path lands on the final/loopback state it immediately and irreversibly moves from the states in the first copy to the states in the second copy as all of the outgoing edges for the final/loopback state end on states in the second copy and there are no other states that connect the two copies.  Thus, we have shown that every successful path in $\aut{A}$ is equivalent to exactly one successful path in $\aut{Z}$, and vice versa.  This is significant because it means that we can merge the two copies of $\aut{A}$ within $\aut{Z}$ --- i.e., by replacing each pair of equivalent states and their edges with a single state and set of edges, except for the final/loopback state which is already merged --- without changing its behavior, as for any word the successful paths will not be affected as the states and edges will be the same except for the fact that they will all be in a single copy of $\aut{A}$ rather than having an initial prelude take place in another copy of $\aut{A}$.  Because the merged automaton is exactly isomorphic to $\aut{A}$ we see that $\behav{\aut{A}}=\behav{\aut{Z}}$ and thus we are done.
\end{proof}

Loopback automata with preludes are useful because of the following fact:

\begin{lemma}[Behavior of conjoin is conjoin of behavior]
\label{lem:attach-to-form-xy}
Let $\aut{X}$ and $\aut{Y}$ be a normalized converging automaton.  Then $\behav{\aut{X}\star\aut{Y}}=x\star y$.
\end{lemma}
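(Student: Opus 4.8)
The plan is to reduce the claim to the already-established machinery for loopback automata. Write $\aut{Z} := \aut{X}\star\aut{Y}$ and let $x := \behav{\aut{X}}$ and $y := \behav{\aut{Y}}$; since $\aut{X}$ and $\aut{Y}$ are normalized they reject the empty word (Lemma~\ref{lem:normalized-rejects-empty-string}), so $x,y\in\PSAfinitenoempty$ and $x\star y$ is defined. I would first show that the converging counterpart $\tilde{\aut{Z}}$ recognizes $xy^*$, and then lift this to the diverging behavior by means of the activation condition, following the same two-case template as the proof of Lemma~\ref{lem:loopback-star-to-omega-conversion}.

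For the converging step, recall that by construction $\aut{Z}$ is the direct sum of $\aut{X}$ and $\aut{B}$ (the roll of $\aut{Y}$), with the final state $2$ of $\aut{X}$ merged into the loopback state $3$ of $\aut{B}$ and the initial weight of $3$ zeroed out. Since the unroll of $\aut{B}$ is $\aut{Y}$ (Lemma~\ref{lem:rolling-and-unrolling-are-inverse-operations}), Lemma~\ref{lem:loopback-star} gives $\behav{\tilde{\aut{B}}}=y^*$. I would then exploit two structural facts about $\aut{Z}$: the initial state $1$ has no incoming edges (as $\aut{X}$ is normalized), and the only edges leaving the $\aut{X}$-portion land on state $3$, so no edge runs from the $\aut{B}$-portion back into $\aut{X}$. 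Together these let me split every successful $1$-to-$3$ path uniquely at its \emph{first} arrival at state $3$: the prefix stays in the $\aut{X}$-portion and --- because state $2$ of $\aut{X}$ had no outgoing edges --- is in bijection with a successful $1$-to-$2$ path of $\aut{X}$, while the suffix stays in $\aut{B}$ and is a $3$-to-$3$ path. Summing over a fixed word and factoring these two independent segments would give
$$\tilde{\aut{Z}}(v)=\sum_{ab=v} x(a)\,y^*(b)=(xy^*)(v)$$
for every finite word $v$.

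For the diverging step, I would observe that $\aut{Z}$ has the single initial state $1$ and single final state $3$, so $\aut{Z}(w,n)=V^{\aut{Z}}\!\paren{w,\prod_{i=0}^{n-1}\Mofautwithsup{Z}{\charat{w}{i}}}_{13}$, which equals $\tilde{\aut{Z}}(\slice{w}{0}{n})=(xy^*)(\slice{w}{0}{n})$ when $(1,3)$ is activated by $w$ and equals $0$ otherwise. By definition $(1,3)$ is activated exactly when for every $n_0$ there is some $n\ge n_0$ with $(xy^*)(\slice{w}{0}{n})\ne 0$, i.e.\ precisely when $\allprefixes{w}{xy^*}=1$. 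A case comparison with $(x\star y)(w,n)=(xy^*)(\slice{w}{0}{n})\cdot\allprefixes{w}{xy^*}$ then matches $\aut{Z}(w,n)=(x\star y)(w,n)$ for all $w$ and all $n\in\nats$, yielding $\behav{\aut{Z}}=x\star y$.

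The hard part will be the converging path-decomposition: I must check that the first-arrival-at-$3$ split is a genuine bijection, that no successful path can re-enter the $\aut{X}$-portion once it reaches $3$, and that the prelude segments coincide exactly with the successful $1$-to-$2$ paths of $\aut{X}$ (this is precisely where the absence of outgoing edges from state $2$ is needed). Once that bijection and its weight-preserving factorization into $x(a)\,y^*(b)$ are nailed down, the remaining diverging step is mechanical, since it merely reruns the activation bookkeeping already carried out in Lemma~\ref{lem:loopback-star-to-omega-conversion}.
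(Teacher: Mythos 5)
Your proposal is correct and takes essentially the same route as the paper's own proof: both arguments split each successful path of $\aut{X}\star\aut{Y}$ at its first arrival at the final/loopback state, factor the resulting sum as $\sum_{ab=v}x(a)\,y^*(b)=(xy^*)(v)$, and then use activation of the pair $(1,3)$ to produce the $\allprefixes{w}{xy^*}$ factor. The only cosmetic difference is that you isolate the converging claim $\behav{\tilde{\aut{Z}}}=xy^*$ as a separate step (via Lemma~\ref{lem:rolling-and-unrolling-are-inverse-operations} and Lemma~\ref{lem:loopback-star}), whereas the paper carries out the same path computation inline on prefixes of the infinite word, citing Proposition~\ref{prop:loopback-equals-rational-star} for the $y^*$ factor.
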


\begin{proof}
By Kleene's Theorem \ref{thm:kleene-converging} and Lemma \ref{lem:normalized-rejects-empty-string} we know that there exist $x,y\in\ratnoempty$ such that $\behav{\aut{X}}=x$ and $\behav{\aut{Y}}=y$.  Let $\aut{Z}:=\aut{X}\star\aut{Y}$.

Let $w$ be an arbitrary infinite word and $n$ a positive non-zero integer (as we know that the final weight of the initial state is 0 and hence $\aut{Z}(w,0)=0$).  Observe that all successful paths in $\aut{Z}$ must land on the final/loopback state at some point, and consider the set of paths for which this occurs for the first time at step $k$ of the path where $0<k\le n$.  We can factor the sum over these paths into the product of the sum over all length $k$ paths from the initial state to the final/loopback state, and the sum over all paths of length $n-k$ looping through the final/loopback state.  Because a successful path cannot access any of the states from $\aut{Y}$ until it has landed on the final/loopback state for the first time, and because all of the states in $\aut{X}$ are present in $\aut{Z}$ and accessible from the initial state, with the exception of the final state which has effectively been replaced by the final/loopback state which has the same incoming edges as the final state in $\aut{X}$, we see that the sum in the first of the two factors is exactly equivalent to the sum over all successful paths in $\aut{X}$ for the substring $\slice{w}{0}{k}$, which is equal to $x(\slice{w}{0}{k})$.  Using similar reasoning we conclude that the second of the two factors is equivalent to a sum over all successful paths of length $n-k$ for the substring $\slice{w}{k}{n}$ in the roll of $\aut{Y}$ (as the construction of $\aut{X}\star\aut{Y}$ rolls $\aut{Y}$ before merging it with $\aut{X}$), and therefore by Proposition \ref{prop:loopback-equals-rational-star} this sum is equal to $y^*(\slice{w}{k}{n})$.  Summing over $k$ (and recalling that $x(\emptystring)=0$) we see that for all $n\in\nats$ the sum over all successful paths for $\slice{w}{0}{n}$ is equal to $$\sum_{ab=\slice{w}{0}{n}} x(a)\cdot y^*(b)=(xy^*)(\slice{w}{0}{n}).$$  Thus, we see that the initial and final/loopback states will be activated if and only if for every $n_0\in\nats$ there exists $n\ge n_0$ such that $(xy^*)(\slice{w}{0}{n})\ne 0$, and therefore if and only if $\allprefixes{w}{xy^*}=1$.  We thus have that
$$\paren{\aut{X\star Y}}(w,n) = (xy^*)(\slice{w}{0}{n})\cdot\allprefixes{w}{xy^*}=(x\star y)(w,n)$$
and so $\behav{\aut{X}\star\aut{Y}}=x\star y$.
\end{proof}

\begin{lemma}[Behavior of diverging loopback automata with prelude]
\label{lem:loopback-with-prelude-equals-xyomega}
The set of diverging power series recognized by bidiverging loopback with prelude automata is equal to $\{x\star y: x,y\in\ratnoempty\}$.
\end{lemma}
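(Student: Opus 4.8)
The plan is to establish the set equality by proving the two inclusions separately, relying almost entirely on the machinery already developed: Lemma~\ref{lem:attach-to-form-xy} (the behavior of a conjoin of normalized automata is the conjoin of their behaviors), Lemma~\ref{lem:attaching-and-detaching-are-inverses} (conjoin undoes disjoin), and Kleene's Theorem for converging automata together with its normalized refinement (Corollary~\ref{cor:kleen-normalized} and Lemma~\ref{lem:normalized-rejects-empty-string}). Since the substantive work has been packaged into those earlier results, each inclusion should reduce to a short chain of invocations.

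For the inclusion $\supseteq$, I would start from arbitrary $x,y\in\ratnoempty$. By Corollary~\ref{cor:kleen-normalized} there exist normalized converging automata $\aut{X}$ and $\aut{Y}$ recognizing $x$ and $y$ respectively. Forming the conjoin $\aut{X}\star\aut{Y}$ yields, by construction, an automaton with a single incoming-edge-free initial state (the initial state of $\aut{X}$) and a single final/loopback state (the loopback state of the rolled $\aut{Y}$), i.e.\ a diverging loopback-with-prelude automaton. Lemma~\ref{lem:attach-to-form-xy} then gives $\behav{\aut{X}\star\aut{Y}}=x\star y$ directly, so every element of $\{x\star y:x,y\in\ratnoempty\}$ is recognized by such an automaton.

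For the inclusion $\subseteq$, I would take an arbitrary diverging loopback-with-prelude automaton $\aut{A}$ and pass to its disjoin $(\aut{X},\aut{Y})$. By the definition of disjoin, $\aut{X}$ is normalized and $\aut{Y}$ is the unroll of a loopback automaton and hence also normalized, so both reject the empty word (Lemma~\ref{lem:normalized-rejects-empty-string}); by Kleene's Theorem~\ref{thm:kleene-converging} their converging behaviors $x:=\behav{\aut{X}}$ and $y:=\behav{\aut{Y}}$ lie in $\ratnoempty$. Lemma~\ref{lem:attaching-and-detaching-are-inverses} gives $\behav{\aut{X}\star\aut{Y}}=\behav{\aut{A}}$, while Lemma~\ref{lem:attach-to-form-xy} gives $\behav{\aut{X}\star\aut{Y}}=x\star y$; chaining these yields $\behav{\aut{A}}=x\star y\in\{x\star y:x,y\in\ratnoempty\}$, completing this inclusion.

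The genuinely hard content has already been absorbed into the earlier lemmas, so the only points requiring care here are bookkeeping ones. First, I must confirm that the structural operations behave as their types demand: the disjoin of a loopback-with-prelude automaton really does produce two normalized automata (so that Kleene's Theorem and Lemma~\ref{lem:attach-to-form-xy} apply), and the conjoin of two normalized automata really is loopback-with-prelude. Second, and most delicately, I must keep the converging and diverging interpretations straight throughout, since $\aut{X}$ and $\aut{Y}$ are read as converging automata when extracting $x,y\in\ratnoempty$, whereas the same tuple $\aut{X}\star\aut{Y}$ is read as a diverging automaton when its behavior is identified with the diverging series $x\star y$; Lemmas~\ref{lem:attaching-and-detaching-are-inverses} and~\ref{lem:attach-to-form-xy} are precisely the bridges that make this switch legitimate.
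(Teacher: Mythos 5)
Your proposal is correct and follows essentially the same route as the paper's own proof: both directions rest on exactly the same chain of results (Corollary~\ref{cor:kleen-normalized} and Lemma~\ref{lem:attach-to-form-xy} for one inclusion; disjoining, Lemma~\ref{lem:attaching-and-detaching-are-inverses}, Lemma~\ref{lem:attach-to-form-xy}, Lemma~\ref{lem:normalized-rejects-empty-string}, and Theorem~\ref{thm:kleene-converging} for the other). Your closing remarks about verifying the structural types of conjoin/disjoin and keeping the converging/diverging readings straight are sensible bookkeeping, but they add nothing beyond what the paper's argument already implicitly handles.
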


\begin{proof}
First, let $x$ and $y$ be proper rational converging power series.  By Corollary \ref{cor:kleen-normalized} we know that there exist normalized automata $\aut{X}$ and $\aut{Y}$ that recognize respectively $x$ and $y$.  By Lemma \ref{lem:attach-to-form-xy} we know that conjoining these two automata forms a diverging loopback with prelude automaton that recognizes $x\star y$.

Now let $\aut{A}$ be a diverging automaton with prelude and let $(\aut{B},\aut{C})$ be the disjoin of $\aut{A}$.  By Lemma \ref{lem:attaching-and-detaching-are-inverses} we know that $\aut{B}\star\aut{C}$ is an automaton with the same behavior as $\aut{A}$, and furthermore by Lemma \ref{lem:attach-to-form-xy} we know that this behavior is equal to $b\star c$ where $\behav{\aut{B}}=b$ and $\behav{\aut{C}}=c$ are converging power series that we know are proper because of Lemma \ref{lem:normalized-rejects-empty-string} (as $\aut{B}$ and $\aut{C}$ are normalized).  Finally, by Theorem \ref{thm:kleene-converging} we know that $b$ and $c$ are rational.
\end{proof}

\subsection{Proof of the Kleene Theorem}

We now have everything that we need to prove our Kleene Theorem.

\begin{proof}[Proof of Theorem \ref{thm:kleene-diverging}]
First assume that we are given a rational diverging power series $p$.  By Lemma \ref{lem:characteristic} we know that there exists a finite index sets $I$ and $J$ and indexed sequences $\{a_i,b_i\}_{i\in I}\subset S$, $\{x_i,y_i\}_{i\in I}\subset \Rat^*_{/\emptystring}(S,A)$, $\{c_j,d_j\}_{j\in J}\subset S$ and $\{z_j\}_{j\in J}\subset \Rat^*_{/\emptystring}(S,A)$ such that
$$p=\sum_{i\in I}a_i(x_i\star y_i)b_i + \sum_{j\in J} c_jz_j^\omega d_j.$$
For each $i$ we know by Lemma \ref{lem:loopback-with-prelude-equals-xyomega} that there exists an diverging automaton $\aut{Y}_i$ that recognizes $x_i \star y_i$, and for each $j$ we know by Proposition \ref{prop:loopback-equals-rational-omega} that there exists a diverging automaton $\aut{Z}_j$ that recognizes $z_j^\omega$.  Let
$$\aut{P}:=\sum_{i\in I}a_i \aut{Y}_i b_i + \sum_{j\in J} c_j\aut{Z}_j^\omega d_j,$$
and by Lemma \ref{lem:behavior-is-linear} we have that $\behav{\aut{P}}=p$.

Now assume that we are given an automaton $\aut{P}$.  By Lemma \ref{lem:decompose-into-loopback-with-or-without-prelude} we know that $\behav{\aut{P}}$ can be expressed as a weighted sum of the behaviors of automata that are all loopback with or without prelude.  Since by Proposition \ref{prop:loopback-equals-rational-omega} we have that loopback automata recognize power series of the form $z^\omega$ with $z\in\ratnoempty$, and since by Lemma \ref{lem:loopback-with-prelude-equals-xyomega} we have that loopback automata with prelude recognize power series of the form $x\star y$ with $x,y\in\ratnoempty$, we conclude that the power series recognized by $\aut{P}$ is rational.
\end{proof}

\section{Bidiverging Automata}
\label{sec:bidiverging-automata}

\subsection{Preliminary Formalism}

In the previous sections we have presented automata and power series over the domain of infinite words.  These words were uni-infinite in the sense that they have a definite starting point and proceed towards infinity in a single direction.  When studying infinite systems in physics, however, we are usually interested in the case where there are no boundaries, which means that the system stretches out infinitely in all directions.  For this reason, in this and the next section we shall proceed to extend the formalism that has been developed so far into the domain of \emph{biinfinite} words.

Unlike diverging automata, bidiverging automata shall map biinfinite words to coefficients in $S^{\ints\times\nats}$, where the extra $\ints$ effectively adds an additional parameter that specifies the starting location in the word; this additional argument is needed because unlike the case of infinite words, in the case of biinfinite words there is not a natural location at which to start (and position 0 does not count because we can always shift the word left or right, making the location of position 0 itself an arbitrary choice).  As always, we observe that $S^{\ints\times\nats}\cong \ints\times\nats\to S$, which means that we can use function notation to describe and specify elements in $S^{\ints\times\nats}$.

We now endow $S^{\ints\times\nats}$ with the same kind of $S$-semibimodule structure with which we endowed $S^\nats$.  Specifically, given $x,y\in\bicoefs$, we define addition by $x+y := (i,n)\mapsto x(i,n) + y(i,n)$, given $s\in S$ we define left-multiplication by $s\cdot x = sx = (i,n)\mapsto s x(i,n)$ and right-multiplication by $x\cdot s = (i,n)\mapsto x(i,n) s$, and finally we define the additive identity to be $(i,n)\mapsto 0$.  It is not hard to see that these definitions obey the semibimodule laws and so $\bicoefs$ is an $S$-semibimodule.

Because biinfinite words extend in two directions, we need to extend our terminology in order to define boundary conditions for bidiverging automata.  Given an initial state $i$ and a final state $f$, we say that a biinfinite word $w$ \emph{activates} $(i,f)$ if for every $i_0\le j_0$ there exists $i\le i_0 \le j_0 \le j$ such that the sum of all successful paths for $\slice{w}{i}{j}$ is non-zero.

Note that this property is shift-invariant because if this property holds for one shift then it holds for any other shift as for any $i_0\le j_0$ we can shift the word back to where we know the property holds, adding or subtracting the size of the shift to $i_0\le j_0$ so that they follow the word, obtain $i$ and $j$ there, and then shift them back to where we started, and conversely if this not property does not hold for a particular shift of the word then it cannot hold for any other as, applying the previous argument, if the property did hold in one shift then it would hold for all shifts, leading to a contradiction.

\subsection{Bidiverging Automata Defined}

As with diverging automata we shall use function notation as a convenient means of defining the behavior, which we do as follows:
$$\aut{A}(w,i,n) := \ofaut{I}{A}\cdot V^{\aut{A}}\paren{w,\,\,\,\,\prod_{j=0}^{n-1} \Mofautwithsup{A}{\charat{w}{i+j}}}\cdot \ofaut{F}{A}$$
where $V^{\aut{A}}(w,x)_{ij}=x_{ij}$ if $w$ activates $(i,j)$ and $V^{\aut{A}}(w,x)_{ij}=0$ otherwise.  Note that $\aut{A}$ can equivalently be interpreted as the sum of all successful paths between activated pairs of initial and final states for the substring $\slice{w}{i}{i+n}$.

These automata have the property that the power series they recognize are shift invariant in the sense demonstrated in the following Lemma:

\begin{lemma}[Behavior shift invariance]
\label{lem:bidiverging-automata-shift-invariant}
For any bidiverging automaton $\aut{A}$ and biinfinite word $w$ let $\charat{w^{\to k}}{i} := \charat{w}{i-k}$.  Then for all $i,j\in\ints$ and $k,n\in\nats$ we have that $\aut{A}(w,i,n)=\aut{A}(w^{\to k},i+k,n)$.
\end{lemma}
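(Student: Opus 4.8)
The plan is to expand both sides of the claimed identity using the definition of the behavior and then verify that each factor matches. Writing out $\aut{A}(w,i,n)$ and $\aut{A}(w^{\to k},i+k,n)$, I observe that the initial and final distributions $\ofaut{I}{A}$ and $\ofaut{F}{A}$ are properties of the automaton alone and so are identical on both sides; thus it suffices to show that the two arguments passed to $V^{\aut{A}}$, as well as the masking that $V^{\aut{A}}$ performs, agree.

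First I would handle the unmasked matrix product. For each $0\le j< n$ the definition of the shift gives $\charat{w^{\to k}}{i+k+j}=\charat{w}{(i+k+j)-k}=\charat{w}{i+j}$, so the two products
$$\prod_{j=0}^{n-1}\Mofautwithsup{A}{\charat{w^{\to k}}{i+k+j}}\quad\text{and}\quad\prod_{j=0}^{n-1}\Mofautwithsup{A}{\charat{w}{i+j}}$$
coincide term by term. Hence the matrix fed to $V^{\aut{A}}$ is literally the same in both expressions.

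The remaining point --- and the only one carrying any real content --- is that the masking applied by $V^{\aut{A}}$ is unaffected by the shift. By definition $V^{\aut{A}}(w,x)$ and $V^{\aut{A}}(w^{\to k},x)$ can differ only on those entries where $w$ and $w^{\to k}$ disagree about which pairs $(p,f)$ of initial and final states are activated. But this is precisely ruled out by the shift-invariance of the activation condition argued immediately after its definition, which guarantees that $w$ activates $(p,f)$ if and only if $w^{\to k}$ does. Consequently $V^{\aut{A}}(w,x)=V^{\aut{A}}(w^{\to k},x)$ for every matrix $x$, and in particular for the common matrix product computed above.

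Combining these observations, the full expressions for $\aut{A}(w,i,n)$ and $\aut{A}(w^{\to k},i+k,n)$ coincide, which is the desired equality. I do not anticipate any genuine obstacle: the argument is pure bookkeeping once the shift-invariance of activation is in hand, and that fact has already been established when the activation condition was introduced. The only mild care required is the index arithmetic $(i+k+j)-k=i+j$ that aligns the two matrix products, together with the observation that invoking shift-invariance of activation is legitimate because $V^{\aut{A}}$ inspects the word only through that condition.
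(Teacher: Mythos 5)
Your proof is correct and follows the same route as the paper's, which simply cites the definition of the behavior together with the shift-invariance of the activation condition established just after that condition was introduced. You have merely filled in the bookkeeping (the index arithmetic aligning the matrix products and the observation that $V^{\aut{A}}$ depends on the word only through activation), which the paper leaves implicit.
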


\begin{proof}
Follows directly from the definition and the fact that the activation condition is shift-invariant, as discussed earlier.
\end{proof}

As with converging and diverging automata, the behavior of bidiverging automata is a homomorphism.

\begin{lemma}[Behavior is homomorphism (for bidiverging automata)]
\label{lem:behavior-is-linear-bidiverging}
Given bidiverging automata $\aut{A}$ and $\aut{B}$  over some semiring $S$ and scalar values $\alpha$, $\beta$, $\gamma$, $\delta\in S$ we have that $\behav{\alpha\aut{A}\gamma + \beta\aut{B}\delta} = \alpha\behav{\aut{A}}\gamma + \beta\behav{\aut{B}}\delta$
\end{lemma}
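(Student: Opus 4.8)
The plan is to mirror the (omitted) argument behind Lemma~\ref{lem:behavior-is-linear}, reducing the homomorphism property to two independent pointwise identities in $(w,i,n)$ and then lifting them to $\PSAbiinfinite$. Since addition and left/right scalar multiplication on $\PSAbiinfinite$ act pointwise on the coefficient $(i,n)\mapsto\aut{A}(w,i,n)$, and since $\alpha\aut{A}\gamma+\beta\aut{B}\delta$ is assembled from the elementary scaling $l\aut{A}r$ and direct-sum $\aut{A}+\aut{B}$ operations (defined exactly as for diverging automata), it suffices to establish separately that $\behav{l\aut{A}r}=l\behav{\aut{A}}r$ for all $l,r\in S$ and that $\behav{\aut{A}+\aut{B}}=\behav{\aut{A}}+\behav{\aut{B}}$; composing these yields the claim. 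The single observation driving both halves is that whether $w$ activates a pair $(p,q)$ depends only on the transition tensor $\ofaut{M}{A}$, through the entries $\big(\prod_{j}\Mofautwithsup{A}{\charat{w}{i+j}}\big)_{pq}$, and not on the weight vectors $\ofaut{I}{A}$ or $\ofaut{F}{A}$; hence the matrix $V^{\aut{A}}(w,x)$ is insensitive to any modification that leaves $\ofaut{M}{A}$ fixed.

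For scaling, the automaton $l\aut{A}r$ shares both $\ofaut{Q}{A}$ and $\ofaut{M}{A}$ with $\aut{A}$, so it induces precisely the same activations and therefore the same $V^{\aut{A}}$. I would then compute $(l\aut{A}r)(w,i,n)=(l\,\ofaut{I}{A})\cdot V^{\aut{A}}(w,x)\cdot(\ofaut{F}{A}\,r)$ and pull $l$ out on the left and $r$ out on the right using associativity and distributivity in the $S$-semibimodule, obtaining $l\,\aut{A}(w,i,n)\,r$ for every $(w,i,n)$, i.e.\ $\behav{l\aut{A}r}=l\behav{\aut{A}}r$.

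For the sum, the tensor $\ofaut{M}{A+B}$ is block diagonal while $\ofaut{I}{A+B}$ and $\ofaut{F}{A+B}$ are the corresponding concatenations, so each product $\prod_{j}\Mofautwithsup{A+B}{\charat{w}{i+j}}$ is block diagonal with $A$-block $\prod_{j}\Mofautwithsup{A}{\charat{w}{i+j}}$ and $B$-block $\prod_{j}\Mofautwithsup{B}{\charat{w}{i+j}}$. Because every path then stays within a single block, I would argue that (i) no pair straddling the two blocks is ever activated, so $V^{\aut{A+B}}$ has vanishing off-block entries, and (ii) an intra-block pair $(p,q)$ with $p,q\in\ofaut{Q}{A}$ is activated in $\aut{A}+\aut{B}$ exactly when it is in $\aut{A}$ (and symmetrically for $\ofaut{Q}{B}$). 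Hence $V^{\aut{A+B}}(w,\cdot)$ is block diagonal with blocks $V^{\aut{A}}(w,\cdot)$ and $V^{\aut{B}}(w,\cdot)$; expanding $\ofaut{I}{A+B}\cdot V^{\aut{A+B}}\cdot\ofaut{F}{A+B}$ as a double sum over states, the cross terms drop out since $V^{\aut{A+B}}_{pq}=0$ whenever $p,q$ lie in different blocks, leaving $\aut{A}(w,i,n)+\aut{B}(w,i,n)$ pointwise.

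The main obstacle is precisely points (i)--(ii): confirming that $V^{\aut{A+B}}$ really is block diagonal, i.e.\ that activation respects the block structure of a disjoint union. This is where the path-based, shift-invariant definition of activation does the real work — a straddling pair has an identically zero sum of paths for every slice and so can never meet the activation condition, while an intra-block pair sees exactly the same family of finite-slice path sums as it does in the isolated factor. Everything following this reduction is routine distributivity in the $S$-semibimodule, which is why, as with Lemma~\ref{lem:behavior-is-linear}, the remaining computation can be compressed.
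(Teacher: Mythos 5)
Your proposal is correct, and it is exactly the ``follows straightforwardly from the definitions'' argument that the paper leaves as an exercise: reduce to the scaling and direct-sum cases, observe that activation (and hence $V^{\aut{A}}$) depends only on $\ofaut{M}{A}$ and not on $\ofaut{I}{A}$ or $\ofaut{F}{A}$ --- a fact the paper itself relies on in the footnote to Lemma \ref{lem:decompose-into-loopback-with-or-without-prelude} --- and then finish by block-diagonality of $V^{\aut{A+B}}$ plus semibimodule distributivity. The block-structure claim you flag as the main obstacle does hold for precisely the reason you give: a straddling pair has identically zero path sums for every slice, and an intra-block pair sees the same finite-slice path sums as in the isolated factor, so nothing more is needed.
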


\begin{proof}
Follows straightforwardly from the definitions just as it did for converging and diverging automata, so the proof has been left as an exercise for the reader.
\end{proof}

\section{Bidiverging Power Series}
\label{sec:bidiverging-power-series}

\subsection{Bidiverging Power Series Defined}

In the previous section we introduced bidiverging automata, which are a two-way generalization of diverging automata.  In this section we shall likewise introduce bidiverging power series, which are a two-way generalization of diverging power series.

We again let $S$ be a semiring and $A$ be an alphabet.  We then define $\PSAbiinfinite$ to be the set of all power series over $A^\zeta$ with coefficients in $S^{\ints\times\nats}$, which we shall call \emph{bidiverging power series}.  As before, we shall use function notation, so if $v\in\PSAbiinfinite$, $w\in A^\zeta$, $i\in\ints$, and $n\in\nats$, then $v(w,i,n)$ is equal to position $(i, n)$ of the coefficient on the word $w$.

We endow bidiverging power series with an $S$-semibimodule structure that is consistent with the $S$-semibimodule structure with which we endowed $S^\nats$:  for all $x,y\in\PSAbiinfinite$ we have that addition is given by $x+y := (w,i,n) \mapsto x(w,i,n) + y(w,i,n)$, for all $s\in S$ we have that left-multiplication is given by $sx := (w,i,n) \mapsto s x(w,i,n)$ and that right multiplication is given by $xs := (w,i,n) \mapsto  x(w,i,n)s$, and the additive identity is given by $(w,i,n)\mapsto 0$.  It is easy to see that the semibimodule laws hold, making $\PSAbiinfinite$ an $S$-semibimodule.

\subsection{Rational Bidiverging Power Series}

As with diverging power series, there are two basic ways in which we shall construct bidiverging power series from other kinds of power series.  The first way to build a bidiverging power series from another kind of power series is \emph{infinite iteration}, denoted by $\,^\zeta$, which is defined as follows:  Let $s$ be a proper \emph{converging} power series;  then
$$s^\zeta(w,i,n) := s^*(\slice{w}{i}{i+n}) \cdot \allbiprefixes{w}{s^*}$$
where $\allbiprefixes{w}{x}=1$ if for every $i_0\le j_0$ there exist $i\le i_0\le j_0\le j$ such that $x(\slice{w}{i}{j})\ne 0$, and $\allbiprefixes{w}{x}=0$ otherwise.

The second way to build a bidiverging power series is  \emph{conjoining}, denoted by $\cdot \star \cdot \star \cdot$, which takes three converging power series and forms a bidiverging power series as follows:  Let $x,m,y\in \PSAfinitenoempty$.  Then the  \emph{conjoin} of $x$, $m$ and $y$ is given by,
\begin{align*}
&(x\star m\star y)(w,i,n) := (x^*my^*)\paren{\slice{w}{i}{i+n}} \cdot \allbiprefixes{w}{x^*my^*}.
\end{align*}

Having defined these two ways of building bidiverging power series from other power series, we shall now define \emph{rational} bidiverging power series, $\Rat^\zeta(S,A)$, as the smallest set such that
\begin{enumerate}
\item $\Rat^\zeta(S,A)$ is closed under finite sums;
\item $\Rat^\zeta(S,A)$ is closed under left- and right-multiplication by elements from $S$;
\item for all $z\in\ratnoempty$, $z^\zeta\in\Rat^\zeta(S,A)$; and
\item for all $x,y,m\in\ratnoempty$, $x\star m\star y\in\Rat^\zeta(S,A)$.
\end{enumerate}

As with diverging power series, there is a simple characteristic form for this set, as shown in the following Lemma.

\begin{lemma}[Characteristic representation for bidiverging power series]
\label{lem:characteristic-bidiverging}
A bidiverging power series $p\in \PSAbiinfinite$ is rational if and only if there exist finite index sets $I$ and $J$ and sequences $\{a_i,b_i\}_{i\in I}\subset S$, $\{x_i,y_i,m_i\}_{i\in I}\subset \ratnoempty$, $\{l_j,r_j\}_{j\in J}\subset S$, and $\{z_j\}_{j\in J}\subset \ratnoempty$ such that
$$p = \sum_{i\in I} a_i(x_i \star m_i\star y_i)b_i + \sum_{j\in j} l_j z_j^\zeta r_j.$$
\end{lemma}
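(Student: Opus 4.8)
The plan is to mirror exactly the proof of Lemma~\ref{lem:characteristic} (the uni-infinite analogue), since the definition of $\Rat^\zeta(S,A)$ is structurally identical to that of $\Rat^\omega(S,A)$: both are defined as the smallest set closed under finite sums and two-sided $S$-multiplication, and generated by the infinite-iteration construction ($z^\zeta$, resp.\ $z^\omega$) and the conjoin construction ($x\star m\star y$, resp.\ $x\star y$). The only genuine difference is that the conjoin here has three converging-power-series arguments rather than two; this changes no part of the closure argument.

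First I would prove the easy direction: given index sets $I,J$ and the indexed families $\{a_i,b_i\}$, $\{x_i,y_i,m_i\}\subset\ratnoempty$, $\{l_j,r_j\}$, $\{z_j\}\subset\ratnoempty$, the expression $\sum_{i\in I}a_i(x_i\star m_i\star y_i)b_i+\sum_{j\in J}l_jz_j^\zeta r_j$ is manifestly rational, since each summand is built by applying a generating operation ($\star$ or $\,^\zeta$) followed by left/right $S$-multiplication, and $\Rat^\zeta(S,A)$ is closed under finite sums by clause~(1) of its definition. As in Lemma~\ref{lem:characteristic}, I would leave the routine verification to the reader.

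For the converse, let $X$ be the set of bidiverging power series expressible in the characteristic form $\sum_{i\in I}a_i(x_i\star m_i\star y_i)b_i+\sum_{j\in J}l_jz_j^\zeta r_j$. The strategy is to show $X$ satisfies all four defining closure properties of $\Rat^\zeta(S,A)$, whence, by minimality, $\Rat^\zeta(S,A)\subseteq X$; combined with the easy direction this gives equality. Closure of $X$ under finite sums is immediate by concatenating index sets. Closure under left/right $S$-multiplication follows exactly as before: each term already carries scalar coefficients on the left and right that absorb the multiplied elements, using that $\PSAbiinfinite$ is an $S$-semibimodule (hence distributive); concretely $l\cdot\bigl(a_i(x_i\star m_i\star y_i)b_i\bigr)\cdot r=(la_i)(x_i\star m_i\star y_i)(b_ir)$ and similarly for the $z_j^\zeta$ terms. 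Clauses~(3) and~(4) are trivially satisfied: for $z\in\ratnoempty$ the element $z^\zeta$ is the single-term expression with $J=\{*\}$, $l_*=r_*=1$, and $I=\emptyset$; and for $x,m,y\in\ratnoempty$ the element $x\star m\star y$ is the single-term expression with $I=\{*\}$, $a_*=b_*=1$, and $J=\emptyset$.

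I do not expect any serious obstacle here — the proof is a direct transcription of Lemma~\ref{lem:characteristic}, and the extra middle argument $m$ in the conjoin plays no role in the closure bookkeeping. The only point requiring minor care is confirming that the scalar absorption in clause~(2) genuinely keeps every $x_i,y_i,m_i,z_j$ inside $\ratnoempty$ rather than moving the scalars inside those converging series; but since the $a_i,b_i,l_j,r_j$ sit outside the $\star$ and $\,^\zeta$ operators by construction, multiplying on the far left or right never disturbs the inner proper converging series, so this is immediate.
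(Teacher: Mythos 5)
Your proposal is correct and is essentially the paper's own proof: the paper simply notes that the argument is identical in form to Lemma~\ref{lem:characteristic}, and your write-up is a faithful transcription of that argument (easy direction left routine, then showing the set $X$ of series in characteristic form satisfies the four closure properties so that minimality gives $\Rat^\zeta(S,A)\subseteq X$). No gaps; the extra middle factor $m_i$ indeed plays no role in the closure bookkeeping, exactly as you observe.
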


\begin{proof}
The proof of this is identical in form to Lemma \ref{lem:characteristic}, so it has been left as an exercise for the reader.
\end{proof}

The set of \emph{recognizable} bidiverging power series, $\Rec^\zeta(S,A)$, is equal to the set of power series that are the behavior of some bidiverging automaton, and as with diverging automata and power series, bidiverging automata and power series are related by a Kleene Theorem.

\begin{theorem}[Kleene's Theorem for bidiverging power series]
\label{thm:kleene-bidiverging}
$$\Rat^\zeta(S,A) =\Rec^\zeta(S,A)$$
\end{theorem}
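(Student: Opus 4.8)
The plan is to mirror the proof of Theorem \ref{thm:kleene-diverging} step for step, replacing the one-sided building blocks $z^\omega$ and $x\star y$ with the two-sided blocks $z^\zeta$ and $x\star m\star y$. Both directions will route through Lemma \ref{lem:characteristic-bidiverging} and the homomorphism property of Lemma \ref{lem:behavior-is-linear-bidiverging}, so the real content is to establish bidiverging analogues of Proposition \ref{prop:loopback-equals-rational-omega} and Lemma \ref{lem:loopback-with-prelude-equals-xyomega}, together with a bidiverging version of the decomposition in Lemma \ref{lem:decompose-into-loopback-with-or-without-prelude}.

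For the inclusion $\Rat^\zeta(S,A)\subseteq\Rec^\zeta(S,A)$, I would first use Lemma \ref{lem:characteristic-bidiverging} to write an arbitrary rational $p$ as $\sum_{i\in I}a_i(x_i\star m_i\star y_i)b_i+\sum_{j\in J}l_jz_j^\zeta r_j$, and then build a bidiverging automaton for each summand. For a term $z^\zeta$ I would introduce a \emph{biloopback} automaton: a single loopback state that is simultaneously the sole initial and sole final state, whose converging counterpart recognizes $z^*$ (which exists by Proposition \ref{prop:loopback-equals-rational-star} applied in the converging setting). For a term $x\star m\star y$ I would introduce a biloopback automaton \emph{with middle}: roll normalized automata for $x$ and $y$ into two loopback states $a$ (initial) and $b$ (final), splice in a normalized automaton for $m$ as a one-way bridge from $a$ to $b$, and arrange that no edge returns from $b$ or the bridge back to $a$, so that the converging counterpart recognizes exactly $x^*my^*$. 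In both cases the decisive step is a bidiverging analogue of Lemma \ref{lem:loopback-star-to-omega-conversion}: since the lone initial/final pair is activated precisely when $\allbiprefixes{w}{z^*}=1$ (respectively $\allbiprefixes{w}{x^*my^*}=1$), the activation function $V^{\aut{A}}$ converts the converging behavior on the window $\slice{w}{i}{i+n}$ into exactly $z^\zeta$ (respectively $x\star m\star y$). Lemma \ref{lem:behavior-is-linear-bidiverging} then assembles the weighted sum into a single automaton recognizing $p$.

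For the reverse inclusion $\Rec^\zeta(S,A)\subseteq\Rat^\zeta(S,A)$, I would prove the bidiverging counterpart of Lemma \ref{lem:decompose-into-loopback-with-or-without-prelude}: given a bidiverging automaton $\aut{A}$, decompose $\behav{\aut{A}}=\sum_{p,q\in Q}\ofaut{I}{A}_p\behav{\aut{A}_{pq}}\ofaut{F}{A}_q$, where $\aut{A}_{pq}$ is $\aut{A}$ restricted to a single initial state $p$ and single final state $q$ of weight $1$; this is valid because, exactly as in the one-sided case, the activation condition is defined on pairs of initial and final states and so does not disturb the homomorphism in $\ofaut{I}{A}$ and $\ofaut{F}{A}$. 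When $p=q$ the summand is a biloopback automaton, and when $p\ne q$ it is a biloopback automaton with middle; using disjoin/conjoin manipulations analogous to Lemmas \ref{lem:attaching-and-detaching-are-inverses} and \ref{lem:attach-to-form-xy} and Proposition \ref{prop:loopback-equals-rational-star}, each such automaton has behavior $z^\zeta$ or $x\star m\star y$ with the relevant power series proper and rational. Invoking Lemma \ref{lem:characteristic-bidiverging} in its easy direction then shows $\behav{\aut{A}}$ is rational.

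The main obstacle will be the biloopback-with-middle case, which has no exact one-sided counterpart: I must verify that the two-sided activation condition correctly forces the bridge $m$ to be traversed exactly once while the $x$- and $y$-loops extend the computation arbitrarily far toward $-\infty$ and $+\infty$ respectively, and that the disjoin/conjoin operations can be adapted to cleanly separate the two loopback states and the intervening bridge. Carefully re-running the path-counting argument of Lemma \ref{lem:attach-to-form-xy} for the window $\slice{w}{i}{i+n}$ --- now factoring a successful path as an initial block of $x$-loops, a single bridge occurrence, and a trailing block of $y$-loops --- and checking that the shift-invariance guaranteed by Lemma \ref{lem:bidiverging-automata-shift-invariant} makes the choice of window origin immaterial, should dispatch this remaining difficulty.
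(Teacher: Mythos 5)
Your proposal is correct and follows essentially the same route as the paper: your ``biloopback'' and ``biloopback with middle'' automata are exactly the paper's bidiverging loopback and bridge automata, and your plan --- the characteristic representation of Lemma \ref{lem:characteristic-bidiverging}, conjoin/disjoin constructions whose sole initial/final pair is activated precisely when $\allbiprefixes{w}{\cdot}=1$, the decomposition over pairs of initial and final states, and assembly via Lemma \ref{lem:behavior-is-linear-bidiverging} --- corresponds step for step to the paper's Proposition \ref{prop:loopback-equals-rational-zeta} and Lemmas \ref{lem:behavior-of-conjoin-is-conjoin-of-star}, \ref{lem:conjoin-quasi-inverse}, \ref{lem:bridge-implies-rational}, and \ref{lem:decompose-into-bridge-and-loopback}. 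The only differences are terminological, so there is no substantive gap to report.
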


\subsection{Loopback Automata}

Before proving Kleene's Theorem we will first prove some Lemmas.  As was the case with diverging automata, it will prove useful to start by analyzing some special cases.  The first special form of automaton we shall analyze in the context of bidiverging automata is the loopback automaton.  The most important result we shall prove is an analogue to Proposition \ref{prop:loopback-equals-rational-omega}.

\begin{proposition}[Bidiverging loopback automata recognize the $\,^\zeta$ of rational power series]
\label{prop:loopback-equals-rational-zeta}
The set of power series recognized by bidiverging loopback automata is equal to $\{z^\zeta : z\in\ratnoempty\}$.
\end{proposition}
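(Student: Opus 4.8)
The plan is to mirror the proof of Proposition \ref{prop:loopback-equals-rational-omega} essentially verbatim, replacing $\,^\omega$ by $\,^\zeta$ and prefixes $\slice{w}{0}{n}$ by two-sided windows $\slice{w}{i}{i+n}$. The central step is to first establish a bidiverging analogue of Lemma \ref{lem:loopback-star-to-omega-conversion}, namely: if $\aut{A}$ is a bidiverging loopback automaton whose converging counterpart $\tilde{\aut{A}}$ recognizes the proper power series $s^*$, then $\behav{\aut{A}}=s^\zeta$. Once this lemma is in hand, the Proposition will follow by combining it with Proposition \ref{prop:loopback-equals-rational-star} in both directions.

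To prove the analogue lemma, I would fix a biinfinite word $w$ and split into two cases according to whether $w$ activates the loopback state (which, being the sole initial and final state, is the only pair that can be activated). If $w$ activates it, then $V^{\aut{A}}(w,x)$ acts as the identity on the loopback entry, so the behavior reduces to that of the converging counterpart evaluated on the window, giving $\aut{A}(w,i,n)=\tilde{\aut{A}}(\slice{w}{i}{i+n})=s^*(\slice{w}{i}{i+n})$; this is the bidiverging counterpart of Lemma \ref{lem:diverging-behaves-same-way-as-converging-counterpart}, which I would either restate for bidiverging automata or fold in directly. Moreover, for a loopback automaton the sum of all successful paths for $\slice{w}{i}{j}$ is precisely $s^*(\slice{w}{i}{j})$, so the definition of activation says exactly that for every $i_0\le j_0$ there exist endpoints with $i\le i_0\le j_0\le j$ and $s^*(\slice{w}{i}{j})\ne 0$; this is literally the condition $\allbiprefixes{w}{s^*}=1$, whence $s^\zeta(w,i,n)=s^*(\slice{w}{i}{i+n})=\aut{A}(w,i,n)$. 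In the complementary case $w$ fails to activate the loopback state, so $V^{\aut{A}}(w,x)=0$ forces $\aut{A}(w,i,n)=0$, while the same identification of path sums with $s^*$ gives $\allbiprefixes{w}{s^*}=0$ and hence $s^\zeta(w,i,n)=0$; the two agree. Thus $\behav{\aut{A}}=s^\zeta$.

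With the lemma established, the Proposition follows just as Proposition \ref{prop:loopback-equals-rational-omega} did. For one inclusion, given a bidiverging loopback automaton $\aut{A}$, its converging counterpart $\tilde{\aut{A}}$ is itself a converging loopback automaton (the loopback condition is a condition on the shared tuple), so Proposition \ref{prop:loopback-equals-rational-star} supplies a proper rational $s\in\ratnoempty$ with $\behav{\tilde{\aut{A}}}=s^*$, and the lemma yields $\behav{\aut{A}}=s^\zeta$. For the reverse inclusion, given $s\in\ratnoempty$, Proposition \ref{prop:loopback-equals-rational-star} produces a converging loopback automaton recognizing $s^*$; taking its bidiverging counterpart (the same tuple) and applying the lemma gives a bidiverging loopback automaton whose behavior is $s^\zeta$.

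I expect the main obstacle to be the key lemma rather than the wrapping argument, and within it the one genuinely new point relative to the diverging case: the activation condition now quantifies over two-sided windows $\slice{w}{i}{j}$ with $i\le i_0\le j_0\le j$ rather than over prefixes, so I must verify that it matches the two-sided predicate $\allbiprefixes{w}{s^*}$ exactly (taking care that the trivial empty-window case $i=j$ does not satisfy the binding constraints, which force arbitrarily large nonempty windows), and that the reduction of $V^{\aut{A}}$ to the identity on the loopback entry carries over unchanged from the diverging setting. Here shift-invariance of activation, as in Lemma \ref{lem:bidiverging-automata-shift-invariant}, is what guarantees that the activation status is a property of $w$ alone, independent of $i$ and $n$, so that the two cases cleanly partition all values of the behavior. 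All remaining steps are direct transcriptions of the diverging proof.
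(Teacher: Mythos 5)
Your proposal is correct and follows exactly the paper's own route: the paper proves this Proposition by introducing the same key lemma you describe (Lemma \ref{lem:loopback-star-implies-zeta}, the bidiverging analogue of Lemma \ref{lem:loopback-star-to-omega-conversion}) and then transcribing the proof of Proposition \ref{prop:loopback-equals-rational-omega}, combining that lemma with Proposition \ref{prop:loopback-equals-rational-star} in both directions. In fact the paper leaves both transcriptions as exercises, so your write-up — including the careful check that activation over two-sided windows matches $\allbiprefixes{w}{s^*}$ — supplies precisely the details the paper omits.
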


First, we need a preliminary Lemma, analogous to Lemma \ref{lem:loopback-star-to-omega-conversion}.

\begin{lemma}[Behavior of bidiverging loopback automata]
\label{lem:loopback-star-implies-zeta}
Let $\aut{A}$ be a bidiverging loopback automaton such that its converging counterpart, $\tilde{\aut{A}}$, recognizes the power series $s^*$.  Then $\behav{\aut{A}}=s^\delta$.
\end{lemma}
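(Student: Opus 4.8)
The plan is to follow the proof of Lemma~\ref{lem:loopback-star-to-omega-conversion} essentially line for line, substituting the biinfinite definitions of the behavior, of $\,^\zeta$, and of the activation condition for their uni-infinite counterparts. Concretely, I would establish the pointwise identity $\aut{A}(w,i,n)=s^\zeta(w,i,n)$ for every biinfinite word $w$, every $i\in\ints$, and every $n\in\nats$; since $\behav{\aut{A}}$ is precisely the collection of these values, the conclusion $\behav{\aut{A}}=s^\zeta$ then follows at once. (Here $s^\zeta$ is the intended target, the superscript $\delta$ in the statement being a typo for $\zeta$.) Fix $w$ and let $1$ denote the loopback state, so that $\ofaut{I}{A}_k=\ofaut{F}{A}_k=\delta_{1k}$; the argument then splits according to whether or not $w$ activates the pair $(1,1)$.

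In the activating case, the defining property of $V^{\aut{A}}$ gives $V^{\aut{A}}(w,x)_{11}=x_{11}$, so, because $1$ is the sole initial and final state and carries weight $1$, the behavior collapses to the $(1,1)$ entry of $\prod_{j=0}^{n-1}\Mofautwithsup{A}{\charat{w}{i+j}}$. That entry is by definition $\tilde{\aut{A}}(\slice{w}{i}{i+n})$, which equals $s^*(\slice{w}{i}{i+n})$ by hypothesis. Moreover, since for the loopback automaton the sum of successful paths across any window $\slice{w}{a}{b}$ is exactly $s^*(\slice{w}{a}{b})$, the activation of $(1,1)$ is, upon unwinding, precisely the condition $\allbiprefixes{w}{s^*}=1$; hence $s^\zeta(w,i,n)=s^*(\slice{w}{i}{i+n})$ and the two sides agree. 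In the non-activating case $V^{\aut{A}}(w,x)_{11}=0$, forcing $\aut{A}(w,i,n)=0$, while the failure of activation is by the same correspondence exactly $\allbiprefixes{w}{s^*}=0$, so $s^\zeta(w,i,n)=0$ as well; again the two sides agree.

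I expect the whole argument to be almost entirely routine, the single point deserving care being that the case split must hold uniformly in the position $i$ and the window length $n$. This is guaranteed by the shift-invariance of the activation condition observed when bidiverging automata were introduced: whether $w$ activates $(1,1)$ is a property of $w$ alone, independent of where the window $\slice{w}{i}{i+n}$ is placed, so the dichotomy is well defined across all $i$ and $n$ simultaneously. A second, minor, remark is that the biinfinite setting has no separately stated analogue of Lemma~\ref{lem:diverging-behaves-same-way-as-converging-counterpart}; rather than invoking such a lemma I would read off $\aut{A}(w,i,n)=\tilde{\aut{A}}(\slice{w}{i}{i+n})$ directly from $\ofaut{I}{A}_k=\ofaut{F}{A}_k=\delta_{1k}$ and $V^{\aut{A}}(w,x)_{11}=x_{11}$, exactly as in the activating case above.
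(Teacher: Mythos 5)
Your proposal is correct and is exactly the paper's intended argument: the paper's own proof simply states that the result follows by repeating the proof of Lemma~\ref{lem:loopback-star-to-omega-conversion} with the bidiverging activation condition and $\allbiprefixes{w}{s^*}$ in place of $\allprefixes{w}{s^*}$, leaving the details as an exercise, and your case split on activation of the loopback pair (together with the shift-invariance remark and the correct identification of $s^\delta$ as a typo for $s^\zeta$) fills in precisely those details.
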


\begin{proof}
This proof is nearly identical to the proof of Lemma \ref{lem:loopback-star-to-omega-conversion}, save for the difference in the activation condition; given this, the proof has been left as an exercise for the reader.
\end{proof}

We are now ready to prove that loopback automata recognize the $\,^\zeta$ of rational converging power series.

\begin{proof}[Proof of Proposition \ref{prop:loopback-equals-rational-zeta}]
This proof is nearly identical to the proof of Proposition \ref{prop:loopback-equals-rational-omega}, save for the difference in the activation condition and the use of Lemma \ref{lem:loopback-star-implies-zeta} instead of Lemma \ref{lem:loopback-star-to-omega-conversion};  given this, the proof has been left as an exercise for the reader.
\end{proof}

\subsection{Bridge Automata}

The next special form of automaton we shall analyze is what we shall call a \emph{bridge} automaton, which is defined to be an automaton such that there is exactly one initial state and exactly one final state which are not the same state and both have weight 1.  One of the reasons why these automata are special is because they can be formed by \emph{conjoining}, which we now define.  Let $\aut{X}$, $\aut{M}$, and $\aut{Y}$ be normalized automata;  then the conjoin of $\aut{X}$, $\aut{M}$, and $\aut{Y}$ is an automaton denoted by $\aut{X}\star\aut{M}\star\aut{Y}$ which is defined as follows.  First, let $\aut{A}$ and $\aut{B}$ denote respectively the roll of $\aut{X}$ and $\aut{Y}$.  Furthermore let $1$ be the loopback state of $\aut{A}$, $2/3$ be the initial/final state of $\aut{M}$, and $4$ be the loopback state of $\aut{B}$.  Then $\ofaut{Q}{X\star M\star Y} := (\ofaut{Q}{A}\cup\ofaut{Q}{M}\cup\ofaut{Q}{B})/\{2,3\}$, $\ofaut{I}{X\star M\star Y}_i:=\delta_{i1}$,  $\ofaut{F}{X\star M\star Y}_i:=\delta_{i4}$, and
$$
\ofaut{M}{X\star M\star Y}_{ij} :=
\begin{cases}
\ofaut{M}{A}_{ij} & i,j \in \ofaut{Q}{A} \\
\ofaut{M}{M}_{ij} & i,j \in \ofaut{Q}{M}/\{2,3\} \\
\ofaut{M}{M}_{j3} & i\in \ofaut{Q}{M}/\{2,3\}, j = 4 \\
\ofaut{M}{M}_{2j} & j\in \ofaut{Q}{M}/\{2,3\}, i = 1 \\
\ofaut{M}{B}_{ij} & i,j \in \ofaut{Q}{B} \\
0 & \otherwise.
\end{cases}
$$
That is, $\aut{X}$ and $\aut{Y}$ are rolled and merged with $\aut{M}$, with the initial and final states of $\aut{M}$ being merged with the loopback states of respectively $\aut{X}$ and $\aut{Y}$, and the initial and final states being set to the loopback states of respectively $\aut{X}$ and $\aut{Y}$.

The next Lemma shows that conjoining also has the nice property that the behavior of the conjoin is the conjoin of the behaviors.

\begin{lemma}[Behavior of conjoin is conjoin of behavior]
\label{lem:behavior-of-conjoin-is-conjoin-of-star}
Let $\aut{X}$, $\aut{M}$, and $\aut{Y}$ be normalized automata.  Then $\behav{\aut{X}\star\aut{M}\star\aut{Y}}
=\behav{\aut{X}}\star\behav{\aut{M}}\star\behav{\aut{Y}}$.
\end{lemma}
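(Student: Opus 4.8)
The plan is to follow the template of Lemma~\ref{lem:attach-to-form-xy} (the diverging analogue of this statement), reducing the claim to a single path-counting fact about the converging counterpart of $\aut{Z}:=\aut{X}\star\aut{M}\star\aut{Y}$ together with the translation of the bidiverging activation condition into the $\allbiprefixes{w}{\cdot}$ predicate. First I would set $x:=\behav{\aut{X}}$, $m:=\behav{\aut{M}}$, and $y:=\behav{\aut{Y}}$; since each of $\aut{X}$, $\aut{M}$, and $\aut{Y}$ is normalized, Theorem~\ref{thm:kleene-converging} together with Lemma~\ref{lem:normalized-rejects-empty-string} guarantees that $x,m,y\in\ratnoempty$, so the right-hand side $x\star m\star y$ is well defined and it suffices to prove $(\aut{X}\star\aut{M}\star\aut{Y})(w,i,n)=(x\star m\star y)(w,i,n)$ for all $w$, $i$, $n$.

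The heart of the argument is to show that the converging counterpart $\tilde{\aut{Z}}$ recognizes the converging power series $x^*my^*$, i.e. that for every finite word $v$ the sum over all successful paths from the initial state $1$ to the final state $4$ equals $(x^*my^*)(v)$. Here I would analyze the shape of a successful path. By construction the only edges leaving state $1$ either remain inside the roll $\aut{A}$ of $\aut{X}$ or cross into the internal states of $\aut{M}$ (via the $\ofaut{M}{M}_{2j}$ edges); because $\aut{M}$ is normalized its initial state has no incoming edges, so once a path crosses into $\aut{M}$ it can never return to state $1$. Symmetrically, the only edges reaching state $4$ are the $\aut{M}$-edges originally ending on the final state $3$ (the $\ofaut{M}{M}_{i3}$ edges), while the only edges leaving state $4$ stay inside the roll $\aut{B}$ of $\aut{Y}$, so once a path reaches state $4$ it can never re-enter $\aut{M}$. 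Consequently every successful path factors uniquely into a prefix looping through $\aut{A}$ at state $1$, a single nonempty pass through the internal states of $\aut{M}$ from state $1$ to state $4$, and a suffix looping through $\aut{B}$ at state $4$. Summing over the split point, using Lemmas~\ref{lem:rolling-and-unrolling-are-inverse-operations} and~\ref{lem:loopback-star} (or equivalently Proposition~\ref{prop:loopback-equals-rational-star}) to identify the loop-sums at states $1$ and $4$ with $x^*$ and $y^*$, and noting that the middle pass sums to $m$ over successful paths of $\aut{M}$ and is nonempty because $m(\emptystring)=0$, yields $\sum_{abc=v}x^*(a)\,m(b)\,y^*(c)=(x^*my^*)(v)$, as desired.

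With the converging content settled, the remaining steps are bookkeeping. Because $\aut{Z}$ has the single initial state $1$ and the single final state $4$, its behavior reduces to the single $(1,4)$ entry of $V^{\aut{Z}}$, so $(\aut{X}\star\aut{M}\star\aut{Y})(w,i,n)$ equals $(x^*my^*)(\slice{w}{i}{i+n})$ when $w$ activates $(1,4)$ and $0$ otherwise. The previous step lets me rewrite the sum of all successful paths for an arbitrary substring $\slice{w}{i}{j}$ as $(x^*my^*)(\slice{w}{i}{j})$, so the bidiverging activation condition --- that for every $i_0\le j_0$ there exist $i\le i_0\le j_0\le j$ with this path-sum nonzero --- is literally the defining condition for $\allbiprefixes{w}{x^*my^*}=1$. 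Substituting, I obtain $(\aut{X}\star\aut{M}\star\aut{Y})(w,i,n)=(x^*my^*)(\slice{w}{i}{i+n})\cdot\allbiprefixes{w}{x^*my^*}=(x\star m\star y)(w,i,n)$, which is exactly $\behav{\aut{X}}\star\behav{\aut{M}}\star\behav{\aut{Y}}$.

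The main obstacle I expect is the path-structure analysis of the second paragraph: rigorously justifying the one-way flow through the three regions and the unique factorization of successful paths, while handling the merged boundary states $1$ and $4$ and the degenerate case of a direct $\aut{M}$-edge from its initial to its final state (absorbed cleanly by the properness of $m$). This is the bidiverging, three-component counterpart of the reasoning already carried out in Lemma~\ref{lem:attach-to-form-xy}, and essentially all of the genuine work lives there; once it is in hand, the activation-condition identification and the reduction to the single $(1,4)$ entry follow just as in the diverging case.
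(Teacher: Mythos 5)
Your proposal follows essentially the same route as the paper's own proof: identify $x,m,y\in\ratnoempty$ via Theorem \ref{thm:kleene-converging} and Lemma \ref{lem:normalized-rejects-empty-string}, factor every successful path of the conjoined automaton at its last visit to the loopback state of the roll of $\aut{X}$ and its first visit to the loopback state of the roll of $\aut{Y}$ to show that the converging counterpart recognizes $x^*my^*$, and then observe that the activation condition for the unique initial/final pair is precisely the definition of $\allbiprefixes{w}{x^*my^*}$. The only cosmetic difference is that you package the path analysis as a statement about arbitrary finite words (the converging counterpart recognizing $x^*my^*$) before specializing to the substrings $\slice{w}{i}{i+n}$, whereas the paper works with those substrings directly; the substance, including the use of properness of $m$ to handle the degenerate empty middle segment, is identical.
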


\begin{proof}By Kleene's Theorem \ref{thm:kleene-converging} and Lemma \ref{lem:normalized-rejects-empty-string} we know that there exist $x,m,y\in\ratnoempty$ such that $\behav{\aut{X}}=x$, $\behav{\aut{M}}=m$, and $\behav{\aut{Y}}=y$.  Let $\aut{A}$ and $\aut{B}$ be the respective rolls of $\aut{X}$ and $\aut{Y}$.  By Lemma \ref{lem:loopback-star} we know that $\behav{\aut{A}}=x^*$ and $\behav{\aut{B}}=y^*$.

Now let $w$ be a biinfinite word, $i$ an integer, and $n$ a natural number, and let us consider the value of $(\aut{X}\star\aut{M}\star\aut{Y})(w,i,n)$.  Observe that by construction every successful path has to start on the loopback state in $\aut{A}$ and from there pass through states only in $\aut{A}$ until it lands on the loopback state of $\aut{A}$ for the last time, after which it moves into $\aut{M}$ and passes through states only in there until it eventually it lands on the loopback state of $\aut{B}$, after which it passes only through states in $\aut{B}$ until it ends on the loopback state of $\aut{B}$.  Now consider the set of all paths that land on the loopback state of $\aut{A}$ for the last time on the $j^{\text{th}}$ step and on the loopback state of $\aut{B}$ for the first time after step $j$ on the $k^{\text{th}}$ step.  Because all paths in this set land on the same steps at the loopback state in $\aut{A}$ for the last time and at the loopback state in $\aut{B}$ for the first time after the last time landing on the loopback state in $\aut{A}$, we can factor the sum over all these paths into the product of sums over paths in $\aut{A}$, $\aut{M}$, and $\aut{B}$ for the respective words $\slice{w}{i}{i+j}$, $\slice{w}{i+j}{i+j+k}$ and $\slice{w}{i+j+k}{i+n}$;  since these sums are equal to the value of the behavior at the word for these three automata we therefore have that these three factors are equal to respectively $x^*(\slice{w}{i}{i+j})$, $m(\slice{w}{i+j}{i+j+k})$, $y^*(\slice{w}{i+j+k}{i+n})$.

Let $\aut{C}$ be the converging counterpart of $\aut{X}\star\aut{M}\star{\aut{Y}}$.  Given the discussion above and summing over $j$ and $k$ we have that
\begin{align*}
\aut{C}(\slice{w}{i}{i+n})
&= \sum_{j=0}^n\sum_{k=0}^{n-j}x^*(\slice{w}{i}{i+j})\cdot m(\slice{w}{i+j}{i+j+k})\cdot y^*(\slice{w}{i+j+k}{i+n}) \\
&= (x^*my^*)(\slice{w}{i}{i+n}) \\
\aut{C}(z) &= (x^*my^*)(z).
\end{align*}
(Note that the above sum starts with $k=0$ despite the fact that the initial and the final state are not the same and so the length of the path between them must be greater than zero;  this is okay because for $k=0$ we have that $\slice{w}{i+j}{i+j+k}=\slice{w}{i+j}{i+j}$ is the empty word, and because $m$ is proper it therefore has a zero value coefficient for the empty word.)
Thus, $\aut{C}$ recognizes the power series $x^*my^*$.  In particular this means that every finite word for which the sum of all successful paths in $\aut{C}$ is non-zero is in the support of $x^*my^*$ and vice versa.  The immediate consequence of this is that $w$ activates the initial and final state of $\aut{X}\star\aut{M}\star\aut{Y}$ if and only if for all $i_0\le j_0$ there exists $i\le i_0 \le j_0 \le j$ such that $\slice{w}{i}{j}$ is in the support of $x^*my^*$ and therefore $\allbiprefixes{w}{x^*my^*}=1$.

Putting all of these results together, we see that
$$(\aut{X}\star\aut{M}\star\aut{Y})(w,i,n)=(x^*my^*)(\slice{w}{i}{i+n})\cdot\allbiprefixes{w}{x^*my^*}=(x\star m\star y)(w,i,n).$$

\end{proof}

We now need to define a quasi-inverse operation to conjoining, which we shall call \emph{disjoining}.    Given a bridge automaton $\aut{A}$, the \emph{disjoin} of $\aut{A}$ is a triplet of normalized automata $(\aut{X},\aut{M},\aut{Y})$.  Let $1$ be the initial state of $\aut{A}$ and $2$ be the final state.  Then $\aut{X}$ is the unroll of $\aut{B}$, which is given by $\ofaut{Q}{B}:=\ofaut{Q}{A}$, $\ofaut{I}{B}_i:=\ofaut{F}{B}_i:=\delta_{i1}$, and $\ofaut{M}{B}:=\ofaut{M}{A}$ --- that is, $\aut{B}$ is the result of setting the final state to be the same as the initial state;  $\aut{Y}$ is the unroll of $\aut{C}$, which is given by $\ofaut{Q}{C}:=\ofaut{Q}{A}$, $\ofaut{I}{C}_i:=\ofaut{F}{C}:=\delta_{i2}$, and $\ofaut{M}{C}:=\ofaut{Q}{C}$ --- that is, $\aut{C}$ is the result of setting the initial state to be the same as the final state; and $\aut{M}$ is given by
\begin{align}
\ofaut{Q}{M} &:= \ofaut{Q}{A} \\
\ofaut{I}{M}_i &:= \ofaut{I}{A}_i = \delta_{i1}\\
\ofaut{F}{M}_i &:= \ofaut{F}{A}_i = \delta_{i2} \\
\ofaut{M}{M}_{ij} &:=
\begin{cases}
0 & i = 2 \,\,\text{or}\,\, j = 1 \\
\ofaut{M}{A}_{ij} & \otherwise \\
\end{cases}
\end{align}
that is, the result of deleting the incoming edges on the initial state of $\aut{A}$ and the outgoing edges on the final state.

The sense in which conjoining is a quasi-inverse operation is given in the following Lemma.

\begin{lemma}[Conjoin is inverse of disjoin]
\label{lem:conjoin-quasi-inverse}
Given a bridge automaton $\aut{A}$, and letting $(\aut{X},\aut{M},\aut{Y})$ be equal to the disjoin of $\aut{A}$, we have that $\behav{\aut{X}\star\aut{M}\star\aut{Y}}=\behav{\aut{A}}.$
\end{lemma}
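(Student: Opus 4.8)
The plan is to follow the template of Lemma~\ref{lem:attaching-and-detaching-are-inverses}, its uni-directional analogue, and reduce the claim to a weight-preserving correspondence between the successful paths of $\aut{A}$ and those of $\aut{Z}:=\aut{X}\star\aut{M}\star\aut{Y}$. Since $\aut{A}$ is a bridge automaton, its behavior factors, for every biinfinite word $w$ and all indices $i$ and $n$, as the converging sum over the successful paths of the finite subword $\slice{w}{i}{i+n}$ times the activation indicator for the single pair $(1,2)$; the same holds for $\aut{Z}$. Hence it suffices to show that the converging counterparts $\tilde{\aut{A}}$ and $\tilde{\aut{Z}}$ assign equal weight to every finite subword: the path sums then agree, the activation conditions (which depend only on which subwords carry a nonzero path sum) coincide, and the two behaviors are equal. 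First I would invoke Lemma~\ref{lem:rolling-and-unrolling-are-inverse-operations} to replace $\aut{X}$ and $\aut{Y}$ inside the conjoin by their rolls, which by construction are exactly the loopback-at-$1$ and loopback-at-$2$ copies of $\aut{A}$; this exposes $\aut{Z}$ as three copies of the transition structure of $\aut{A}$ glued at the shared states $1$ and $2$.

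With this picture in hand I would describe the three copies explicitly: copy $P$ (the roll of $\aut{X}$) supplies the loops returning to the shared initial state $1$, copy $M$ (the automaton $\aut{M}$, in which all edges into $1$ and all edges out of $2$ have been deleted) supplies the bridge from $1$ to $2$, and copy $S$ (the roll of $\aut{Y}$) supplies the loops returning to the shared final state $2$. Forgetting which copy a state belongs to maps every state of $\aut{Z}$ onto a state of $\aut{A}$, and every edge of $\aut{Z}$ projects onto an edge of $\aut{A}$ of the same weight, so each path of $\aut{Z}$ projects onto a path of $\aut{A}$ of equal weight.

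The substance of the argument is the converse lift. Given a successful path of $\aut{A}$, I would split it at the \emph{last} time it visits state $1$ and at the \emph{first} time it reaches state $2$ thereafter: the initial segment consists of returns to $1$ and lifts into copy $P$, the middle segment runs from $1$ to $2$ without visiting either in between and lifts into copy $M$, and the final segment consists of returns to $2$ and lifts into copy $S$. Conversely, every successful path of $\aut{Z}$ is forced to traverse copy $P$, then copy $M$, then copy $S$ in that order and without return, because $\aut{M}$ has no edge back into state $1$ and state $2$ has no edge carrying a path out of copy $S$; projecting and forgetting copies therefore inverts the lift. This would give the bijection, hence $\tilde{\aut{A}}=\tilde{\aut{Z}}$ on every finite subword and so $\behav{\aut{X}\star\aut{M}\star\aut{Y}}=\behav{\aut{A}}$.

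The step I expect to be the main obstacle is confirming that this correspondence is a genuine bijection rather than an over- or under-counting. The delicate point is that a single edge $1\to j$ of $\aut{A}$ appears in $\aut{Z}$ both as an edge of copy $P$ and as an edge entering copy $M$, and symmetrically at state $2$; so I must argue that fixing the split at the last visit to $1$ and the first subsequent visit to $2$ assigns each step of a path to a unique copy, and in particular that loops at $1$ which pass through $2$, and loops at $2$ which pass through $1$, are each accounted for exactly once. This is the three-copy version of the ``the first time a path lands on the final/loopback state it irreversibly moves to the second copy'' argument of Lemma~\ref{lem:attaching-and-detaching-are-inverses}, and making it airtight is where essentially all the work lies. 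As a cross-check I would compare with Lemma~\ref{lem:behavior-of-conjoin-is-conjoin-of-star}, which already yields $\behav{\aut{Z}}=\behav{\aut{X}}\star\behav{\aut{M}}\star\behav{\aut{Y}}$ and thereby reduces the whole statement to the purely converging identity $\tilde{\aut{A}}=\behav{\aut{X}}^{*}\,\behav{\aut{M}}\,\behav{\aut{Y}}^{*}$, the form in which the combinatorics of the split are cleanest to verify.
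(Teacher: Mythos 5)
Your proposal is correct and takes essentially the same route as the paper: the paper's own proof is precisely the sketch you flesh out, namely extending the two-copy argument of Lemma~\ref{lem:attaching-and-detaching-are-inverses} to three copies of $\aut{A}$ glued at the two shared states, with successful paths characterized by the last visit to the first loopback state and the first subsequent visit to the second. The bijection subtlety you flag, together with your reduction to the converging counterparts, is exactly the ``exercise for the reader'' that the paper's proof defers.
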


\begin{proof}
The logic here is essentially identical to that used in the proof of Lemma \ref{lem:attaching-and-detaching-are-inverses}.  The only difference is that in this setting we have three copies of $\aut{A}$ and two points at which a path jumps from one copy to another instead of one; in this case successful paths are characterized by the last time the path visits the loopback state in the first copy and the first time the path visits the loopback state in the third copy.  Thus, extending the argument of Lemma \ref{lem:attaching-and-detaching-are-inverses} to work here is left as an exercise for the reader.
\end{proof}

Now we see the significance of bridge automata.

\begin{lemma}[Behavior of bridge automata is the conjoin of rational power series]
\label{lem:bridge-implies-rational}
All power series recognized by bidiverging bridge automata take the form $x\star m\star y$ for some $x,m,y\in\ratnoempty$.
\end{lemma}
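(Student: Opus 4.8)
The plan is to mirror the second (``now let $\aut{A}$ be\dots'') direction of the analogous diverging result, Lemma \ref{lem:loopback-with-prelude-equals-xyomega}, now using the bridge-automaton machinery assembled earlier in this subsection. The entire argument chains together three facts that have already been proved, so I expect it to be short and essentially bookkeeping.

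First I would take an arbitrary bidiverging bridge automaton $\aut{A}$ and form its disjoin $(\aut{X},\aut{M},\aut{Y})$. By the definition of disjoining, this produces a triple of \emph{normalized} automata, so each of $\aut{X}$, $\aut{M}$, and $\aut{Y}$ is normalized. Applying Lemma \ref{lem:conjoin-quasi-inverse} (conjoin is the inverse of disjoin) then yields $\behav{\aut{X}\star\aut{M}\star\aut{Y}}=\behav{\aut{A}}$, which reduces the problem to understanding the behavior of the conjoin.

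Next I would invoke Lemma \ref{lem:behavior-of-conjoin-is-conjoin-of-star} to rewrite the behavior of the conjoin as the conjoin of the behaviors, $\behav{\aut{X}\star\aut{M}\star\aut{Y}}=\behav{\aut{X}}\star\behav{\aut{M}}\star\behav{\aut{Y}}$, and set $x:=\behav{\aut{X}}$, $m:=\behav{\aut{M}}$, and $y:=\behav{\aut{Y}}$. It then remains only to certify that $x,m,y\in\ratnoempty$. Properness is immediate from Lemma \ref{lem:normalized-rejects-empty-string}, since each of the three automata is normalized and hence rejects the empty word; rationality follows from Kleene's Theorem for converging automata (Theorem \ref{thm:kleene-converging}). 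Combining these observations gives $\behav{\aut{A}}=x\star m\star y$ with $x,m,y\in\ratnoempty$, which is exactly the claim.

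Because every ingredient is already in hand, I do not anticipate a genuine obstacle; the only point needing care is the alignment of the pieces. Specifically, I would double-check that the disjoin genuinely delivers \emph{normalized converging} automata, so that Lemmas \ref{lem:normalized-rejects-empty-string} and \ref{lem:behavior-of-conjoin-is-conjoin-of-star} together with Theorem \ref{thm:kleene-converging} all apply, and that the conjoin invoked in Lemmas \ref{lem:conjoin-quasi-inverse} and \ref{lem:behavior-of-conjoin-is-conjoin-of-star} is the same bidiverging conjoin $\aut{X}\star\aut{M}\star\aut{Y}$ whose behavior realizes the $\,\star\,m\,\star\,$ operation on power series. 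Once that correspondence is confirmed, the chain of equalities closes the argument.
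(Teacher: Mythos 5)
Your proposal is correct and follows essentially the same route as the paper's own proof: disjoin the bridge automaton into normalized automata, apply Lemma \ref{lem:conjoin-quasi-inverse} and Lemma \ref{lem:behavior-of-conjoin-is-conjoin-of-star} to identify $\behav{\aut{A}}$ with $\behav{\aut{X}}\star\behav{\aut{M}}\star\behav{\aut{Y}}$, and certify membership in $\ratnoempty$ via Theorem \ref{thm:kleene-converging} together with Lemma \ref{lem:normalized-rejects-empty-string}. The only difference is cosmetic ordering of the steps, so no further comparison is needed.
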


\begin{proof}
Suppose we are given a bridge automaton $\aut{A}$.  Let $(\aut{X},\aut{M},\aut{Y})$ be the disjoin of $\aut{A}$, which recall implies that $\aut{X},$ $\aut{M}$ and $\aut{Y}$ are all normalized.  By Kleene's Theorem (Theorem \ref{thm:kleene-converging}) and Lemma \ref{lem:normalized-rejects-empty-string} we know that there exist proper rational converging power series $x,m,y\in\ratnoempty$ such that $\behav{\aut{X}}=x$, $\behav{\aut{M}}=m$, and $\behav{\aut{Y}}=y$.  By Lemma \ref{lem:conjoin-quasi-inverse} we know that $\behav{\aut{A}}=\behav{\aut{X}\star\aut{M}\star\aut{Y}}$, and by Lemma \ref{lem:behavior-of-conjoin-is-conjoin-of-star} we know that $\behav{\aut{X}\star\aut{M}\star\aut{Y}}= \behav{\aut{X}}\star\behav{\aut{M}}\star\behav{\aut{Y}}=x\star m\star y$.
\end{proof}

\subsection{Proof of the Kleene Theorem}

We are almost ready to prove our Kleene theorem, but there is one Lemma left.

\begin{lemma}[Decomposition into bridge automata and loopback automata]
\label{lem:decompose-into-bridge-and-loopback}
For all bidiverging automata $\aut{A}$ there exists a decomposition into a weighted sum of bridge automata and loopback automata, i.e. a tuple $(K,$ $\{l_k,r_k\}_{k\in K},$ $\{A_k\}_{k\in K})$ such that $\behav{\aut{A}}=\sum_{k\in K} l_k \behav{\aut{A}_k} r_k$ where $K$ is an index set, $\{l_k,r_k\}_{k\in K}$ is an indexed set of coefficients in the underlying semiring $S$, and $\{\aut{A}_k\}_{k\in K}$ is an indexed set of automata each of which is a bridge automaton or a loopback automaton.
\end{lemma}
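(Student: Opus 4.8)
The plan is to follow the same strategy as the proof of Lemma~\ref{lem:decompose-into-loopback-with-or-without-prelude}, decomposing $\behav{\aut{A}}$ as a weighted sum over all ordered pairs of states. The key structural observation that makes the bidiverging case cleaner than the diverging one is that a bridge automaton, unlike a loopback-with-prelude automaton, is only required to have a single initial state and a single (distinct) final state each of weight $1$, with no constraint that the initial state lack incoming edges or that the final state lack outgoing edges. Consequently there is no need to introduce a fresh prelude state: every term in the decomposition can be obtained simply by re-pointing the initial and final distributions of $\aut{A}$ itself.

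Concretely, for each ordered pair $(p,q)\in\ofaut{Q}{A}\times\ofaut{Q}{A}$ I would let $\aut{A}_{pq}$ be the automaton with the same state set $\ofaut{Q}{A}$ and transition tensor $\ofaut{M}{A}$ as $\aut{A}$, but with its initial distribution set to $i\mapsto\delta_{ip}$ and its final distribution set to $i\mapsto\delta_{iq}$. When $p=q$ this yields a loopback automaton with loopback state $p$, and when $p\neq q$ it yields a bridge automaton with initial state $p$ and final state $q$; in either case the underlying graph is inherited unchanged from $\aut{A}$.

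Next I would verify that $\behav{\aut{A}_{pq}}(w,i,n)$ equals the $(p,q)$-th entry of $V^{\aut{A}}\paren{w,\prod_{j=0}^{n-1}\Mofautwithsup{A}{\charat{w}{i+j}}}$. Because $\aut{A}_{pq}$ has $p$ as its unique initial state and $q$ as its unique final state, multiplying on the left by $i\mapsto\delta_{ip}$ and on the right by $i\mapsto\delta_{iq}$ collapses its behavior to precisely this entry; and since the activation of $(p,q)$ is defined solely in terms of the sum of $p$-to-$q$ paths through $\ofaut{M}{A}$ — which is identical in $\aut{A}$ and $\aut{A}_{pq}$ — the two notions of activation coincide. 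Finally, using that $\behav{\cdot}$ is homomorphic in each entry of the initial and final distributions (the content of Lemma~\ref{lem:behavior-is-linear-bidiverging}, relying on the fact that the activation function $V^{\aut{A}}$ ignores the actual values of $\ofaut{I}{A}$ and $\ofaut{F}{A}$), I would conclude
$$\behav{\aut{A}}=\sum_{p,q\in\ofaut{Q}{A}}\ofaut{I}{A}_p\,\behav{\aut{A}_{pq}}\,\ofaut{F}{A}_q,$$
so that taking $K:=\ofaut{Q}{A}\times\ofaut{Q}{A}$, $l_{(p,q)}:=\ofaut{I}{A}_p$, and $r_{(p,q)}:=\ofaut{F}{A}_q$ completes the decomposition.

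The main obstacle is the last verification: one must check that re-pointing $\ofaut{I}{A}$ and $\ofaut{F}{A}$ alters neither which pairs are activated nor the value produced by $V^{\aut{A}}$. This is exactly why it was essential (as flagged in the footnote to Lemma~\ref{lem:decompose-into-loopback-with-or-without-prelude}) that the activation condition be phrased in terms of \emph{pairs} of initial and final states rather than final states alone; because activation then depends only on the transition tensor, it survives the re-pointing intact and the homomorphism argument goes through. The shift-invariance of activation (Lemma~\ref{lem:bidiverging-automata-shift-invariant}) plays no essential role here beyond ensuring that each $\behav{\aut{A}_{pq}}$ is itself a well-defined bidiverging power series.
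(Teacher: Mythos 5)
Your proposal is correct and takes essentially the same route as the paper: the paper's own proof simply instantiates the construction of Lemma~\ref{lem:decompose-into-loopback-with-or-without-prelude} with bridge automata in place of loopback-with-prelude automata, noting exactly as you do that the $p\ne q$ case no longer requires a fresh prelude state because bridge automata place no restrictions on the edges at their initial and final states. Your verification that activation depends only on the transition tensor (so re-pointing $\ofaut{I}{A}$ and $\ofaut{F}{A}$ preserves it) is the same pair-based argument the paper relies on via the footnote to the diverging case, so nothing is missing.
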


\begin{proof}
This proof has the exact same form as Lemma \ref{lem:decompose-into-loopback-with-or-without-prelude}, but with the role of loopback automata with prelude replaced by bridge automata (which actually simplifies the proof since the addition of a new state to act as the initial state in the $p\ne q$ case is no longer needed as bridge automata have no restrictions on the edges of the initial and final states), and the use of Lemma \ref{lem:loopback-star-to-omega-conversion} replaced by use of Lemma \ref{lem:decompose-into-bridge-and-loopback}.  Given this, the details have been left as an exercise for the reader.
\end{proof}

Finally we are ready to prove our Kleene Theorem for bidiverging power series.

\begin{proof}[Proof of Theorem \ref{thm:kleene-bidiverging}]
First, assume we have been given an automaton $\aut{A}$.  By Lemma \ref{lem:decompose-into-bridge-and-loopback} we know that there exists a decomposition of $\aut{A}$ into a weighted sum of bridge automata and loopback automata.  By Lemma \ref{lem:bridge-implies-rational} and Proposition \ref{prop:loopback-equals-rational-zeta} we know that both kinds of automata have rational behaviors, so because a weighted sum of rational bidiverging power series is also rational we have that $\behav{A}\in\Rat^\delta(S,A)$.

Now assume that we have been given a rational bidiverging power series $p$.  By Lemma \ref{lem:characteristic-bidiverging} we know that $p=\sum_{i\in I} a_i(x_i \star m_i\star y_i)b_i + \sum_{j\in J} c_j z_j^\zeta d_j$ for some sequence of rational diverging power series $\{x_i, m_i, y_i\}_{i\in I}\subset\ratnoempty$, some sequences of semiring elements $\{a_i,b_i\}_{i\in I}\subset S$ and $\{c_j,d_j\}_{j\in J}\subset S$, and some sequence of rational converging power series $\{z_j\}_{j\in J}\subset\ratnoempty$.  By Corollary \ref{cor:kleen-normalized} we know that for every $i\in I$ there exist normalized converging automata $\aut{X}_i$, $\aut{M}_i$ and $\aut{Y}_i$ such that $\behav{\aut{X}_i}=x_i$, $\behav{\aut{M}_i}=m_i$ and $\behav{\aut{Y}_i}=y_i$, and by Lemma \ref{lem:behavior-of-conjoin-is-conjoin-of-star} we know that $\behav{\aut{X}_i\star\aut{M}_i\star\aut{Y}_i}= \behav{\aut{X}_i}\star\behav{\aut{M}_i}\star\behav{\aut{Y}_i}=x_i\star m_i \star y_i$.  By Proposition \ref{prop:loopback-equals-rational-zeta} we know that for every $j$ there exists a bidiverging automaton, $\aut{Z}_j$, that recognizes the power series $z_j^\zeta$.  Let $\aut{A}:=\sum_{i\in I}a_i (\aut{X}_i\star\aut{M}_i\star\aut{Y}_i)b_i + \sum_{j\in J}c_j \aut{Z}_jd_j$, and because by Lemma \ref{lem:behavior-is-linear-bidiverging} the behavior operation is a homomorphism, we see that $\behav{A}=p$.

\end{proof}

\section{Application: Quantum Simulation}
\label{sec:quantum-simulation}

\subsection{Background for Finite Systems}

In the previous sections we have presented formalisms for diverging and bidiverging automata, but we have not shown how they can be applied to model relevant systems in quantum physics.  We shall do so in this section.  First, though, we need to introduce some basic concepts from (discrete\footnote{It is possible to apply similar ideas to systems with continuous degrees of freedom --- see Ref. \cite{PhysRevLett.104.190405} for an example --- but that is outside the scope of this discussion.}) quantum physics.  

We shall define a quantum system to be a finite (for now) set of configurations $A$.  At any time it will be in a state, usually denoted by $\psi$, which is a \emph{superposition} of these configurations, by which we mean that $\psi\in\powerseries{\cmps}{A}$.

Before proceeding, it is useful to define what it means to take the \emph{dual} of $\psi$.  The dual operation for quantum states is denoted $\,^\dagger\in \powerseries{\cmps}{A}\to\paren{\powerseries{\cmps}{A}\to \cmps}$ and is defined by $(\sum c_i a_i)^\dagger =\sum c_i^* a_i^{-1}$ where $c_i\in\cmps$, $a_i\in A$, and for all $a_j,a_k\in A$ we have $a_j^{-1}(a_k) = \delta_{jk}$.  Given the dual operation, we define the \emph{normalization} of $\psi$ as $\norm{\psi}^2 := \psi^\dagger(\psi)$.  The quantity $|\psi(a)|^2/\norm{\psi}^2$ gives the probability of observing the configuration $a$ if the system is measured (in the $A$ basis).\footnote{Sometimes when dealing with finite systems it is simply assumed that the state is normalized and so there is a burden to ensure that all manipulations of the state preserve this property, but we take the other common approach of simply not worrying about the normalization as it can always be accounted for at the end of the computation.}  After measurement, the state of the system is said to have been collapsed into configuration $a$ as at that point $\psi=a$.

Part of what makes quantum mechanics interesting is that there is more than one way to measure a quantum system.  For example, consider a single particle with a quantum spin which can be in the `up' configuration along the Z-axis, denoted by $\uparrow$, or in the `down' configuration along the Z-axis, denoted by $\downarrow$, so that $A := \{\uparrow,\downarrow\}$.  Possible states of this system include $\uparrow$, $\!\uparrow + \downarrow$, $\frac{1}{\sqrt{3}}\!\!\uparrow -\,i\!\downarrow$ and so on.  Measuring the spin of the system along the Z-axis will collapse the state of the system into either $\uparrow$ or $\downarrow$, but interestingly measuring the system along the X-axis will collapse the state of the system into either $\uparrow+\downarrow$ or $\uparrow-\downarrow$, and measuring the system along the Y-axis will collapse the state of the system into either $\uparrow+\,i\!\downarrow$ or $\uparrow-\,i\!\downarrow$.\footnote{For the interested reader we mention in passing that this is an example of the uncertainty principle in action:  By measuring along the X-axis we collapse the state of the system into the form $\uparrow \pm \downarrow$, which causes $\uparrow$ and $\downarrow$ to have equal amplitude and hence to have equal probabilities if we measure along the Z axis.  So although we now know the spin along the X-axis, we have maximally prevented ourselves from knowing what we will get if we measure the spin along the Z-axis.  We get an analogous effect if we measure along other axes, and hence we conclude that exact knowledge of one axis ensures maximal uncertainties of the other axes.}

Now, when we measure an observable quantity of the system we do not usually get an exact reading of the state of the system but rather there is some dial that we read that gives us a real number from which we can infer partial or total information about the state of the system.  For example, when we measure the spin of a particle we might do so by sending it through a special magnetic field that deflects it upward or downward based on its spin, and then measure by how much it is deflected, with $+1$ corresponding to `up' and $-1$ corresponding to `down'.  For this reason, an observable quantity consists of two pieces of information:  the indexed set of possible states to which the system might be collapsed, $\{\psi_i\}_{i\in I}$, and, for each $i\in I$, the value $\lambda_i\in\reals$ that will be observed if the system collapses into that state.\footnote{If multiple configurations have the same value then if that value is measured the system has collapsed into some (unknown) superposition of these configurations.}  Both of these pieces of information can be stored within a single operator $O := \sum_{i\in I} \lambda \psi_i \psi_i^\dagger$ that is an endomorphism (linear operator) over the configuration space $\powerseries{\cmps}{A}$ with the property that for every $i\in I$ we have that $O(\psi_i) = \lambda_i\cdot \psi_i$ --- that is, $O$ has an \emph{eigendecomposition} into \emph{eigenvalues}, $\{\lambda_i\}_{i\in I}$, and associated \emph{eigenvectors} or \emph{eigenstates}, $\{\psi_i\}_{i\in I}$.  Given that for all $x$ and $y$ we have that $x^\dagger(y)=y^\dagger(x)^*$ (which follows directly from the definition of $\dagger$ and the fact that complex numbers commute), it is not hard to see that $O$ is \emph{self-adjoint}, which means that for all $x$ and $y$ we have that $(x^\dagger\circ O)(y)^* = (y^\dagger\circ O)(x)$.  We shall say that $O$ lives in the space $\powerseries{\cmps}{A\to A}$ where, for all $a_i\to a_j\in (A\to A)$ and $a_k\in A$ we have that $(a_i\to a_j)(a_k) = a_j \delta_{ik}$.

With the observable $O$ (by which we shall mean the \emph{operator} representation of the observable described above) in hand, and given an arbitrary state $\psi$, there is also another useful piece of information we can calculate which is the \emph{expected value} of $O$, given by $(\psi^\dagger\circ O)(\psi)$;  this gives us the average value that we would expect to see over repeated experiments with the system reinitialized to $\psi$ each time.    Part of the reason that this quantity is so important is because we do not always know the eigenvalue decomposition of $O$ and so, for example, randomly generating many states and computing the expected value of $O$ can provide estimates of the maximum and minimum values of the observable.

Three observables appear so often that they are worth mentioning here;  they are the three \emph{Pauli spin matrices}, $X$, $Y$, and $Z$, which correspond to the observables for the spin along the respective X-, Y-, and Z-axes.  Recalling that the eigenstates of $X$ were (after normalizing) $\frac{1}{\sqrt{2}}(\uparrow+\downarrow)$ and $\frac{1}{\sqrt{2}}(\uparrow-\downarrow)$, we see that $X= \frac{1}{2}([\uparrow+\downarrow]\to[\uparrow+\downarrow]) - \frac{1}{2}([\uparrow-\downarrow]\to[\uparrow-\downarrow])=(\uparrow\to\downarrow) + (\downarrow\to\uparrow)$, where we obtained the shorter form by taking advantage of the fact that the $\to$ operator is bilinear and so we can expand the longer form and eliminate the terms that cancel.  Following a similar process for the other operators we obtain $Y=-i(\uparrow\to\downarrow) + i(\downarrow\to\uparrow)$ and $Z=(\uparrow\to\uparrow)-(\downarrow\to\downarrow)$.

Another observable that is very important is the \emph{hamiltonian}, as it both defines the energy observable of the systems and also completely specifies how a state evolves over time in the following way:  If $H=\sum_i E_i \psi_i \psi_i^\dagger$ is the hamiltonian of a system (where the values $E_i$ are the energies) then $U(\Delta t)=\sum_i e^{-i E_i \Delta t} \psi_i\psi_i^\dagger$ \footnote{In principle the first factor should be $e^{-i (E_i/\hbar)\Delta t}$ where $1/\hbar$ is effectively a unit conversion factor from energy to temporal frequency, but it proves convenient in many contexts (such as this one) to simply assume that we are working in a system of units such that $\hbar=1$.} is the endomorphism that takes an arbitrary starting state and maps it to the state of the system after $\Delta t$ time has passed.  (Note that this operator is independent of the starting time.)

\subsection{Application to Biinfinite Systems}

Up to now we have assumed that we are working with a finite system, but it is often incredibly useful to study systems that are \emph{infinite} in extent.  The reason for this is that it gets rid of the boundaries on the sides of the system by making them be infinitely far away.  This allows us to study the \emph{bulk} behavior of the system without having the boundary effects mixed in.  This is useful not only because it makes it easier to understand what is going on by isolating out one of the kinds of behavior, but also because real-life systems tend to be almost `infinitely large' given that they have on the order of Avogadro's number of particles ($\approx 6.02\times 10^{23}$) so that the vast majority of the material behaves as if it were in an infinitely large system.

Thus, we now say that the set $A$ contains the set of configurations not for the full system, but only for a single \emph{site} of the system.  At this point we are going to assume that we are studying a system in a single dimension.  Obviously this is being done right now because it connects with the formalism presented in this paper, but it is also the case that the study of one-dimensional systems in physics is quite common.  There are a couple of reasons for this.  First, systems with multiple dimensions are still very difficult, and so one-dimensional versions of a system give an approach that may glean some useful information, or at the very least provide a useful eventual contrast that shows how phenomena change when the number of dimensions increase.  Second, there are many real-life systems that can be treated as being one-dimensional for various reasons, such as narrow tubes where the interactions not along the axis are negligible.

So given that we have a one-dimensional biinfinite system, its configurations are given by $A^\zeta$, and naively its state space would be $\powerseries{\cmps}{A^\zeta}$, but the problem with this space is that computing the normalization and the expected value of an observable are in general not possible as the sums won't converge.  Thus, we must find a subspace within this space such that we can \emph{make} them converge.  One possibility is to work within a von Neumann tensor product space (also called an ``incomplete'' tensor product space) which is essentially the maximal subspace of $\powerseries{\cmps}{A^\zeta}$ that is a Hilbert space (see Ref. \cite{Neumann1939}).  Unfortunately this subspace is restrictive and does not allow us to use many basic but important operators such as $I\star Z \star I$, which is used to define a magnetic field or to measure the magnetization.

Fortunately this entire paper has described an alternative solution to this problem --- rational bidiverging power series.  That is, we let the state space live in $\Rat^\zeta(\cmps,A)\subset \powerseries{\cmps^{\ints\times\nats}}{A^\zeta}$.  Because of this, all states have equivalent representations as automata which give us efficient ways to compute representations of the normalization and expected values.

To define how to calculate these values, we first shall first define how transducers work.  First, let $\Aut(A)$ be the set of bidiverging automata over $A$ and $\cmps$.  Now let $\aut{O}\in\Aut(A\to B)$ and $\aut{A}\in\Aut(A)$ be bidiverging automata.  Then $\aut{O(A)}$ is given by
\begin{align*}
\ofaut{Q}{O(A)} &:= \ofaut{Q}{O}\times\ofaut{Q}{A} \\
\ofaut{I}{O(A)}_{(i,j)} &:= \ofaut{I}{O}_i \ofaut{I}{A}_j \\
\ofaut{F}{O(A)}_{(i,j)} &:= \ofaut{F}{O}_i \ofaut{F}{A}_j \\
\Mofautwithsup{O(A)}{b}_{(i,j),(k,l)} &:=\sum_{a\in A} \Mofautwithsup{O}{a\to b}_{ik} \Mofautwithsup{A}{a}_{lm}.
\end{align*}

Now that we have transducers, we define the dual operation as simply mapping every $a\in A$ to $0\in\{0\}$, i.e. so that if $\aut{A}\in\Aut(A)$ then $\aut{A}^\dagger\in\Aut(A\to\{0\})$.  In particular, if $\aut{B}\in\Aut(A)$ then $\aut{C}:=\aut{A}^\dagger(\aut{B})\in\Aut(\{0\})$.  Thus we see that the dual transducer of an automaton has the effect of essentially mapping all words to scalars, just as the finite definition did.  Note that the input language of  $\aut{C}$ is $\{0^\delta\}$, which both consists of only a single string and is completely invariant under shifts, so we can effectively ignore the word and position arguments and treat $\aut{C}$ as a map from natural numbers to complex numbers, i.e. $\aut{C}(w,i,n)\equiv \aut{C}(n)$ for all $w\in\{0^\delta\}$ and $i\in\ints$.

Because bidiverging automata directly correspond to bidiverging power series, all of the operations we have just defined can be lifted to act on bidiverging power series.  Specifically, for any power series $a,b\in\Rat^\zeta(\cmps,A)$ we define $a^\dagger(b):\nats\to\cmps$ by $a^\dagger(b) := n \mapsto \aut{A}^\dagger(\aut{B})(n)$ where $\behav{\aut{A}}=a$ and $\behav{\aut{B}}=b$, and if $o\in\Rat^\zeta(\cmps,A\to B)$ then we define $o(a) := \behav{\aut{O}(\aut{A})}$ where $\behav{\aut{O}}=o$.  These operations automatically give us a well-defined normalization, but for expected values it is useful to clarify that the ratio of the two maps should be taken pointwise --- that is, given an endomorphism $o\in\Rat^\zeta(\cmps,A\to A)$, the expected value of $\psi\in\Rat^\zeta(\cmps,A)$ is defined to be
$$n\mapsto \frac{(\psi^\dagger\circ o)(\psi)(n)}{\psi^\dagger(\psi)(n)}.$$

At this point it might not be obvious how much we have gained.  It is true that we have found a subset of bidiverging power series where normalizations and expected values are well-defined, but in the process we have paid three prices:  first, we have required that our operators live in $\Rat^\zeta(\cmps,A\to A)$, second, our observable values are now sequences rather than real values, and third, our states are forced to live in a restricted space that will in general not contain the actual physical states.   Fortunately, the first price turns out to be fairly low one because most operators that physicists care about turn out to be exactly representable as bidiverging power series over endomorphisms.  For example, the average magnetization of a system is given by the sum over terms where every term has the $Z$ operator at one site and the identity ($I$) at the rest so that the bidiverging power series takes the form $I\star Z \star I$.  (This might look at first like it only generates a single term, but in fact the resulting power series accepts any shift of the string $I^{\tilde\omega} Z I^\omega$, so it acts like a sum over all operators with $Z$ at a single site and $I$ at the rest.)  Furthermore, those interactions that can't be represented exactly can usually be approximated fairly well, as we shall see later.

The second seeming price --- that of having a sequence in the place of a scalar value --- is actually a boon instead of a bane.  Consider, for example, the energy of an infinite system.  Obviously the total energy of the system is going to be infinite in general, but knowing this is not particularly helpful.  What \emph{is} helpful instead is knowing how the energy grows with the size of the system, and this is exactly the information that is encoded in the sequence!  That is, because the value of the expected value at position $n$ is exactly equal to sum over all paths that have length $n$, it naturally has the interpretation as the expected value of any collection of $n$ contiguous sites of the system in the absence of boundary effects, or alternatively as the component of the expected value in a system of size $n$ that is due to bulk behavior rather than boundary effects.

For an example of how the expected value works in this way, consider a quantum system with $A = \{\uparrow,\downarrow\}$ that is in the state $\psi = \,\,\uparrow^\zeta$.  It is left as an exercise for the reader to show that $|\psi|=1$.  Now let $O:=I \star Z\star I$ be the magnetization observable discussed earlier.  The automaton for $O$ is $\aut{O}$ which is given by $\ofaut{Q}{O}:=\{1,2\}$, $\ofaut{I}{O}_i :=\delta_{i1}$, $\ofaut{F}{O}_i := \delta_{i2}$, and
$\ofaut{M}{O}_{ij} = I\cdot\delta_{ij} + Z\cdot\delta_{i1}\delta_{j2}$.
Because the state is normalized, the expected value is given by $(\psi^\dagger\circ O)(\psi)$, and it is left as an exercise for the reader to show that this is equal to the behavior of an automaton $\aut{E}$ given by $\ofaut{Q}{E}=
\ofaut{Q}{O}$, $\ofaut{I}{E}=\ofaut{I}{O}$, $\ofaut{F}{E}=\ofaut{F}{O}$, and $\ofaut{M}{E}_{ij}=1-\delta_{i2}\delta_{j1}$, and that based on this, the expected value of $O$ with respect to $\psi$ is $n \mapsto n$.  This makes perfect sense because every time a spin pointing up has been added to the system we would expect the magnetization to grow by a single unit.  The main physical quantities of interest --- including magnetization and energy --- tend to be \emph{extensive} quantities, which means that they grow linearly with the size of the system and hence have an interpretation as an energy or magnetization per site.  In general, though, the result of an expected value will not be linear, but it is restricted to have the following general form,
$$n\mapsto\frac{\sum_{i\in I} \lambda_i^n \cdot\text{poly}_i(n) + \sum_{j\in J} c_j \delta_{nj}}{\sum_{k\in K} \lambda_{k}^n \cdot\text{poly}_{k}(n) + \sum_{l\in L} c_{l}\delta_{nl}}$$
where $\text{poly}_i(n)$ denotes some polynomial in $n$.  This follows from the fact that every matrix --- and therefore the transition matrix for the expected value automaton in particular --- is similar to a matrix in Jordan Normal Form, and it can be shown that raising a matrix in Jordan Normal Form to an integer power $n$ results in a new matrix where every component has the form $\sum_{i\in I} \lambda_i^n \cdot\text{poly}_i(n) + \sum_{j\in J} c_j \delta_{nj}$ for some $I$, $\{(\lambda_i,\text{poly}_i)\}_{i\in I}$, $J$, and $\{c_j\}_{j\in J}$.  For the details, see pages 385 and 386 of Ref. \cite{Horn1991}, and specifically let $p(\lambda)=\lambda^n$ to obtain the above result.

Finally, the third price --- the fact that we are living in a restricted space --- is not a deal breaker as long as the states in this space provide sufficiently good approximations to the physical states of interest, and fortunately it turns out that they do in practice;  see the end of this section for an example of a simulation that illustrates this.

\subsection{Simulation Methodology}

We have now established that bidiverging power series provide a means of approximating quantum states in a manner that has well-defined and useful values for expected values of observables, but this fact would be uninteresting if there were not ways to find sufficiently good approximations of quantum states of interest.  Fortunately, there are such ways, and they take advantage of the fact that when we study quantum systems we are often most interested in the \emph{ground} state or states --- that is, the lowest energy state or states --- and possibly the \emph{excited} states --- that is, those states just above the ground state energy.  The reason for the focus on these states is that they tend to have the most interesting behavior because as the energy grows higher the system acts increasingly like a classical system with classical properties rather than a quantum system with quantum properties.

So given that we are interested in the lowest energy states, a natural approach is to start by finding the ground states and then to work our way up from there.  To find a ground state, we take advantage of the fact that the expected value of the energy will never be less than the ground state energy, and furthermore the lower the expected value is the closer we are to the ground state energy and thus hopefully (but not necessarily) a ground state.  Thus, a heuristic that turns out to be effective in practice (although it is of course not guaranteed to work\footnote{One reason why this might not work is because the structure of the energy eigenstates is such that there are states that have energy almost equal to the ground state energy but which are not within easy reach of a ground state due to the presence of so-called \emph{forbidden transitions}.  This does not tend to cause problems in practice, but interestingly it does cause problems for systems that are designed such that, say, the ground state encodes the solution of an NP-complete problem, which is why engineering such systems then cooling them down as close as possible to absolute zero does not actually work as a method for solving NP-complete problems.}) is to start by making some ansatz for the ground state --- say, that it takes the form of a bidiverging power series --- and then to adjust the free parameters to minimize the expected value of the energy.  Once the energy has been minimized we take the resulting state to be a ground state (or at least, a sufficiently good approximation of it) and from there one can in principle find the next lowest state (possibly another ground state) by performing the same procedure but with a constraint that the new state must be orthogonal to the old state.  This method is known in the field of physics as the \emph{variational} method.

For bidiverging power series, there are a couple of basic variational approaches one can use.  First, there is the \emph{imaginary time evolution} approach (see Refs. \cite{Vidal2007} and \cite{Orus2008}).  To understand how this works, it is useful to first recall that the operator that evolves a given state forward in time by $\Delta t$ units is given by $U(\Delta t)=\sum_i e^{-i \Delta t E_i \psi_i\psi_i^\dagger}$, where the $E_i$ are the energy eigenvalues and the $\psi_i$ are the associated energy eigenstates.  Now observe what happens if we feed an \emph{imaginary} time into $U$:  $U'(\Delta t) := U(i\Delta t) = \sum_i e^{-\Delta t E_i \psi_i\psi_i^\dagger}$.  The new function $U'$ has the effect of causing each energy eigenstate component of the state to decay at a rate exponentially proportional to its energy, so by evolving a state arbitrarily far forward in imaginary time the proportion of the state that is in the ground states can be made arbitrarily high, giving us a means of obtaining a very good approximation of a ground state from a random initial state.\footnote{Actually, if by some horrible accident we start with a state that has \emph{zero} overlap with any ground state then this is not true, but this is a low probability event and furthermore it can be mitigated by trying several initial random states and keeping the lowest energy one.}  The primary difficulty with this method is that in general we do not actually have a means of applying $U(\Delta t)$ exactly --- in fact, if we did then we most likely know or can easily obtain the eigendecomposition and hence have no need for a variational approach in the first place.  Fortunately, it turns out that there is an approximation known as the Suzuki-Trotter expansion\footnote{Trotter figured out a first order approximation in Ref. \cite{Trotter1959}, and Suzuki generalized this idea to generate approximations at all orders in Ref. \cite{Suzuki1976}.} that allows one to systematically approximate $U$ in terms of polynomials of $H$.  The approximation can be taken to arbitrary order.  For example,  to first order in the size of the time step we have that $U'(\delta t) = I - \delta t\cdot H + O(\delta t^2)$.  By combining many small time steps we have that $U'(\Delta t)=U'(\delta t)^{\Delta t/\delta t}\approx(I-\delta t\cdot H)^{\Delta t/\delta t}$,  the total error to first order of which is proportional to $\Delta t/\delta t \cdot \delta t^2 = \Delta t\cdot \delta t,$ which can be made arbitrarily small for any $\Delta t$.  Thus, in practice the imaginary time evolution approach involves picking an order for the Suzuki-Trotter decomposition (higher order means more calculations and greater complexity per step but fewer steps), picking a $\delta t$, and then repeatedly applying this approximation of $U'(\delta t)$ until the number of states in the automaton grows unmanageably large, at which point a truncation operation is applied that attempts to find the best possible approximation to the original automaton that uses fewer states.  The process of alternating between applying a small time step and truncating the automaton to prune it to a manageable size is then continued until the state converges to a fixed point.

There is another approach that uses \emph{sweeping} (see Refs. \cite{McCulloch2008} and \cite{Crosswhite2008}).  The basic idea behind this approach is that rather than applying a global transformation to the whole system until we converge to an answer we instead zoom in on a specific site and optimize it independently from the rest of the system.  We do this by constructing an \emph{environment} for the focused site that effectively takes the expected value of the hamiltonian for the infinite system and sums over all sites but the focused site.  The end result is a matrix $M$ such that $(\psi_f^\dagger\circ M)(\psi_f)=(\psi^\dagger\circ H)(\psi)$ is equal to the expected value of the hamiltonian for the state of the entire system $\psi\in\Rat^\zeta(\cmps,A)$ as a function of the focused site $\psi_f\in \powerseries{\cmps^{Q^{\psi}\times Q^{\psi}}}{A}$.  Because $M$ is a small, bite-sized matrix, we can (relatively) easily solve for its lowest energy eigenstate\footnote{Technically there should also be a similar matrix $N$ obtained by summing over all other sites for the normalization and we should be solving the generalized eigenvalue problem $Mv = \lambda Nv$, but in practice it turns out that we can keep the system normalized in such a way that $N$ is the identity.} and substitute it for $\psi_f$, thus reducing the energy of the entire system.  After doing this, we then \emph{absorb} $\psi_f$ to the left or the right by making a copy of it and expanding the respective left or right sum in the environment to include it.  We then repeat this process until we have converged to a fixed point, and then we increase the number of states in the automaton and then repeat the whole process until a fixed point has been reached (or we run out of memory).

Once a ground state has been found, the ability to compute expectation values means that one can perform many kinds of analyses on it.  For example, we can compute the magnetization, and we can also compute a \emph{correlator}, which is an operator of the form $I \star ZI^k Z\star I$ that provides information about how likely a particle at some arbitrary site $i$ is to agree with the particle at site $i+k+1$ if both particles have their spin measured along the $Z$ axis.

\begin{figure}
\includegraphics[width=\textwidth]{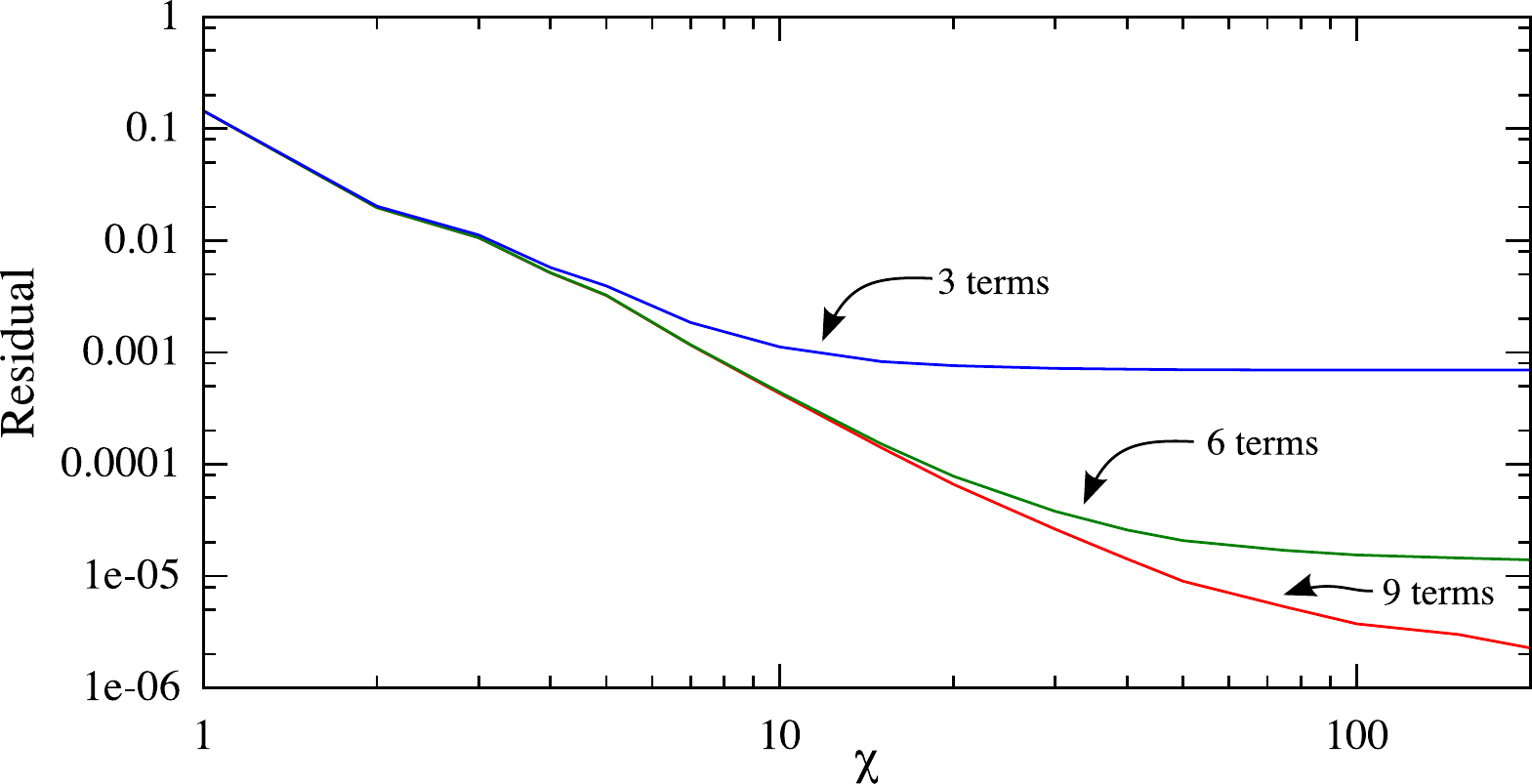}
\caption{\label{fig:example-residuals}This plot shows the energy residual (the difference between the exact and approximated energies) as a function of $\chi$ (the number of states used in the automaton, which increased over time as the solver ran) for the simulations run using the 3-term, 6-term, and 9-term expansions of the hamiltonian.  [Note: This figure was taken directly from Ref. \cite{Crosswhite2008} for the sake of illustration; it was originally created by the author of this paper.]}
\end{figure}

\begin{figure}
\includegraphics[width=\textwidth]{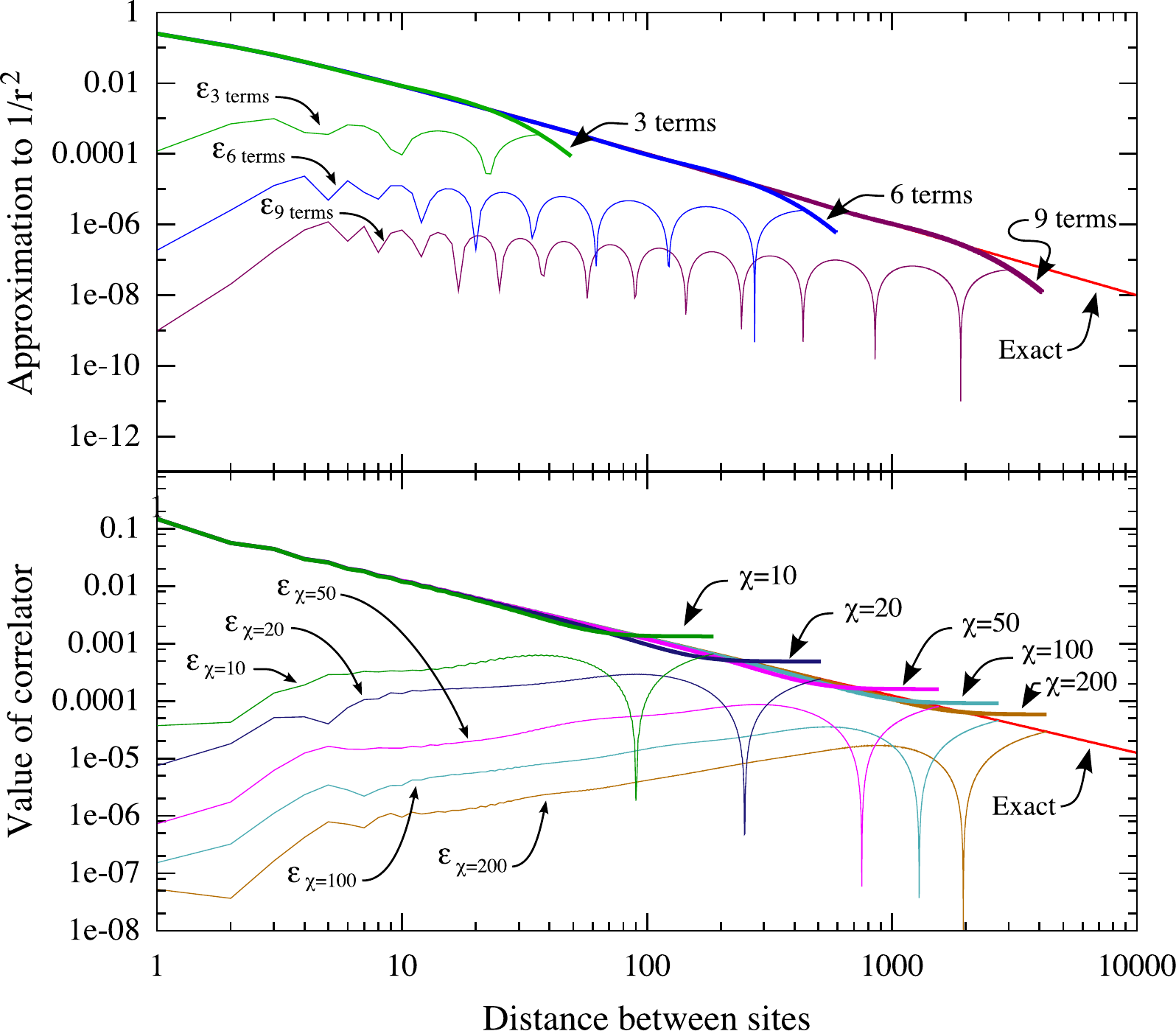}
\caption{\label{fig:example-combined}The top of this figure plots the expansions of $1/r^2$ using 3, 6, and 9 terms, against the exact value of $1/r^2$.  The bottom of this figure plots the correlator (which can be thought of how likely it is that two spins will agree as a function of distance) for states with various values of $\chi$ (the number of states in the automata).  In both cases, the curves below the main curves that are tagged with epsilons are the residuals (the differences between the approximate values and the exact values).  [Note: This figure was taken directly from Ref. \cite{Crosswhite2008} for the sake of illustration; it was originally created by the author of this paper.]}
\end{figure}

\subsection{Proof of Concept}

To illustrate an example of simulating a quantum system, we consider the Haldane-Shastry model (see Refs. \cite{Haldane1988} and \cite{Shastry1988}), which was simulated using the sweep method we just discussed in Ref. \cite{Crosswhite2008}.  This model is interesting for two reasons:  first, it is exactly solvable, so that we can see how well the obtained ground state emulates the properties of the true ground state, and second, it involves a hamiltonian with a sufficiently non-trivial structure that the model provides a non-trivial test for the approaches we have been discussing.  The hamiltonian of the Haldane-Shastry model takes the form, $H = \sum_{i=-\infty}^{+\infty}\sum_{r=1}^\infty \vec{\sigma}_i\cdot\vec{\sigma}_{i+r}/r^2$ where $\vec{\sigma}_i=(X,Y,Z)$ acting on site $i$ (where $X$, $Y$, and $Z$ were defined earlier).  This model physically represents a biinfinite chain of particles with spins that interact antiferromagnetically (that is, so that they don't want to line up) with each other along all directions and with a potential that decreases with the square of the distance.  Now, this hamiltonian turns out to be one of the rare cases we mentioned which cannot be expressed exactly as a bidiverging power series due to the $1/r^2$ coefficient.  Fortunately it can be expressed arbitrarily well by using a sum of decaying exponentials, i.e. $\sum_i \alpha_i \beta_i^r$ for some $\alpha_i$ and $\beta_i$.  In Ref. \cite{Crosswhite2008} we computed approximation using 3 terms, 6 terms, and 9 terms, and Figure \ref{fig:example-combined} (top) shows that the approximation works reasonably well in practice, as for 9 terms it produces an approximation that has an error less than about $10^{-6}$ for distances up to 3000 sites\footnote{To get a sense of why this number is usefully large, it is helpful to know that most models that are studied only consider interactions between nearest neighbors or possibly next-nearest neighbors, and $3000\gg 2$.}.  Given $\{\alpha_i,\beta_i\}_{1\le i\le N}$ for an $N$ term expansion, the final (approximate) hamiltonian took the form
$$\sum_{i=1}^N \alpha_i\paren{[I\star X(\beta_i I)^* X\star I] + [I\star Y(\beta_i I)^* Y\star I] + [I\star Z(\beta_i I)^* Z\star I]}$$

We applied the sweeping approach discussed earlier to each of these approximated hamiltonians;  for each value of $\chi$, which is what we denoted the number of states in the automaton, we computed the energy.  The expected value of the energy turned out to be a linear function\footnote{More precisely, we found that \emph{in the large $n$ limit} the expected value of the energy turns out to be a linear function, which was sufficient for our purpose of comparing it to the exact energy per site of the Haldane-Shastry model in the infinite size limit.  We computed only the large $n$ limit of the energy because computing the full function would have required computing the full Jordan Normal Form of the expected value's automaton's transition matrix, which would have been expensive.} and hence could be interpreted as an energy per site, which matches the exact solution of the model.  The error in the energy per site of the approximate solution (as obtained by comparing it to the exact solution) for each of the three approximations of $H$ and for each value of $\chi$ is plotted in Figure \ref{fig:example-residuals};  in particular we see that the solution obtained using the 9-term approximation of $H$ had an energy residual of only about $3\times 10^{-6}$ for $\chi=200$.  We also computed the correlator for the solution obtained using the the 9-term approximation and plotted it against the exact value of the correlator in Figure \ref{fig:example-combined} (bottom);  in particular we see that for $\chi=200$ the correlators match to within about $5\times 10^{-5}$ out to 3000 sites.  This example has demonstrated that the techniques that have been discussed throughout this section do work in practice, allowing us to obtain and analyze very good approximations to the ground states of biinfinite systems.

\section{Conclusions}
\label{sec:conclusions}

In this paper we have introduced a new kind of automaton called a \emph{diverging automaton} which explicitly captures the divergences caused by uniting infinite words with weighted automata by modeling the divergence as a sequence of weights.  We have presented a corresponding \emph{diverging power series} as well as natural rational operations, and proven a Kleene Theorem that shows that the set of rational diverging power series is equal to the set of behaviors of diverging automata.  We have furthermore presented extensions of these ideas to biinfinite words, resulting in \emph{biinfinite automata} and \emph{biinfinite power series} with, of course, another Kleene Theorem connecting the first to the rational subset of the second.  Finally, we have demonstrated the usefulness of these constructions by showing how rational bidiverging power series are very important in quantum simulation due to their ability to provide a powerful means of approximating the states of biinfinite quantum systems.

There are at least two obvious directions for future research.  First, it would be good to find a theory that generalizes and unites the theory we have just presented here with the theory of Conway $\,^*$-semiring---$\,^\omega$-semimodule pairs, just as the latter provided a generalization that united weighted languages with infinite languages for a subset of semirings.  Second, because people tend to be interested in systems with more than a single dimension, it would be useful to extend the formalism presented in this paper to power series over pictures (see Ref. \cite{Maurer2007}) which, like bidiverging power series, also have useful applications in quantum simulation (see Ref. \cite{Jordan2008}).

Finally, it is worth noting that we have demonstrated something very important here, which is that there is a significant link between automata theory and a family of techniques in quantum simulation.  It is a hope of the authors that this link will benefit both fields of research by leading to cross-fertilization of ideas between them.

\bibliographystyle{plain}
\bibliography{paper}

\end{document}